\newcommand{\norm}[1]{\left\lVert#1\right\rVert}
\newtheorem{proposition}{Proposition}
\newtheorem{definition}{Definition}
\newtheorem{lemma}{Lemma}
\newtheorem{assumption}{Assumption}
\newtheorem{remark}{Remark}
\DeclareMathOperator*{\argmax}{arg\,max}
\DeclareMathOperator*{\argmin}{arg\,min}
\title{\LARGE \bf
Minimax Multi-Agent Persistent Monitoring of a Network System
}
\author{Samuel C. Pinto$^1$, Shirantha Welikala$^{2}$, Sean B. Andersson$^{1,2}$, Julien M. Hendrickx$^3$, and Christos G. Cassandras$^{2,4}$
\\
$^1$Dept. of Mechanical Engineering, $^1$Division of Systems Engineering,\\ $^4$Dept. of Electrical and Computer Engineering \\
Boston University, Boston, MA 02215, USA \\
$^3$ICTEAM Institute, UCLouvain, Louvain-la-Neuve 1348, Belgium \\
\{samcerq,shiran27,sanderss,cgc\}@bu.edu, julien.hendrickx@uclouvain.be

\thanks{This work was supported in part by NSF under grants ECCS-1931600, DMS-1664644, CNS-1645681, and CMMI-1562031, by ARPA-E's NEXTCAR program under grant DE-AR0000796, by AFOSR under grant FA9550-19-1-0158,  and by the MathWorks. The work of J. Hendrickx was supported by the “RevealFlight” Concerted Research Action (ARC) of the Federation Wallonie-Bruxelles, by the Incentive Grant for Scientific Research (MIS) “Learning from Pairwise Data” of the F.R.S.-FNRS.
}
}
\begin{document}

\maketitle
\thispagestyle{empty}
\pagestyle{empty}

\begin{abstract}
We investigate the problem of optimally observing a finite set of {targets} using a fleet of mobile {agents} over an infinite time horizon. First, the single-agent case is considered where the agent is tasked to move in a {network}-constrained structure to gather information so as to minimize the worst-case uncertainty about the internal states of the targets. Hence the agent has to decide its sequence of target-visits and the corresponding sequence of dwell-times to be spent collecting information on each visited target. For a given visiting sequence, we prove that in an optimal dwelling time allocation, the peak uncertainty is the same among all the targets. This allows us to formulate the negotiation of dwelling times as a resource allocation problem and solved using a novel efficient algorithm. Next, determining the optimal visiting sequence is formulated as a vehicle routing problem and solved using a greedy algorithm. Then, we extend these ideas to the multi-agent case by {clustering} the network of targets using {spectral clustering} and assigning the agents to these target clusters. In this clustering process, a novel {metric} that can efficiently evaluate a given visiting sequence is extensively exploited. Based on the same metric, a distributed {cluster refinement} algorithm is also proposed. Finally, numerical results are included to illustrate the contributions.

\end{abstract}

\section{Introduction}
\label{sec:introduction}
Continuously surveying a set of points ({targets}) in a {mission space} using mobile {agents} is a problem that has applications in many domains such as in ecological monitoring \cite{lin2018kalman}, infrastructure safety verification \cite{ostertag2019robust}, ocean temperature surveillance \cite{lan2016rapidly}, deep-sea exploration \cite{Alam:2018ie} and multiple particle tracking \cite{pinto2020multipleparticletracking}. In all these scenarios, the targets of interest have uncertain  {states} that evolve over time and the goal of the agent team is to move around the mission space collecting data (through a stochastic observation model) so as to minimize a metric of the overall {estimation uncertainty} associated with the target states over very long periods of time. The agents need to move around the environment in order to visit different targets periodically to keep the target state uncertainties low as time goes to infinity. This problem is commonly known as Persistent Monitoring (PM). 


Many authors have studied PM from different points of view. Earlier works \cite{grocholsky2003information,julian2012distributed} have define the problem focusing on the interplay of estimation and trajectory planning. These works assumed very general observation models and agent dynamics, but their approach to solving the problem can only handle finite time horizons. Other works \cite{lan2014variational,hussein2008kalman} have framed the trajectory planning part of the problem as an optimal control problem. However their solution requires solving computationally intractable two-point boundary value problems. In \cite{lan2016rapidly}, a variation of the well-known rapidly-exploring random tree algorithm \cite{Lavalle1998} was used to optimally design infinite-horizon agent trajectories. However, our earlier work \cite{pinto2020sdppm} has shown how the computational complexity of this approach hinders its applicability. More recent work \cite{chen2020multi} has focused on jointly designing trajectories and learning the target uncertainty evolution using reinforcement learning. However, this solution needs to be trained specifically for the mission space of interest, and thus, is computationally expensive and lacks generality. 


We have also previously studied PM under different settings. In \cite{cassandras2013optimal}, each target state uncertainty is assumed to be a non-negative metric that decreases (or increases) linearly with time when the target is observed (or not observed). It then shows that the optimal trajectory can be parametrized and uses an efficient gradient-based parametric control technique \cite{cassandras2010perturbation} to design the agent trajectories. This idea was extended in \cite{pinto2020periodicfull} considering more natural target and agent models where each target state was assumed to follow linear stochastic dynamics and each agent is assumed to have the capability to observe a target state according to a linear stochastic observation model. Then, to approach the infinite horizon problem, agents are constrained to follow periodic trajectories. In this case, the corresponding gradient-based parametric control solution relied on the solution of $N\times M\times P$ matrix differential equations obtained at each gradient step, where $N$ is the number of targets, $M$ the number of agents and $P$ the number of parameters used to describe the trajectory of an agent. Therefore, solving these matrix differential equations was a major computational burden for settings with many agents and targets. By contrast, the approach in the current paper does not require the solution of any matrix differential equation. 


In this work, our main contribution is to overcome such computational limitations by using simple, lightweight algorithms that can scale to networks with large numbers of targets and agents.
We then make the fundamental assumption that each target is only monitored by one agent (also known as the ``no target sharing'' assumption in \cite{Welikala2020P7,Welikala2019P3}) - motivated by the empirical results reported in \cite{pinto2020periodicfull,Yu2016} where optimal agent trajectories do not share targets. As a result of this assumption, we can decompose a multi-agent problem into a set of single-agent ones. Further, unlike previous work, we assume that the PM goal is to minimize the worst-case (instead of the average) uncertainty among all the target states. This choice of the PM goal not only leads to a considerable reduction in the computational burden, but is also an appropriate choice in many PM applications. For example, when monitoring safety-critical systems that cannot operate over a given threshold (for instance, a maximum temperature), one wants to optimize the ``worst-case'' performance (as opposed to optimizing an ``average'' chance of violating it). Some examples of applications where a critical threshold on the state uncertainty should not be exceeded include monitoring wildfire or faults in civil infrastructure systems using unmanned aerial vehicles \cite{lin2018kalman, shakhatreh2019unmanned}.


Here
we initially study the single-agent problem. This simplification allows us to prove that, for a fixed sequence of target visits (\emph{visiting sequence}), the worst-case uncertainty is {\it the same for all targets} when the agent uses the corresponding optimal sequence of dwell-times (\emph{dwelling sequence}). This property alone is sufficient to determine the optimal dwelling sequence - if each target in the considered visiting sequence is visited only once during a single \emph{cycle}. In particular, this problem of determining the optimal dwelling sequence can be seen as a resource allocation problem where each target competes for the agent's dwell-time at that target. We next prove that a simple feedback law can be used to determine this optimal dwelling sequence efficiently. This same notion is then extended to the case where each target in the considered visiting sequence is allowed to be visited multiple times during a cycle. Afterward, we explore the problem of determining the optimal visiting sequence. We show that if each target is {visited at most once} during a cycle, a high-performing sub-optimal visiting sequence can be found by solving a Traveling Salesman Problem (TSP) and then executing a sequence of greedy Cycle Modification Operations (CMOs) on the obtained TSP solution. 


Our final contribution is to extend these notions to address the multi-agent scenario. The key idea is to \emph{cluster} the network of targets and assign the agents to these target clusters. To obtain these target clusters, the standard spectral clustering algorithm \cite{Luxburg2007} is used together with a novel \emph{similarity metric} that reflects the cost of the persistent monitoring task when two targets belong to the same cluster. Moreover, we propose a distributed target cluster refinement algorithm that systematically trades targets between clusters as long as it improves the overall cost of the problem. 


This paper is organized as follows. Section \ref{sec:formulation} gives the problem formulation and \ref{Sec:PropertiesOfAnOptimalPolicy} discusses general properties of an optimal policy. Section \ref{sec:optimal_dwelling_single_visit} gives an algorithm to compute dwelling times, assuming some constraints in the target trajectory. Then, Section \ref{sec:multiple_visits} extends the process of computing dwelling times by relaxing these trajectory constraints. Section \ref{Sec:GreedyExplorations} discusses the process of finding a high-performing sub-optimal visiting sequence. The proposed single-agent solution so far is extended in Section \ref{sec:multi_agent_extension} to address multi-agent scenarios. Simulation results are presented at the end of each of the Sections \ref{sec:multiple_visits}, \ref{Sec:GreedyExplorations} and \ref{sec:multi_agent_extension}. Finally, Section \ref{sec:conclusion} concludes the paper. 



\section{Problem Formulation}
\label{sec:formulation}
We consider an undirected graph $\mathcal{G}=(\mathcal{V},\mathcal{E})$ where the set $\mathcal{V}=\{1,2,.\ldots,M\}$ represents $M$ nodes (targets) and the set $\mathcal{E}= \{(i,j):\,i,j\in \mathcal{V}\}$ represents all the edges (available for agents to travel between targets). Each edge $(i,j)\in\mathcal{E}$ has an associated value $d_{i,j}$ that represents the travel-time an agent has to spend in order to travel from target $i$ to target $j$. As stated earlier, Sections \ref{sec:formulation}-\ref{Sec:GreedyExplorations} of this paper assume the single-agent case (i.e., only one agent is deployed). 

Each target $i\in\mathcal{V}$ has an internal state $\phi_i\in \mathbb{R}^{L_i}$ that follows the dynamics
\begin{equation}
    \label{eq:dynamics_phi}
    {{\dot{\phi}}_i(t) = A_i{\phi}_i(t) + B_iu_i(t)+ {w}_i(t)},
\end{equation}
where $\{w_i\}_{i\in\mathcal{V}}$ are mutually independent, zero mean, white, Gaussian distributed processes with $E[{w}_i(t){w}_i'(t)]=Q_i$ and $Q_i$ is a positive definite matrix. We assume that each target $i$ selects (or is aware of) the current value of the control $u_i(t)$ at all times. These control signals can be designed either in a open or closed loop fashion and our analysis is independent such target controls, since they naturally do not affect the covariance dynamics.

When the agent visits a target $i\in\mathcal{V}$, it observes the target's internal state $\phi_i$ according to a linear observation model 
\begin{equation}
    \label{eq:observation_model_ij}
    {z}_{i}(t)=H_i{\phi}_i(t)+{v}_{i}(t), \qquad {v}_{i}(t)\sim \mathcal{N}(0,R_i),
\end{equation}
where $z\in\mathbb{R}^{L_i}$ and the noise $v_i$ is also assumed to be white and statistically independent of $v_j$ if $i\neq j$ and $w_k$, $\forall k$.

We can fully describe the behavior (trajectory) of the agent by its sequence of target visits (visiting sequence): $\Xi=[y_1,y_2,...,y_N]$ where each $y_k\in\mathcal{V}$ and the corresponding sequence of dwell-times (dwelling sequence)  spent at each visited target: $\mathcal{T} = [t_1,...,t_N]$ where each $t_k\geq0$. 

Considering models \eqref{eq:dynamics_phi} and \eqref{eq:observation_model_ij}, the maximum likelihood estimator $\hat{\phi}_i(t)$ of the internal target state $\phi_i(t)$ is a Kalman-Bucy Filter (evaluated at target $i$) given by
$$\dot{\hat{{\phi}}}_i(t) = A_i\hat{{\phi}}_i(t)+{B_iu_i(t)}+\eta_i(t)\Omega_i(t){H}_i'{R}_i^{-1}({{z}}_i(t)-{H}_i\hat{{\phi}}_i(t)),$$
where $\Omega_i(t)$ is the covariance matrix of the estimation error ($\Omega_i(t) = E(e_i(t)e_i'(t))$ with $e_i(t) = \phi_i(t)-\hat{\phi}_i(t)$) given by 
\begin{equation}
    \dot{\Omega}_i(t) = A_i\Omega_i(t)+\Omega_i(t)A_i'+Q_i-\eta_i(t)\Omega_i(t){G}_i\Omega_i(t), \label{eq:dynamics_omega} 
\end{equation}
with $G_i=H_i'R_i^{-1}H_i$ and $\eta_i(t)=1$ if target $i$ is observed at time $t$ and $\eta_i(t)=0$ otherwise.

The goal is to design an agent trajectory (i.e., to design $\Xi$ and $\mathcal{T}$) that minimizes the persistent monitoring objective 
\begin{equation}
    \label{eq:cost}
    J(\Xi,\mathcal{T}) = \max_{i\in \mathcal{V}} \limsup_{t\rightarrow \infty} g_i(\norm{\Omega_i(t)}), 
\end{equation}
which represents the maximum over time and over all the targets of a weighted norm of the long term covariance matrix. In \eqref{eq:cost}, the target-specific possibly non-linear weighting function $g_i(\cdot)$ is strictly increasing with $g_i(0)=0$ and $\lim_{x \rightarrow \infty}g_i(x)=\infty$ and $\norm{\cdot}$ is a norm on the space of positive semi-definite matrices. A usual choice is to have $g_i(x)=\alpha_i x$, where $\alpha_i$ is a constant target-specific weight, and $\norm{X}=\text{tr}(X)$. If an optimal agent trajectory $(\Xi^*,\mathcal{T}^*)$ exists, then we denote its associated cost 
$J^*(\Xi^*,\mathcal{T}^*) = \min_{\Xi,\mathcal{T}} J(\Xi,\mathcal{T})$.

\subsection{Periodic Policies}
\label{subsec:periodic_policies}
In this work, we restrict ourselves to periodic agent trajectories. Note that the number of parameters necessary to describe such a periodic trajectory does not increase with the time horizon and earlier work \cite{zhao2014optimal} in a similar setting showed that periodic trajectories can perform arbitrarily well compared to a true optimal trajectory (when it exists). Therefore, periodic agent trajectories are suitable for infinite horizon analysis in the context of offline trajectory planning. Moreover, if a target is visited in the period, it will be visited infinitely often, with inter-visit time upper bounded by the period. This notion fits very well with the paradigm of persistent monitoring since one of the high-level goals is to ensure such a persistence of visits.

In order to analyze the infinite horizon behavior of the covariance matrix, we first make the following natural assumption that ensures that the covariance of the internal state estimate $\hat{\phi}_i$ can be finite over long time horizons. 

\begin{assumption}
    The pair $(A_i,H_i)$ is observable, $\forall \in \mathcal{V}$.
\end{assumption}

We now restate the following Proposition given in \cite{pinto2019periodicfull}:
\begin{proposition}
    \label{prop:unique_attractive_sol_riccati_eq}
    If $\eta_i(t)$ is $T$-periodic and $\eta_i(t) > 0$ for some $[a,b]\in[0,T]$, then, under Assumption 1, there exists a unique non-negative stabilizing $T$-periodic solution $\bar{\Omega}_i(t)$ to \eqref{eq:dynamics_omega}.
\end{proposition}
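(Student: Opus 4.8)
The plan is to read \eqref{eq:dynamics_omega}, with the schedule $\eta_i(\cdot)$ now regarded as a fixed $T$-periodic coefficient, as a \emph{periodic Riccati differential equation} of Kalman--Bucy (filtering) type, and then to invoke the classical existence--uniqueness theory for such equations. That theory states that if the $T$-periodic pair $(A_i(\cdot),\,Q_i(\cdot)^{1/2})$ is stabilizable and the $T$-periodic pair $(A_i(\cdot),\,\sqrt{\eta_i(\cdot)}\,H_i)$ is detectable, then \eqref{eq:dynamics_omega} admits exactly one symmetric $T$-periodic positive semi-definite solution $\bar\Omega_i(\cdot)$; that this solution is stabilizing, in the sense that the monodromy matrix of the periodic linear system with system matrix $A_i-\eta_i(t)\bar\Omega_i(t)G_i$ has all its characteristic multipliers strictly inside the unit circle; and that it is, moreover, globally attractive from every positive semi-definite initial condition. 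Thus the proof reduces to checking these two structural hypotheses for our data.

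Stabilizability is immediate: because $Q_i$ is a constant positive definite matrix, the reachability Gramian $\int_0^T e^{A_i\tau}Q_i e^{A_i'\tau}\,d\tau$ is positive definite, so the (constant-coefficient) $T$-periodic pair $(A_i,Q_i^{1/2})$ is completely reachable over one period, hence stabilizable. For detectability, the point that needs care is that $\eta_i(\cdot)$ may vanish on part of $[0,T]$, so the ``output'' is active only on $[a,b]$. Nevertheless, since $A_i$ is constant, the one-period observability Gramian of the periodic pair dominates the contribution of that subinterval,
\[
\int_0^T e^{A_i'\tau}\,\eta_i(\tau)\,H_i'H_i\,e^{A_i\tau}\,d\tau \;\succeq\; \int_a^b e^{A_i'\tau}\,H_i'H_i\,e^{A_i\tau}\,d\tau ,
\]
and Assumption~1, combined with the standard fact that an observable LTI pair has a positive definite observability Gramian over \emph{every} interval of positive length, shows that the right-hand side --- and hence the left-hand side --- is positive definite. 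Therefore the periodic pair $(A_i,\sqrt{\eta_i(\cdot)}\,H_i)$ is uniformly completely observable over one period, a fortiori detectable, and the cited theorem yields the Proposition.

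I expect the only real obstacle to be this detectability step, i.e.\ making precise that intermittent observation does not destroy periodic observability; the Gramian domination above is exactly what resolves it. A further point worth spelling out in a careful write-up is that ``stabilizing'' is not an extra hypothesis but a consequence: under the stabilizability just established, every $T$-periodic positive semi-definite solution of \eqref{eq:dynamics_omega} is automatically stabilizing, and it is this that pins down uniqueness. If one prefers a self-contained argument to citing the periodic-Riccati literature, an alternative is to transpose and time-reverse \eqref{eq:dynamics_omega} into the standard control periodic Riccati equation and run a monotone-comparison argument: sandwich solutions between those started from $0$ and from a large constant matrix, use monotonicity of the Riccati flow in the initial condition to show that the period-shifted iterates form monotone bounded sequences, and identify their common limit as the unique $T$-periodic solution.
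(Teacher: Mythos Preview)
Your argument is essentially correct, and it is more than the paper itself provides: the paper does not prove this statement but simply restates it from the earlier work \cite{pinto2019periodicfull}, so there is no in-paper proof to compare against. Your route---casting \eqref{eq:dynamics_omega} as a $T$-periodic filtering Riccati equation, verifying periodic stabilizability from $Q_i\succ 0$ and periodic detectability from positivity of the one-period observability Gramian, and then invoking the standard existence/uniqueness/attractivity theorem for periodic Riccati equations---is the natural self-contained argument and is sound.

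Two small points of polish. First, the domination
\[
\int_0^T e^{A_i'\tau}\,\eta_i(\tau)\,H_i'H_i\,e^{A_i\tau}\,d\tau \;\succeq\; \int_a^b e^{A_i'\tau}\,H_i'H_i\,e^{A_i\tau}\,d\tau
\]
as written needs $\eta_i\ge 1$ on $[a,b]$; in the paper $\eta_i\in\{0,1\}$ so this is fine, but if you want to cover the general case ``$\eta_i>0$ on $[a,b]$'' you should insert an $\inf_{[a,b]}\eta_i>0$ factor on the right. Second, the integral $\int_a^b e^{A_i'\tau}H_i'H_ie^{A_i\tau}\,d\tau$ is not literally the observability Gramian on $[a,b]$ (that would be $\int_0^{b-a} e^{A_i's}H_i'H_ie^{A_is}\,ds$), but the two are congruent via $e^{A_ia}$, so positive definiteness of one implies that of the other; this is worth a sentence in a careful write-up.
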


A consequence of this proposition is that, in a periodic schedule, visiting a given target for any finite amount of time is enough to guarantee that the covariance will converge to a limit cycle that does not depend on the initial conditions. Moreover, this proposition also implies that the $\limsup$ in \eqref{eq:cost} can be replaced by the maximum over one period, i.e.
\begin{equation}
\limsup_{t\rightarrow \infty} g_i(\norm{\Omega_i(t)}) = \max_{t\in[0,T]}g_i(\norm{\bar{\Omega}_i(t)}).
\end{equation}

\section{Properties of an Optimal Policy}
\label{Sec:PropertiesOfAnOptimalPolicy}
\subsection{Target's Perspective of a Periodic Policy}
\label{subsec:simple_notation}

We begin by defining some notations used in the remainder of this section. Recall that the goal is to optimize the visiting sequence $\Xi$ and the corresponding dwelling sequence  $\mathcal{T}$. Also, recall that the length $N_\Xi$ of the vectors $\Xi$ and $\mathcal{T}$ correspond to one full cycle of the periodic trajectory. 

The vectors $\Xi$ and $\mathcal{T}$ captures the agent's perspective of its behavior. Although our goal is to plan the agent movement policy, much of the notation and discussion in this paper is simplified if we also consider the target's perspective. In order to fully determine each target's steady-state covariance behavior it is not necessary to fully know $\Xi$ and $\mathcal{T}$. From the target's perspective, its covariance matrix only depends on when this particular target was observed. This is illustrated in Fig. \ref{fig:multiple_observations_same_target}, where $t_{\text{on},i}^k$ is the amount of time that target $i$ was observed when it was visited for the $k$\textsuperscript{th} time within a cycle and $t_{\text{off},i}^k$ is the time spent between its $k$\textsuperscript{th} and $(k+1)$\textsuperscript{th} visits. The covariance values in the transition from not observed to observed are denoted by $\overline{P}_i^k$. Here, $k\in \{1,2,\ldots,N_i\}$, where $N_i$ is the total number of visits by the agent to target $i$ within a cycle.  All these quantities can be easily determined from the original agent-centric decision variables $\Xi$ and $\mathcal{T}$; more details on these computations are given in Eqs. \eqref{eq:targets_perspective_variables}-\eqref{eq:targets_perspective_variables_2} in Appendix \ref{app:proof_props_ton_toff}. We note that, since the steady state covariance is periodic, the initial time is arbitrary, and we consider (as illustrated in Fig. \ref{fig:multiple_observations_same_target}) that at $t=0$ the target had just switched from not being observed to being observed.


\begin{figure}[h!]
    \centering
    \includegraphics[width=0.9\columnwidth]{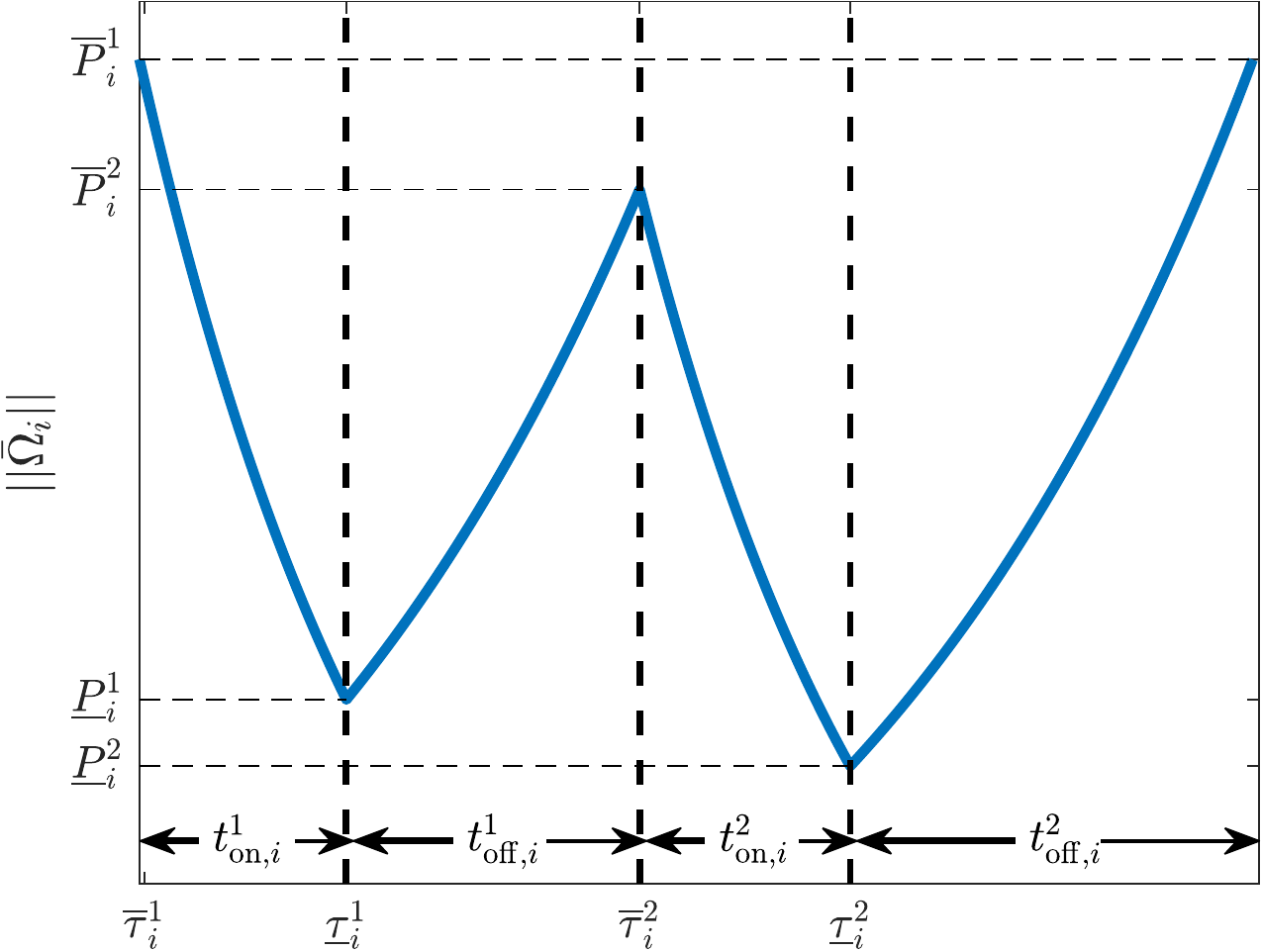}
    \caption{Temporal evolution of the steady state covariance matrix and waiting/observation times.}
    \label{fig:multiple_observations_same_target}
    \vspace{-3mm}
\end{figure}


\subsection{Necessary Condition for Optimality}

In this section, our main goal is to show that, for any visiting sequence $\Xi$ that contains every target in $\mathcal{V}$, the corresponding optimal dwelling sequence $\mathcal{T}$ must be such that $\limsup_{t\rightarrow\infty}g_i(\norm{\Omega_i(t)})$ is the same for all $i\in\mathcal{V}$.

Towards this main goal, we first introduce an auxiliary result that states that the steady-state covariance increases when the target is not observed ($\eta_i=0$) and decreases otherwise ($\eta_i=1$). However, we stress that this only holds at steady state. Observe, for example, that if the initial uncertainty over target $i$ is very small, then the transient uncertainty could temporarily grow even if $i$ is being observed. Similarly, the uncertainty along certain directions may initially decrease even when $i$ is not observed if it was very large at the initial time.

\begin{lemma}
    \label{lemma:pd_nd_time_derivative}
    If $t_{\text{off},i}^k>0$ and $t_{\text{on},i}^p>0$ for some $p,k$ such that $1\leq p,k \leq N_i$, then  $\dot{\bar{\Omega}}_i(t) \prec 0$ when the target is observed ($\eta_i(t)=1$) and $\dot{\bar{\Omega}}_i(t) \succ 0$ otherwise (i.e. when $\eta_i(t)=0$).
\end{lemma}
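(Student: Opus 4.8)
\emph{Proof plan.} The plan is to work directly with $W_i(t):=\dot{\bar\Omega}_i(t)$. First I would differentiate \eqref{eq:dynamics_omega} with respect to $t$ on the interior of each maximal observed ($\eta_i\equiv 1$) or unobserved ($\eta_i\equiv 0$) interval: the constant $Q_i$ disappears and one obtains the homogeneous Lyapunov-type equation $\dot W_i=\mathcal{M}_i(t)W_i+W_i\mathcal{M}_i(t)'$, with $\mathcal{M}_i=A_i$ on unobserved intervals and $\mathcal{M}_i=A_i-\bar\Omega_i(t)G_i$ on observed ones. Hence $W_i(t)=\Phi_i(t,s)\,W_i(s)\,\Phi_i(t,s)'$ for $t,s$ in the same maximal interval, $\Phi_i$ being an invertible transition matrix, so by Sylvester's law of inertia $W_i$ keeps a constant inertia throughout each maximal interval. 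It therefore suffices to check the claimed sign at a single instant of each maximal interval; equivalently, $\{\Omega:A_i\Omega+\Omega A_i'+Q_i\succ 0\}$ and $\{\Omega:A_i\Omega+\Omega A_i'+Q_i-\Omega G_i\Omega\prec 0\}$ are forward-invariant under the unobserved flow and the observed flow of \eqref{eq:dynamics_omega}, respectively.

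Second, I would record the jumps of $W_i$ at switching times by subtracting the two branches of \eqref{eq:dynamics_omega}: at an observe-start $\psi$, $W_i(\psi^+)=W_i(\psi^-)-\bar\Omega_i(\psi)G_i\bar\Omega_i(\psi)$; at an observe-end $\theta$, $W_i(\theta^+)=W_i(\theta^-)+\bar\Omega_i(\theta)G_i\bar\Omega_i(\theta)$. Together with the first step, the lemma reduces to establishing, on the periodic orbit, the chains $0\prec W_i(\psi^-)\prec \bar\Omega_i(\psi)G_i\bar\Omega_i(\psi)$ at each observe-start and $-\bar\Omega_i(\theta)G_i\bar\Omega_i(\theta)\prec W_i(\theta^-)\prec 0$ at each observe-end. (The strictness already signals that the argument will use $Q_i\succ 0$, which holds, together with $\bar\Omega_i(\theta)G_i\bar\Omega_i(\theta)\succ 0$ — i.e. a non-degenerate observation, $G_i\succ 0$, and $\bar\Omega_i(t)\succ 0$ along the orbit, the latter itself following from $Q_i\succ 0$ and the hypothesis that some unobserved interval has positive length.)

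The core of the proof is fixing the sign of $W_i$ at those switching instants, and for that I would combine periodicity, the uniqueness and attractivity of $\bar\Omega_i$ from Proposition \ref{prop:unique_attractive_sol_riccati_eq}, and the comparison principle for \eqref{eq:dynamics_omega}: two solutions driven by the same $\eta_i(\cdot)$ that are ordered at one time stay ordered, since their difference obeys a Lyapunov ODE plus a positive-semidefinite term. Concretely, taking $A_i$ Hurwitz for definiteness (the general case is analogous), I would introduce the equilibrium $\Omega^{\mathrm{off}}_i$ of $A_i\Omega+\Omega A_i'+Q_i=0$ and the stabilizing solution $P^*_i$ of the algebraic Riccati equation, note $P^*_i\preceq\Omega^{\mathrm{off}}_i$ and that the observed flow started at $\Omega^{\mathrm{off}}_i$ has initial derivative $-\Omega^{\mathrm{off}}_iG_i\Omega^{\mathrm{off}}_i\preceq 0$, hence is non-increasing, and then use monotonicity to locate $\bar\Omega_i$ between $P^*_i$ and $\Omega^{\mathrm{off}}_i$ and to argue that a \emph{periodic} orbit must sit \emph{strictly} on the increasing side of the unobserved dynamics and the decreasing side of the observed dynamics — otherwise, propagating a wrong-signed $W_i$ forward across a maximal interval (by the invariance of step 1), around the whole cycle, and against $\int_0^T W_i(t)\,dt=\bar\Omega_i(T)-\bar\Omega_i(0)=0$, contradicts periodicity and uniqueness.

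The step I expect to be the main obstacle is precisely this last one: converting the closed-orbit/uniqueness constraint into the \emph{pointwise, strict} sign of $\dot{\bar\Omega}_i$, i.e. making the sign implications ``cross the jumps'' at the switching times. Within a maximal interval everything is a congruence and is easy; but the jump terms $\pm\bar\Omega_i G_i\bar\Omega_i$ are not sign-monotone, so any purely local argument is circular, and the argument must be genuinely global over one period and must invoke that $\bar\Omega_i$ is the \emph{unique} periodic solution (equivalently, the monotone, concave fixed-point structure of the period map of \eqref{eq:dynamics_omega}).
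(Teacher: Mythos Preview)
Your Steps 1 and 2 are correct: $W_i:=\dot{\bar\Omega}_i$ does satisfy the homogeneous Lyapunov ODE $\dot W_i=\mathcal{M}_iW_i+W_i\mathcal{M}_i'$ on each maximal interval, so its inertia is constant there, and the jump formulas at the switches are right. This is a clean reduction. However, your Step 3 --- the part you yourself flag as the obstacle --- is a genuine gap, and I do not see a way to close it along the lines you sketch.

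The specific problem is that the sandwich $P_i^*\prec\bar\Omega_i(t)\prec\Omega_i^{\mathrm{off}}$ does \emph{not} by itself imply the sign of $\dot{\bar\Omega}_i$. On an unobserved interval you would need ``$\Omega\prec\Omega_i^{\mathrm{off}}\Rightarrow A_i\Omega+\Omega A_i'+Q_i\succ 0$''; writing $\Omega=\Omega_i^{\mathrm{off}}-D$ with $D\succ0$, this is $-(A_iD+DA_i')\succ0$, which fails for general Hurwitz $A_i$ (take any non-normal $A_i$). So the Lyapunov flow from below the equilibrium is \emph{not} monotone in the Loewner order in general, and the same obstruction appears on the observed side. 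Your fallback contradiction via $\int_0^T W_i\,dt=0$ also does not close: a wrong-signed eigendirection can be propagated forward through a congruence and through an unobs$\to$obs switch (subtracting a positive definite term keeps it negative), but at an obs$\to$unobs switch you \emph{add} $\bar\Omega_iG_i\bar\Omega_i\succ0$, which can erase that negative direction. There is no evident obstruction to an indefinite $W_i$ that still integrates to zero over a period. Finally, ``take $A_i$ Hurwitz, the general case is analogous'' is not innocuous here: when $A_i$ is unstable $\Omega_i^{\mathrm{off}}$ does not exist as a matrix, and the sandwich must be interpreted direction-by-direction.

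The paper's proof takes a different route that sidesteps these issues. It first records the sandwich $\Omega_{i,ss}\prec\bar\Omega_i(t)$ and (directionally) $\bar\Omega_i(t)\prec\Omega_i^\infty$, and then, instead of any global periodicity argument, tests $\zeta^*\dot{\bar\Omega}_i\zeta$ along eigenvectors $\zeta$ of $A_i'$ (for $\eta_i=0$) and of $(A_i-\Omega_{i,ss}G_i)'$ (for $\eta_i=1$, after the change of variable $U=\bar\Omega_i-\Omega_{i,ss}$ and passing to $U^{-1}$). Along such eigenvectors the Lyapunov operator diagonalizes, $\zeta^*(A_i\Omega+\Omega A_i')\zeta=(\lambda+\bar\lambda)\zeta^*\Omega\zeta$, and the sandwich \emph{does} yield the sign directly, splitting on $\mathrm{Re}\,\lambda\gtrless0$. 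This spectral testing is the key idea you are missing; without it the sandwich is not strong enough, and with it your congruence/jump machinery is not needed because the sign is obtained pointwise.
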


The intuitive explanation of the importance of this Lemma is that it gives an insight on how to optimize the dwelling times: when local maximum uncertainty peaks are different, it means that it is possible to observe for less time the target with lowest peaks (increasing its uncertainty) and for more time the targets with largest peak (thus decreasing it). This way, maximum uncertainty over all the targets would go down. Since its proof is long and not necessary for the comprehension of the main results in this paper, it will be given in Appendix \ref{app:proof_pd_nd_time_derivative}.


We now show that the peak uncertainties can only be achieved at very specific instants of time: those where the target switches from not being observed to being observed.  This is an important result, since it guarantees that we can compute the cost function by only looking at the a finite number of time instants. As a reminder, we define the covariance at the beginning of the $k$\textsuperscript{th} observation of that target as $\overline{P}_i^k$.

\begin{lemma}
    \label{lemma:max_on_peak}
    If target $i$ is visited for a a strictly positive amount of time, then 
    $\limsup_{t\rightarrow \infty} g_i(\norm{{\Omega}_i(t)})=\max_{1\leq k \leq N_i}g_i(\big{\Vert} \overline{P}_i^k \big{\Vert})$.
\end{lemma}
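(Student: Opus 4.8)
The plan is to exploit Lemma~\ref{lemma:pd_nd_time_derivative}, which tells us the sign of $\dot{\bar\Omega}_i(t)$ on each observation and non-observation interval at steady state. First I would note that, by Proposition~\ref{prop:unique_attractive_sol_riccati_eq}, the $\limsup$ over $t$ of $g_i(\norm{\Omega_i(t)})$ equals $\max_{t\in[0,T]} g_i(\norm{\bar\Omega_i(t)})$, so it suffices to analyze one period of the steady-state trajectory $\bar\Omega_i(t)$. Partition $[0,T]$ into the $N_i$ ``on'' intervals and $N_i$ ``off'' intervals dictated by the visiting/dwelling data. The key structural fact is that on each ``on'' interval $\bar\Omega_i$ is strictly decreasing in the Loewner order and on each ``off'' interval it is strictly increasing; hence on any ``on'' interval the value of $\norm{\bar\Omega_i(t)}$ is maximized at the left endpoint, and on any ``off'' interval it is maximized at the right endpoint. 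In both cases the maximizing instant is a switch from not-observed to observed, i.e.\ one of the $\overline{P}_i^k$.

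To make this precise I would first argue that $X \mapsto \norm{X}$ is monotone with respect to the Loewner order on positive semidefinite matrices: if $0 \preceq X \preceq Y$ then $\norm{X} \le \norm{Y}$. This holds for the trace norm (indeed $\tr(Y-X)\ge 0$) and more generally for any norm on the PSD cone that is monotone; I would invoke the standing assumption that $\norm{\cdot}$ is such a norm, or simply restrict to the stated canonical choice. Combined with strict monotonicity of $\bar\Omega_i$ on each subinterval (Lemma~\ref{lemma:pd_nd_time_derivative}) and the fact that $\bar\Omega_i$ is continuous, this yields that $\sup_{t \in I} \norm{\bar\Omega_i(t)}$ over an ``on'' interval $I=[\tau, \tau')$ is attained at $\tau$, and over an ``off'' interval at the right endpoint. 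Since $g_i$ is strictly increasing, the same holds for $g_i(\norm{\bar\Omega_i(t)})$.

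Next I would collect these endpoint values. The left endpoint of the $k$th ``on'' interval is by definition the covariance at the start of the $k$th observation of target $i$, namely $\overline{P}_i^k$. The right endpoint of each ``off'' interval coincides with the left endpoint of the following ``on'' interval (by continuity of $\bar\Omega_i$ at the switch), i.e.\ it is again some $\overline{P}_i^k$ (with the index wrapping around the period). Therefore $\max_{t\in[0,T]} g_i(\norm{\bar\Omega_i(t)}) = \max_{1\le k\le N_i} g_i(\norm{\overline{P}_i^k})$, which is exactly the claimed identity.

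The main obstacle I anticipate is handling the Loewner-monotonicity of the chosen matrix norm carefully: Lemma~\ref{lemma:pd_nd_time_derivative} only gives that $\dot{\bar\Omega}_i$ is definite, which implies $\bar\Omega_i(t_1)\succ\bar\Omega_i(t_2)$ for $t_1<t_2$ within an ``on'' interval, but turning this into a statement about $\norm{\cdot}$ requires the norm to respect the order; for a general norm this can fail, so the argument should either be stated for trace (or another Schatten/monotone norm) or the monotonicity of $\norm{\cdot}$ on the PSD cone should be added as a hypothesis. A secondary, minor technical point is the treatment of the interval endpoints and the degenerate case where some $t_{\text{on},i}^k$ or $t_{\text{off},i}^k$ is zero; the hypothesis that target $i$ is visited for a strictly positive amount of time, together with $T$-periodicity, ensures at least one genuine ``on'' and one genuine ``off'' interval exist so that the maximum is well defined and achieved at a switch instant.
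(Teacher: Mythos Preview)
Your proposal is correct and follows essentially the same approach as the paper: reduce the $\limsup$ to a maximum over one period via Proposition~\ref{prop:unique_attractive_sol_riccati_eq}, then use Lemma~\ref{lemma:pd_nd_time_derivative} to argue that the maximum can only be attained at the off-to-on switch instants, i.e.\ at the $\overline{P}_i^k$. Your treatment is in fact more careful than the paper's, which tacitly assumes that $\norm{\cdot}$ is Loewner-monotone without stating it; your flagging of this as a needed hypothesis (or restriction to the trace norm) is well placed.
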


\begin{proof}
    First, note that since $\Omega_i(t)$ converges to the bounded periodic function $\bar{\Omega}_i(t)$, $\limsup_{t\rightarrow \infty}g_i\left(\norm{{\Omega}_i(t)}\right)=\max_{0\leq t\leq T}g_i\left(\norm{\bar{\Omega}_i(t)}\right)$.
    For any time $t$ which $\exists\  \epsilon>0$ such that $\eta_i(t+\epsilon)=0$, Lemma \ref{lemma:pd_nd_time_derivative} implies that $\bar{\Omega}_i(t+\epsilon)\succ \bar{\Omega}_i(t)$. 
    Conversely, if $\exists\  \epsilon>0$ $\eta_i(t-\epsilon)=1$, $\bar{\Omega}_i(t-\epsilon)\succ \bar{\Omega}_i(t)$.
    
    Therefore, the maximization of $g_i\left(\norm{\bar{\Omega}_i}\right)$ can only happen in one of the instants when the target switches from not being observed ($\eta_i=0$) to being observed ($\eta_i=1$). The covariance at these instants is given by $\overline{P}_i^k$ (see also Fig. \ref{fig:multiple_observations_same_target}).
\end{proof}

Using Lemmas \ref{lemma:pd_nd_time_derivative}, \ref{lemma:max_on_peak}, we next show how the peak values $\overline{P}_i^k$ vary with $t_{\text{on},i}^m$ and $t_{\text{off},i}^m$, for any $k,m$ where $1\leq k,m \leq N_i$. 
For this, we use the fact that the parameters $t_{\text{on},i}^m$ and $t_{\text{off},i}^m, \forall m$ fully define the evolution of steady state covariance matrix $\bar{\Omega}_i(t)$. The following proposition has an intuitive interpretation: when a target is observed for a longer time, its peak uncertainty will be lower. Conversely, if the time between observations increases, then the peak uncertainty will be higher.

\begin{proposition}
    \label{prop:derivative_ton_toff}
    $\frac{\partial \overline{P}_i^k}{\partial t_{\text{on},i}^m} \prec 0$ and $\frac{\partial \overline{P}_i^k}{\partial t_{\text{off},i}^m} \succ 0$.
\end{proposition}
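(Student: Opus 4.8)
The plan is to read $\overline{P}_i^k$ as the fixed point of the steady-state ``return map'' along the cycle and to differentiate that fixed-point relation. Write the flow of \eqref{eq:dynamics_omega} over an observed interval of length $\tau$ as $\psi^{\mathrm{on}}_\tau(\cdot)$ and over an unobserved interval as $\psi^{\mathrm{off}}_\tau(\cdot)$, and let $\mathcal{R}_\theta$ be the composition of these maps obtained by flowing once around the cycle starting from the switch into the $k$\textsuperscript{th} observation of target $i$, where $\theta$ collects all the durations $t_{\mathrm{on},i}^m,\,t_{\mathrm{off},i}^m$. By Proposition \ref{prop:unique_attractive_sol_riccati_eq}, $\overline{P}_i^k$ is the unique stabilizing fixed point, $\overline{P}_i^k=\mathcal{R}_\theta(\overline{P}_i^k)$, and (using smooth dependence of ODE solutions on parameters) the implicit function theorem gives
\begin{equation*}
\frac{\partial \overline{P}_i^k}{\partial \theta_j}=\bigl(\mathcal{I}-D_P\mathcal{R}_\theta\bigr)^{-1}\frac{\partial \mathcal{R}_\theta}{\partial \theta_j}\Big|_{P=\overline{P}_i^k},
\end{equation*}
with $D_P\mathcal{R}_\theta$ the linearization of $\mathcal{R}_\theta$ in the initial condition, acting on symmetric matrices.

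The first ingredient is that $D_P\mathcal{R}_\theta$ is cone-preserving and stable. Linearizing each segment's flow along the periodic trajectory turns it into a congruence $\delta\Omega\mapsto\Phi\,\delta\Omega\,\Phi'$ with $\Phi$ invertible — the state-transition matrix of $\dot\xi=(A_i-\bar\Omega_i G_i)\xi$ on observed intervals and of $\dot\xi=A_i\xi$ otherwise — so $D_P\mathcal{R}_\theta$ is the congruence by the monodromy matrix $\Psi$; it maps the positive semidefinite cone into itself and the positive definite matrices into themselves. Because the periodic solution of Proposition \ref{prop:unique_attractive_sol_riccati_eq} is stabilizing, $\Psi$ is Schur, so $D_P\mathcal{R}_\theta$ has spectral radius strictly below $1$ and $(\mathcal{I}-D_P\mathcal{R}_\theta)^{-1}=\sum_{n\ge 0}(D_P\mathcal{R}_\theta)^n$ converges; being a sum of cone-preserving operators whose $n=0$ term is the identity, this inverse maps $\succ 0$ to $\succ 0$ and $\prec 0$ to $\prec 0$.

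The second ingredient fixes the sign of $\partial\mathcal{R}_\theta/\partial\theta_j$. Factor $\mathcal{R}_\theta=\Phi_{\mathrm{post}}\circ\psi^{\mathrm{on}}_{t_{\mathrm{on},i}^m}\circ\Phi_{\mathrm{pre}}$; differentiating in $t_{\mathrm{on},i}^m$ and applying the chain rule yields $D\Phi_{\mathrm{post}}$ applied to the right-hand side of \eqref{eq:dynamics_omega} with $\eta_i=1$, evaluated at the covariance reached at the end of that observation interval. That covariance lies on the steady-state trajectory while target $i$ is observed, so Lemma \ref{lemma:pd_nd_time_derivative} makes this right-hand side $\prec 0$, and $D\Phi_{\mathrm{post}}$ (a congruence by an invertible matrix) preserves the sign, hence $\partial\mathcal{R}_\theta/\partial t_{\mathrm{on},i}^m\prec 0$. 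The same computation with $\psi^{\mathrm{off}}$ and $\eta_i=0$ gives $\partial\mathcal{R}_\theta/\partial t_{\mathrm{off},i}^m\succ 0$. Feeding these into the displayed identity and using the positivity of $(\mathcal{I}-D_P\mathcal{R}_\theta)^{-1}$ from the first ingredient yields $\partial\overline{P}_i^k/\partial t_{\mathrm{on},i}^m\prec 0$ and $\partial\overline{P}_i^k/\partial t_{\mathrm{off},i}^m\succ 0$.

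I expect the crux to be the control of $(\mathcal{I}-D_P\mathcal{R}_\theta)^{-1}$: one must turn the ``stabilizing'' qualifier of Proposition \ref{prop:unique_attractive_sol_riccati_eq} into the statement that the monodromy $\Psi$ is Schur — equivalently, that $\mathcal{I}-D_P\mathcal{R}_\theta$ is invertible with a cone-preserving Neumann inverse — and then be careful that the inequalities remain strict ($\prec,\succ$) rather than merely $\preceq,\succeq$, which is where invertibility of all transition matrices and $Q_i\succ 0$ are used. Everything else (differentiability of $\mathcal{R}_\theta$ in the durations, the chain rule, and the fact that the relevant evaluation points sit on the steady-state orbit so that Lemma \ref{lemma:pd_nd_time_derivative} applies) is routine.
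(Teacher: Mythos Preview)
Your argument is correct and is essentially the same proof the paper gives, phrased more abstractly: the paper propagates the sensitivity explicitly around the cycle to obtain the discrete Lyapunov equation $\frac{\partial \overline{P}_i^{m-1}}{\partial t_{\text{on},i}^m}=\Lambda^T\bigl(\frac{\partial \overline{P}_i^{m-1}}{\partial t_{\text{on},i}^m}+C\bigr)\Lambda$ with $C\prec 0$ coming from Lemma~\ref{lemma:pd_nd_time_derivative}, and then invokes Schur stability of $\Lambda$---this is exactly your implicit-function/Neumann-series argument with $D_P\mathcal{R}_\theta(X)=\Lambda^T X\,\Lambda$ and $\Psi=\Lambda$.
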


The proof of Proposition \ref{prop:derivative_ton_toff} is given in Appendix \ref{app:proof_props_ton_toff}.

The following proposition, which is the main result of this section, can be interpreted analogously to resource allocation problems where different targets are competing for the same resource $t_{\text{on},i}^k$. Therefore, an equilibrium in the minimax sense is reached when all the ``active" targets have the same utility. Note that an active target is one that is visited for a non-null amount of time at least once during a cycle, in other words: 
\begin{definition}
A target $i$ is said to be {\bf active} if $\sum_{k=1}^{N_i} t_{\text{on},i}^k>0$.
\end{definition} 

Unlike typical resource allocation problems, here the total resource $\sum_{i=1}^M \sum_{k=1}^{N_i}t_{\text{on},i}^k$ is not fixed. The reason why the total resource does not go to infinity (i.e. the period is finite) is that increasing $t_{\text{on},i}^k$ to one target has an adverse effect to all other active targets. 

Let us define the set of all active targets as $\mathcal{A} \subseteq \mathcal{V}$. Note that a target with unstable internal state dynamics (i.e., $A_i$ is unstable) has to be active; otherwise, the cost \eqref{eq:cost} will be unbounded. However, if a target has stable $A_i$, its infinite horizon uncertainty without ever being observed can be lower than some other target's uncertainty, and thus the optimal policy would be to make such targets inactive.

\begin{proposition}
    \label{prop:same_peak}
    For a fixed visiting sequence $\Xi$, the corresponding dwelling time sequence $\mathcal{T}$ that optimizes the cost \eqref{eq:cost} satisfies:
    $$\limsup_{t\rightarrow\infty}g_i(\norm{\Omega_i(t)})=\limsup_{t\rightarrow\infty}g_j(\norm{\Omega_j(t)}),$$
    for all $i, j\in\mathcal{A}$. Additionally, if $i\in\mathcal{A}$ and $p\not\in\mathcal{A}$, then 
    $$\limsup_{t\rightarrow\infty}g_i(\norm{\Omega_i(t)})\geq\limsup_{t\rightarrow\infty}g_p(\norm{\Omega_p(t)}).$$
\end{proposition}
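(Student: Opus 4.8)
The plan is to reduce both assertions to the single claim that, at an optimal dwelling sequence $\mathcal{T}$, every active target attains the worst-case value $J:=J(\Xi,\mathcal{T})$. Write $P_i:=\limsup_{t\to\infty}g_i(\norm{\Omega_i(t)})$ and $\mathcal{M}:=\{i\in\mathcal{V}:P_i=J\}$, so $J=\max_i P_i$ and $\mathcal{M}\neq\emptyset$. If $\mathcal{A}\subseteq\mathcal{M}$, then all active targets share the value $J$ and every active target has $P_i=J\ge P_p$ for each inactive $p$, which is the proposition; conversely, if the proposition holds and $\mathcal{A}\neq\emptyset$, then the common active value $v$ satisfies $v\ge P_p$ for all inactive $p$, so $v=\max_i P_i=J$ and $\mathcal{A}\subseteq\mathcal{M}$. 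It therefore suffices to prove $\mathcal{A}\subseteq\mathcal{M}$, and I would argue by contradiction: suppose $\mathcal{T}$ is optimal but some active target $j$ has $P_j<J$.

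The core is a local interchange step built on Lemma \ref{lemma:max_on_peak} and Proposition \ref{prop:derivative_ton_toff}. Since $j$ is active, one of its visits has dwell time $t_k>0$; decrease it by a small $\delta>0$. Because the agent sits at $j$ throughout that visit, the visit lies strictly inside exactly one off-interval $t_{\text{off},\ell}^{m}$ of each target $\ell\neq j$, and shortening $t_k$ decreases precisely that $t_{\text{off},\ell}^{m}$ by $\delta$ while leaving all of $\ell$'s on-times and other off-intervals untouched. For every active $\ell\neq j$, Proposition \ref{prop:derivative_ton_toff} then makes each peak $\overline{P}_\ell^{n}$ strictly smaller in the Loewner order for $\delta$ small, so by Lemma \ref{lemma:max_on_peak} and strict monotonicity of $\norm{\cdot}$ on the positive semidefinite cone (as for $\tr(\cdot)$) the peak $P_\ell=\max_n g_\ell(\norm{\overline{P}_\ell^{n}})$ strictly decreases, hence $P_\ell<J$. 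Meanwhile $P_j$ depends continuously on $\delta$ and $P_j(0)<J$, so $P_j<J$ for $\delta$ small; and every inactive target $p$ (which, since the optimal cost is finite under Assumption 1, must have $A_p$ stable) keeps $P_p$ fixed, because with $\eta_p\equiv0$ equation \eqref{eq:dynamics_omega} is an affine, globally asymptotically stable ODE whose only periodic orbit is the constant Lyapunov solution $\Omega_p^{\mathrm{free}}$, independent of the period. Hence, if $J$ is attained only by active targets, the perturbed schedule has cost strictly below $J$, contradicting optimality.

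If instead some inactive target also attains $J$, I would additionally give each such $p$ a tiny dwell time $\epsilon>0$ at its fixed position in $\Xi$. Proposition \ref{prop:derivative_ton_toff} applies once the dwell is positive and, since $\overline{P}_p^{n}\to\Omega_p^{\mathrm{free}}$ as $\epsilon\to0^{+}$, it yields $\overline{P}_p^{n}\prec\Omega_p^{\mathrm{free}}$ and so $P_p<J$ for every $\epsilon>0$. The only remaining effect is to lengthen the other active targets' off-intervals by $O(\epsilon)$, which is dominated by the $\Theta(\delta)$ decrease of the previous paragraph once $\epsilon$ is small relative to $\delta$; and since a target observed for only time $\epsilon$ has its peak $O(\epsilon)$-sensitive to the rest of the schedule, the cross-terms among the newly observed targets are $O(\epsilon^2)$ and negligible. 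Taking $\delta$ small and then $\epsilon\ll\delta$ pushes every peak strictly below $J$, again contradicting optimality. Therefore $\mathcal{A}\subseteq\mathcal{M}$, which is the proposition.

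The step I expect to be most delicate is this last one: balancing the two perturbation scales so that shrinking $t_k$ both absorbs the period growth from observing the inactive maximizers and overwhelms the induced growth of the other off-intervals, together with the boundary issue that Proposition \ref{prop:derivative_ton_toff} presupposes strictly positive dwell times (handled by right-continuity of $\overline{P}_p^{n}$ at $\epsilon=0$). Everything else follows directly from Lemma \ref{lemma:max_on_peak} (peaks occur only at the $\overline{P}_i^{k}$) and Proposition \ref{prop:derivative_ton_toff} (monotonicity of those peaks in the on/off times), together with the standard facts that the periodic covariance is $C^{1}$ in the dwell times and that $\norm{\cdot}$ is monotone on the positive semidefinite cone.
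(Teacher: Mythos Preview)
Your proof is correct and follows essentially the same interchange argument as the paper: shorten the dwell time at an active target whose peak is strictly below the maximum, invoke Proposition~\ref{prop:derivative_ton_toff} to show every other active target's peak strictly drops while the shortened target's peak remains below $J$ by continuity, and conclude by contradiction. Your repackaging of both assertions as the single inclusion $\mathcal{A}\subseteq\mathcal{M}$ and your explicit two-scale treatment of the case where an inactive target attains $J$ are tidier than the paper's somewhat informal handling of the second part, but the underlying mechanism---the same perturbation, the same appeal to Lemma~\ref{lemma:max_on_peak} and Proposition~\ref{prop:derivative_ton_toff}---is identical.
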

\begin{proof}
    First we focus on active targets and prove the first part of the proposition by contradiction, showing that if the property given in the proposition does not hold, then there is a way to re-balance the observation times that is guaranteed to improve the performance. Suppose that for some target $i$
    \begin{equation}
        \label{eq:inequal_peak_hypothesis}
        g_i\left(\norm{\overline{P}_i^{\max} (t_{\text{on},i}^{1:N_i},t_{\text{off},i}^{1:N_i})}\right)<g_j\left(\norm{\overline{P}_j^{\max} (t_{\text{on},j}^{1:N_j},t_{\text{off},j}^{1:N_j})}\right),
    \end{equation}
    where the upper index $\max$ indicates that among all the peaks $P_i^k$, we pick the value of $k$ that yields the highest $g_i(\norm{P_i^k})$. We now propose to decrease the amount of time $i$ is observed by $\epsilon$. According to \eqref{eq:def_toff}, this implies that the waiting time between observations for all the other targets will decrease. This updated policy (dwelling sequence) generates a new set of observation times for target $i$, (denoted by $\tilde{t}_{\text{on},i}^{1:N_i}$), and updated waiting times between visits for all the other active targets, (denoted by $\tilde{t}_{\text{off},j}^{1:N_j}$), while maintaining ${t}_{\text{off},i}^{1:N_i}$ and ${t}_{\text{on},j}^{1:N_j}$ constant. Note that $\exists \epsilon>0$ such that $\tilde{t}_{\text{on},i}^ k={t}_{\text{on},i}^k-\epsilon$ for some $k\in\{1,..,N_i\}$ and $\tilde{t}_{\text{off},j}^m<{t}_{\text{off},j}^m$ for some $m\in \{1,...,N_j\}$. Using Proposition \ref{prop:derivative_ton_toff}, we get:
    \begin{align}
        \overline{P}_i^{\max} (\tilde{t}_{\text{on},i}^{1:N_i},t_{\text{off},i}^{1:N_i}) &\succ  \overline{P}_i^{\max} ({t}_{\text{on},i}^{1:N_i},t_{\text{off},i}^{1:N_i}),\\
        \overline{P}_j^{\max} ({t}_{\text{on},j}^{1:N_j},\tilde{t}_{\text{off},j}^{1:N_j}) &\prec  \overline{P}_j^{\max} ({t}_{\text{on},j}^{1:N_j},t_{\text{off},j}^{1:N_j}).
    \end{align}
     Using the fact that both the norm and the derivative are continuous and strictly increasing, we can always pick an $\epsilon$ small enough such that the new peak of target $i$ is lower or equal to the new peak of $j$, i.e.,
    \begin{align*}
        g_j\left(\norm{\overline{P}_j^{\max} ({t}_{\text{on},j}^{1:N_j},\tilde{t}_{\text{off},j}^{1:N_j})}\right) &<  g_j\left(\norm{\overline{P}_j^{\max} ({t}_{\text{on},j}^{1:N_j},t_{\text{off},j}^{1:N_j})}\right),\\
        g_i\left(\norm{\overline{P}_i^{\max} (\tilde{t}_{\text{on},i}^{1:N_i},t_{\text{off},i}^{1:N_i})}\right)&\leq g_j\left(\norm{\overline{P}_j^{\max} ({t}_{\text{on},j}^{1:N_j},\tilde{t}_{\text{off},j}^{1:N_j})}\right).
    \end{align*}
    
    Since under the updated policy, all the peaks $\overline{P}_j^m$, $1\leq m \leq N_j$, $1\leq j\leq M$, are lower for all the targets except $i$, we recall Lemma \ref{lemma:max_on_peak} and conclude that this updated policy has a lower cost than the previous one. Hence, the previous policy cannot be optimal, which proves the first part of the proposition.
    
    For the second part of the proposition, we note that if a target is inactive, its peak uncertainty could be reduced by increasing its dwell-time (and thus increasing the peak uncertainty of the active targets), using a very similar argument as in the first part of this proof. Thus, in an optimal dwelling sequence, a target will be inactive only if its peak uncertainty is lower (or equal) than that of the active targets.
    \end{proof}

This proposition gives a necessary condition for the optimality of the dwelling sequence. Moreover, its constructive proof also gives insight on how to locally optimize the dwelling sequence for a given visiting sequence. However, in general, this property is not sufficient for determining the optimal dwelling sequence. In the next section, we will restrict ourselves to a specific set of visiting sequences $\Xi$ where the optimality condition in Proposition \ref{prop:same_peak} can indeed be exploited to optimize the dwelling times at each target.


\section{Optimal Dwelling Sequence on a Constrained Visiting Sequence}
\label{sec:optimal_dwelling_single_visit}

While in the previous section we discussed a necessary condition for optimal dwelling time allocation, this condition alone is not sufficient to fully determine a globally optimal trajectory. Thus, in this section we restrict ourselves to a specific type of visiting sequence that allows us to further exploit this necessary condition to design a practical algorithm for dwelling sequence allocation. We then extend that algorithm to more general scenarios in the next section. More specifically, here we constrain ourselves to consider only visiting sequences $\Xi$ where {\it each target is visited at most once during a cycle} (hence also called ``constrained'' visiting sequences). Under this assumption, we develop a practical algorithm that optimizes the dwelling sequence corresponding to a given visiting sequence $\Xi$. In some specific setups, we show that this approach gives the globally optimal dwelling sequence. Later on in this section, we discuss the process to obtain an optimal visiting sequence $\Xi^*$. As a side note, in this section we will omit the upper index of $\overline{P}_i, t_{\text{on},i}$ and $t_{\text{off},i}$, since $N_i=1, \forall i\in\Xi$. Note also that $i\in \Xi$ if and only if $i\in\mathcal{A}$.

The main idea behind our approach is to exploit the property that all peak uncertainties ($\overline{P}_i$) of active targets must coincide in an optimal solution. In particular, we develop an iterative scheme that balances the dwell-times ($t_{\text{on},i}$) such that the peak uncertainties coincide upon convergence of the algorithm. The update law used to update the dwell-times $t_{\text{on},i}$ in this iterative scheme is:
    \begin{equation}
        \label{eq:consensus_update_discrete}
        t_{on,i}[k+1] = t_{on,i}[k] +k_p\log\left(\frac{g_i\left(\norm{\overline{P}_i}\right)}{g_{\text{avg}}}\right),
    \end{equation}
where $g_{\text{avg}}=\left(\prod_{j\in\mathcal{A}}g_j\left(\norm{\overline{P}_j}\right)\right)^{\frac{1}{M}}$ and $k_p$ is a small positive constant. It can be interpreted as an algorithm aiming to achieve ``consensus" regarding the peak uncertainties $\overline{P}_i$ among the active targets, and thus its structure is very similar to geometric mean consensus algorithms \cite{FB-LNS}. The expression \eqref{eq:consensus_update_discrete} does not require the computation of gradients, which makes it computationally much less demanding than gradient-based approaches, such as \cite{pinto2020periodicfull}. We highlight that the computation of gradients involves the numerical solution of matrix differential equations at each step of the optimization, which imposes a major computational burden.

\begin{remark}
    At each iteration of \eqref{eq:consensus_update_discrete}, it is necessary to compute $\overline{P}_i$ and $g_{\text{avg}}$. For computing $\overline{P}_i$, we use the algorithm described in \cite{chu2004structure}, which converges quadratically and is numerically stable. On the other hand, we note that $g_{\text{avg}}$ can be computed distributively through a consensus protocol if $\overline{P}_i$ is computed locally by each target.
\end{remark}
    
In order to simplify the convergence analysis of this update law, we abstract it with the following differential equation:
    \begin{equation}
        \label{eq:update_law_single_target}
        \frac{d}{dr}{t}_{\text{on},i} = \begin{cases}
        0,&\text{if }i \not\in \mathcal{A} \text{ and } \frac{g_i(\norm{\bar{P}_i})}{g_{\text{avg}}}\leq 1,\\
        k_p\log\left(\frac{g_i\left(\norm{\overline{P}_i}\right)}{g_{\text{avg}}}\right),& \text{otherwise}.
        \end{cases}
    \end{equation}
    Note that this version of the update law considers continuous parameter variation, i.e., the auxiliary variable $r$ should be understood as the continuous time equivalent of ``iteration index" and does not carry any ``time'' interpretation \textcolor{blue}{and at each instant $P_i$ is a function of $t_{\text{on}},i$ and $t_{\text{off},i}$}. 
    
    \begin{proposition}
        \label{prop:assymptotic_stability}
        Under the update law \eqref{eq:update_law_single_target}, the function $\max_{i,j\in\mathcal{A}}|g_i(\norm{\overline{P}_i})-g_j(\norm{\overline{P}_j})|$ is asymptotically stable.
    \end{proposition}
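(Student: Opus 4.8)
\emph{Proof proposal.} The plan is to show that, for a constrained (single--visit) visiting sequence, the flow \eqref{eq:update_law_single_target} reduces \emph{exactly} to a weighted first--order consensus on the logarithms of the peak costs, and then to invoke the classical $\max$--minus--$\min$ Lyapunov function for such flows. Introduce, for each $i\in\mathcal A$, the scalar $y_i:=\log g_i(\norm{\overline P_i})$, so that $\log g_{\mathrm{avg}}$ equals the arithmetic mean $\bar y:=\frac1{|\mathcal A|}\sum_{j\in\mathcal A}y_j$ (when the geometric mean in $g_{\mathrm{avg}}$ is taken over $\mathcal A$; the general normalization is treated in caveat (i) below) and \eqref{eq:update_law_single_target} becomes $\dot t_{\mathrm{on},i}=k_p(y_i-\bar y)$ on the region where every active target has a strictly positive dwell time (there $\overline P_i$ is smooth and the active set is locally constant). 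The structural fact I would use is that in a single--visit cycle the inter--visit time is $t_{\mathrm{off},i}=\sum_{j\in\mathcal A\setminus\{i\}}t_{\mathrm{on},j}+D$, with $D$ the fixed total travel time of the cycle (cf. \eqref{eq:def_toff}); hence $\partial t_{\mathrm{off},i}/\partial t_{\mathrm{on},i}=0$ and $\partial t_{\mathrm{off},i}/\partial t_{\mathrm{on},k}=1$ for every $k\neq i$. Since $\overline P_i$ depends only on $(t_{\mathrm{on},i},t_{\mathrm{off},i})$, Proposition \ref{prop:derivative_ton_toff} together with monotonicity of $\norm\cdot$ on the PSD cone, of $g_i$, and of $\log$, yields $\partial y_i/\partial t_{\mathrm{on},i}=-a_i<0$ and $\partial y_i/\partial t_{\mathrm{off},i}=b_i>0$ for continuous functions $a_i,b_i$ bounded away from $0$ and $\infty$ on that region.

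Differentiating $y_i$ along \eqref{eq:update_law_single_target} and using these derivatives,
\[
\dot y_i=-a_i\dot t_{\mathrm{on},i}+b_i\!\!\sum_{k\in\mathcal A\setminus\{i\}}\!\!\dot t_{\mathrm{on},k}=-(a_i+b_i)\dot t_{\mathrm{on},i}+b_i\!\sum_{k\in\mathcal A}\!\dot t_{\mathrm{on},k}.
\]
Because $\sum_{k\in\mathcal A}\dot t_{\mathrm{on},k}=k_p\sum_{k\in\mathcal A}(y_k-\bar y)=0$ (equivalently, the total dwell time, and hence the period, is conserved), this collapses to the decoupled form $\dot y_i=-\lambda_i(y_i-\bar y)$ with $\lambda_i:=k_p(a_i+b_i)>0$: each $y_i$ flows toward the instantaneous mean at a uniformly positive rate.

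It then remains to run the standard argument. Put $V:=\max_{i\in\mathcal A}y_i-\min_{j\in\mathcal A}y_j\;(=\max_{i,j\in\mathcal A}|y_i-y_j|)$. By the formula for the Dini derivative of a pointwise maximum, $D^+(\max_i y_i)=\dot y_{i^\ast}$ for some maximizer $i^\ast$, and since $y_{i^\ast}\ge\bar y$ and $y_{i^\ast}-\bar y\ge V/|\mathcal A|$ we get $\dot y_{i^\ast}\le-(\lambda_{\min}/|\mathcal A|)V$; symmetrically the minimum increases at rate at least $(\lambda_{\min}/|\mathcal A|)V$, so $D^+V\le-(2\lambda_{\min}/|\mathcal A|)V$. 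Hence $V$ is nonincreasing and decays exponentially to $0$; moreover $|\dot y_i|$ and $|\dot t_{\mathrm{on},i}|$ are $O(V)$ and therefore integrable over $[0,\infty)$, so $y_i$ and $t_{\mathrm{on},i}$ converge, necessarily to a consensus configuration, and the total excursion of $\mathbf t_{\mathrm{on}}$ is bounded by a constant times $V(0)$, which is in turn $O(\mathrm{dist}(\cdot,\text{consensus set}))$. This yields Lyapunov stability together with attractivity, i.e. asymptotic stability of the consensus configuration (equivalently, $V$ nonincreasing with $V\to0$); since $\log$ is a diffeomorphism on $(0,\infty)$, the same holds for the quantity $\max_{i,j\in\mathcal A}|g_i(\norm{\overline P_i})-g_j(\norm{\overline P_j})|$ in the statement.

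The crux — and the step I expect to need the most care — is the cancellation producing $\dot y_i=-\lambda_i(y_i-\bar y)$: it hinges on the single--visit structure (so that $\partial t_{\mathrm{off},i}/\partial t_{\mathrm{on},k}$ is the \emph{same} constant for all $k\neq i$) and on the dwell--time conservation $\sum_k\dot t_{\mathrm{on},k}=0$; in the multi--visit setting of Section \ref{sec:multiple_visits} these cross--coupling terms no longer cancel and a different argument would be required. The remaining technical points to verify carefully are: (i) the normalization in $g_{\mathrm{avg}}$ — if the geometric mean is over $\mathcal A$ the cancellation is exact, and otherwise one must dominate a residual drift $b_i\sum_k\dot t_{\mathrm{on},k}$; (ii) differentiability of $\overline P_i$ in $(t_{\mathrm{on},i},t_{\mathrm{off},i})$ and the uniform positivity/boundedness of $a_i,b_i$, which hold on compacta where every active dwell time is strictly positive; and (iii) inactive targets, which remain at $t_{\mathrm{on},i}=0$ while $g_i(\norm{\overline P_i})\le g_{\mathrm{avg}}$ and hence never enter $\max_{i,j\in\mathcal A}$, whereas a target switching to active enters the smooth branch of \eqref{eq:update_law_single_target} and is thereafter governed by the analysis above.
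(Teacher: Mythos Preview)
Your approach is correct and essentially the same as the paper's: both exploit the conservation of the cycle period under the update law (so that $\dot t_{\mathrm{off},i}=-\dot t_{\mathrm{on},i}$), invoke Proposition~\ref{prop:derivative_ton_toff} to fix the sign of each peak's derivative, and then run the $\max$-minus-$\min$ Lyapunov argument. The only differences are cosmetic---you work in log-coordinates $y_i=\log g_i(\Vert\overline P_i\Vert)$ and write $t_{\mathrm{off},i}=\sum_{j\neq i}t_{\mathrm{on},j}+D$ rather than the paper's simpler $t_{\mathrm{off},i}=T-t_{\mathrm{on},i}$---though your version is more complete in extracting an explicit exponential decay rate for $V$, whereas the paper's proof stops at strict decrease.
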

    \begin{proof}
    Note that 
    $
            \frac{d \overline{P}_i}{d r} = \frac{\partial \overline{P}_i}{\partial t_{\text{on},i}}\frac{dt_{\text{on},i}}{dr}+\frac{\partial \overline{P}_i}{\partial t_{\text{off},i}}\frac{dt_{\text{off},i}}{dr}
    $   
     and 
        \begin{align}    
        \sum_{i=1}^M \frac{d{t}_{\text{on},i}}{dr} &=k_p\log \big(\frac{\prod_{j\in \mathcal{A}}g_j\left(\norm{\overline{P}_j}\right)}{g_{\text{avg}}}\big) \nonumber = k_pM\log \frac{g_{\text{avg}}}{g_{\text{avg}}}=0.
        \end{align}
         Therefore, under this control law, the period is constant. Thus, $\frac{dt_{\text{on},i}}{dr}=-\frac{dt_{\text{off},i}}{dr}.$ Hence,
         $
         \frac{d \overline{P}_i}{d q} = (\frac{\partial \overline{P}_i}{\partial t_{\text{on},i}}-\frac{\partial \overline{P}_i}{\partial t_{\text{off},i}})k_p\log \frac{g_i(\norm{\overline{P}_i})}{g_{\text{avg}}}
         $.
         Consequently, if $g_i(\norm{\overline{P}_i})>g_{\text{avg}}$, $\frac{d \overline{P}_i}{d r} \prec 0.$ 
        Conversely, if $g_i(\norm{\overline{P}_i})<g_{\text{avg}}$, then $\frac{d \overline{P}_i}{d r} \succ 0$. 
        Note also that
        \begin{equation*}
            \max_{i,j\in \mathcal{A}}|g_i(\norm{\overline{P}_i})-g_j(\norm{\overline{P}_j})| = \max_{i\in \mathcal{A}}g_i(\norm{\overline{P}_i}) - \min_{j\in \mathcal{A}}g_j(\norm{\overline{P}_j}).
        \end{equation*}
        Since $g_{\text{avg}}$ is the geometric mean (i.e., its value is lower than the maximum and higher than the minimum $g_j(\norm{\overline{P}_j})$), we get that if $\max_{i\in\mathcal{A}}g_i(\norm{\overline{P}_i})\neq g_{\text{avg}}$, then
        $\frac{d}{dr}\max_{i\in \mathcal{A}}g_i(\norm{\overline{P}_i})<0$
        and 
        $\frac{d}{dr}\min_{j\in\mathcal{A}}g_j(\norm{\overline{P}_j})>0$. 
        Therefore, 
        $
        \frac{d}{dr}\max_{i,j}|g_i(\norm{\overline{P}_i})-g_j(\norm{\overline{P}_j})|<0
        $. This completes the proof.
    \end{proof}

    \begin{remark}
        \label{remark:monoticity}
        In the proof of Proposition \ref{prop:assymptotic_stability}, we see that $\frac{d}{dr}\max_ig_i(\norm{\overline{P}_i})<0$. Therefore, the update law \eqref{eq:update_law_single_target} always reduces the cost defined in \eqref{eq:cost}. This also implies that if not all the targets have the same peak value $g_i\left(\norm{\overline{P}_i}\right)$, then the cost can be reduced by that update law. 
    \end{remark}
    \begin{remark}
        The $\log$ in \eqref{eq:update_law_single_target} is only one among many options of the update law. The essential feature is that \eqref{eq:update_law_single_target} preserves the total dwelling time among different targets, and Proposition \ref{prop:assymptotic_stability} would still hold if we used other update laws that preserved this property. One of the possible update laws is $\frac{d}{dr}{t}_{\text{on},i}=g_i(\norm{\overline{P}_i}/\sum_{j}g_i(\norm{\overline{P}_j})$ when $i\in{A}$, however the reason why we use the $\log$ structure instead of this is that in our simulations we observed that the $\log$ structure usually led to faster convergence.
    \end{remark}

    Now we show that the value achieved by update law \eqref{eq:update_law_single_target} is unique, i.e. it does not depend on the observation time distribution at $r=0$. 
    \begin{lemma}
        \label{lemma:unique_consensus}
        For a given cycle period T and a fixed visiting sequence $\Xi$, there is a unique observation times distribution such that  $g_i(\norm{\overline{P}_i})=g_j(\norm{\overline{P}_j})$, $\forall \ i,j \in \Xi.$
    \end{lemma}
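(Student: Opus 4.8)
The plan is to eliminate the coupling between targets by noting that, in a constrained visiting sequence, the steady‑state peak of each target is governed by a single scalar. Fix the period $T$ and the visiting sequence $\Xi$. Since every target is visited exactly once per cycle, its non‑observation time satisfies $t_{\text{off},i}=T-t_{\text{on},i}$, and the total dwell time $\sum_{i\in\Xi}t_{\text{on},i}=T-\sum_k d_{y_k,y_{k+1}}=:S$ is a constant fixed by $T$ and $\Xi$. Hence, writing $\tau_i:=t_{\text{on},i}$, the peak $\overline{P}_i$ depends only on $\tau_i$ along the admissible line, so I would define $f_i(\tau):=g_i(\norm{\overline{P}_i(\tau,\,T-\tau)})$ on $(0,S)$. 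The first step is to show $f_i$ is continuous and strictly decreasing: continuity follows from continuous dependence of the periodic Riccati solution on its parameters (Proposition \ref{prop:unique_attractive_sol_riccati_eq}) together with continuity of $\norm{\cdot}$ and $g_i$, and strict monotonicity follows from Proposition \ref{prop:derivative_ton_toff}, since along $t_{\text{off},i}=T-t_{\text{on},i}$ one has $\tfrac{d\overline{P}_i}{d\tau}=\tfrac{\partial \overline{P}_i}{\partial t_{\text{on},i}}-\tfrac{\partial \overline{P}_i}{\partial t_{\text{off},i}}\prec 0$, while $\norm{\cdot}$ and $g_i$ are strictly increasing.

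Given this, uniqueness follows from a short monotonicity/counting argument. Suppose $(\tau_i)_{i\in\Xi}$ and $(\tau_i')_{i\in\Xi}$ are two admissible allocations (so both sum to $S$), each with all peaks equal, say to $c$ and $c'$ respectively. If $c<c'$ then $f_i(\tau_i)=c<c'=f_i(\tau_i')$ for every $i$, and strict monotonicity of $f_i$ forces $\tau_i>\tau_i'$ for every $i\in\Xi$; summing over $i$ gives $S>S$, a contradiction, and the case $c>c'$ is symmetric. Hence $c=c'$, and then injectivity of each $f_i$ yields $\tau_i=\tau_i'$ for all $i$. Thus there is at most one equal‑peak allocation.

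For existence I would invoke Proposition \ref{prop:assymptotic_stability}: the balancing flow \eqref{eq:update_law_single_target} preserves the period (hence $S$), drives $\max_{i,j}|g_i(\norm{\overline{P}_i})-g_j(\norm{\overline{P}_j})|$ to zero, and remains in the compact set $\{\tau:\tau_i\ge 0\ \forall i,\ \sum_i\tau_i=S\}$, so any limit point of a trajectory is an equal‑peak allocation. A self‑contained alternative is to set $f_i:(0,S)\to(\underline f_i,\bar f_i)$ and $\Phi(c):=\sum_{i\in\Xi}f_i^{-1}(c)$ on the interval of admissible common values, where $\Phi$ is continuous and strictly decreasing with $\Phi(c)\to S+(\text{something positive})>S$ as $c$ tends to the lower endpoint, and locate the unique root of $\Phi(c)=S$ by the intermediate value theorem. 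I expect the genuinely delicate point to be this existence step: one must check that the relevant interval of admissible common values is nonempty, i.e., that the balancing flow cannot push some target of $\Xi$ to the inactive boundary $\tau_i=0$. This should follow from Remark \ref{remark:monoticity} (the running maximum peak only decreases along the flow) together with the fact that $f_i(\tau_i)\to\bar f_i$ as $\tau_i\to 0^+$, so a target active at the start stays uniformly bounded away from $\tau_i=0$; but making that boundary behaviour fully rigorous is where the care is required.
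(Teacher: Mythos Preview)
Your proposal is correct and follows essentially the same route as the paper: reduce to the single scalar $\tau_i=t_{\text{on},i}$ via $t_{\text{off},i}=T-t_{\text{on},i}$, use Proposition~\ref{prop:derivative_ton_toff} to get strict monotonicity of each $f_i$, and derive a contradiction from the sum constraint $\sum_i \tau_i = S$. The paper's proof is terser and only argues uniqueness (assuming two distinct consensus allocations must have distinct consensus values and then contradicting the sum constraint); your version is slightly more careful in separating the case $c=c'$ and in addressing existence, which the paper leaves implicit via Proposition~\ref{prop:assymptotic_stability}.
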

    \begin{proof}
        Note that $g_i(\norm{\overline{P}_i})$ is a function only of $t_{\text{on},i}$, since the period $T$ is fixed (i.e. $t_{\text{off},i}=T-t_{\text{on},i}$). Additionally, Prop. \ref{prop:derivative_ton_toff} says that $g_i(\norm{\overline{P}_i})$ is strictly decreasing with $t_{\text{on},i}$.
        
        Suppose there are two different sets of dwelling times ($t_{\text{on},i}$ and $t_{\text{on},i}'$), and consequently different costs $g_{con}$ and $g_{con}'$ such that all targets have the same peak value. Without loss of generality, we assume $g_{con}<g_{con}'$, which implies that $t_{\text{on},i}>t_{\text{on},i}'$. However, since the period is the same, we must have $\sum_i t_{\text{on},i}=\sum_it_{\text{on},i}'$, which yields a contradiction.
    \end{proof}
    
    Finally, we give a specialization of Proposition \ref{prop:same_peak} to the particular case discussed in this section. 
    \begin{proposition}
        For a fixed visiting sequence $\Xi$ and a given cycle period $T$, the dwelling sequence under the update law \eqref{eq:update_law_single_target} converges to the optimal dwelling sequence (i.e., to $\mathcal{T}^*$) that minimizes the cost function $J(\Xi,\mathcal{T})$ in \eqref{eq:cost}.
    \end{proposition}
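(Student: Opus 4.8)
The plan is to prove the statement in two steps: first show that the flow \eqref{eq:update_law_single_target} converges to the \emph{unique} dwelling sequence at which all active peaks coincide, then show that this peak-equalizing dwelling sequence is exactly the minimizer $\mathcal{T}^*$ of $J(\Xi,\cdot)$ among all dwelling sequences compatible with period $T$. For the first step, I would start from the observation already made inside the proof of Proposition~\ref{prop:assymptotic_stability} that $\sum_i \frac{dt_{\text{on},i}}{dr}=0$ along \eqref{eq:update_law_single_target}; since each target is visited once per cycle the travel time per cycle is the constant $\sum_k d_{y_k,y_{k+1}}$, so conserving $\sum_i t_{\text{on},i}$ is the same as keeping the period equal to $T$, and together with $t_{\text{on},i}\ge 0$ this confines the trajectory to the compact simplex $\Delta_T=\{t\ge 0:\ \sum_i t_{\text{on},i}=T-\sum_k d_{y_k,y_{k+1}}\}$. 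A short sign check handles the boundary: when $t_{\text{on},i}$ is small, $\overline{P}_i$ is large by Proposition~\ref{prop:derivative_ton_toff}, so $g_i(\norm{\overline{P}_i})>g_{\text{avg}}$ and the $i$-th component of \eqref{eq:update_law_single_target} is strictly positive, so the flow never leaves $\Delta_T$ and is defined for all $r\ge 0$. Writing $V(t)=\max_{i,j\in\mathcal{A}}|g_i(\norm{\overline{P}_i})-g_j(\norm{\overline{P}_j})|$, Proposition~\ref{prop:assymptotic_stability} gives $V(t(r))\to 0$, while Lemma~\ref{lemma:unique_consensus} says the zero set of $V$ inside $\Delta_T$ is a single point $\mathcal{T}^\circ$; since $\Delta_T$ is compact and $V$ continuous, every subsequential limit $t'$ of $t(r)$ has $V(t')=0$, hence $t'=\mathcal{T}^\circ$, and therefore $t(r)\to\mathcal{T}^\circ$.

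For the second step, I would first note that a minimizer $\mathcal{T}^*$ of $J(\Xi,\cdot)=\max_i g_i(\norm{\overline{P}_i})$ over the compact set $\Delta_T$ exists, since $J(\Xi,\cdot)$ depends continuously on the dwelling sequence. Suppose $\mathcal{T}^*\neq\mathcal{T}^\circ$. Because $\mathcal{T}^\circ$ is the only point of $\Delta_T$ at which all active peaks coincide, they do not all coincide at $\mathcal{T}^*$; running \eqref{eq:update_law_single_target} from $t(0)=\mathcal{T}^*$, Remark~\ref{remark:monoticity} --- equivalently the computation in the proof of Proposition~\ref{prop:assymptotic_stability} that $\frac{d}{dr}\max_i g_i(\norm{\overline{P}_i})<0$ whenever $\max_i g_i(\norm{\overline{P}_i})\neq g_{\text{avg}}$ --- gives $J(\Xi,t(r))<J(\Xi,\mathcal{T}^*)$ for all small $r>0$, contradicting optimality of $\mathcal{T}^*$. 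Hence $\mathcal{T}^*=\mathcal{T}^\circ$, and by the first step \eqref{eq:update_law_single_target} converges to $\mathcal{T}^*$. Conceptually this is Proposition~\ref{prop:same_peak} specialized to a fixed period: by Proposition~\ref{prop:derivative_ton_toff} together with $t_{\text{off},i}=T-t_{\text{on},i}$, each $g_i(\norm{\overline{P}_i})$ is continuous and strictly decreasing in $t_{\text{on},i}$, so under the fixed budget $\sum_i t_{\text{on},i}$ the minimax value is attained precisely when the peaks are equalized, and Lemma~\ref{lemma:unique_consensus} makes that allocation unique; one could alternatively phrase the contradiction as the explicit re-balancing (move dwell time from a peak-minimizing to a peak-maximizing target) used in the proof of Proposition~\ref{prop:same_peak}.

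The step I expect to require the most care is the first one: upgrading the Lyapunov-type conclusion of Proposition~\ref{prop:assymptotic_stability} (asymptotic stability of the \emph{scalar} disagreement $V$) to actual convergence of the \emph{vector} of dwell-times to the specific point $\mathcal{T}^\circ$. This is legitimate only because the flow is trapped in the compact set $\Delta_T$ (constant period) and because Lemma~\ref{lemma:unique_consensus} forces the zero set of $V$ to be a singleton; the residual subtlety is the boundary behavior as some $t_{\text{on},i}\downarrow 0$, which the sign argument above controls, also ensuring no target active at $r=0$ can become inactive along the flow.
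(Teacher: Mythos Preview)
Your proof is correct and takes essentially the same route as the paper: the update law strictly reduces the cost whenever the peaks are unequal, the consensus point is unique by Lemma~\ref{lemma:unique_consensus}, so the limit must be the minimizer by a contradiction argument. You supply more technical care than the paper's terse proof---the compactness argument upgrading Proposition~\ref{prop:assymptotic_stability}'s scalar Lyapunov conclusion to actual convergence of the dwell-time vector, the boundary analysis, and the existence of $\mathcal{T}^*$---but the core argument is identical.
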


    \begin{proof}
        For any dwelling sequence such that the period is $T$, the update law \eqref{eq:update_law_single_target} always reduces the cost while maintaining $T$ constant if $g_i(\norm{\overline{P}_i})$ is not the same for all the targets. Since there is a unique way such that every target has the same peak (and the update law \eqref{eq:update_law_single_target} ensures convergence to it), the dwelling sequence after convergence of \eqref{eq:update_law_single_target} has to be optimal, otherwise \eqref{eq:update_law_single_target} would be able to improve the cost.
    \end{proof}
    
    
    For a fixed visiting sequence $\Xi$, we so far have shown a simple way to optimize the dwelling sequence given a fixed cycle period $T$. However, we have not addressed the problem optimizing the value of the cycle period $T$. For this task, we use golden ratio search \cite{kiefer1953sequential} that finds the global optimum in a unimodal function and local optima in a generic single variable function. Moreover, in Appendix \ref{app:unimodality_proof} we show that when the target internal state $\phi_i$ is a scalar, this function is indeed unimodal. The complete golden ratio search procedure is given in Alg. \ref{alg:compute_optimal_visitation_times}. The function $g_{con}(T)$ corresponds to running the update law in \eqref{eq:consensus_update_discrete} until convergence and the value $g_{con}(T)$ is the value of $g_{\text{avg}}$ at the final iteration and $r=(1+\sqrt{5})/2$ (i.e., the search intervals are divided according to the golden ratio).
    \begin{algorithm}
\caption{Search for the Optimal Cycle Period}
\label{alg:compute_optimal_visitation_times}
    \begin{algorithmic}[1]
    \State{\bf Input: $T_{\min}$, $T_{\max}$}.
    \State{$T_1 \leftarrow T_{\max} - (T_{\max}-T_{\min})/r$}
    \State{$T_2 \leftarrow T_{\min} + (T_{\max}-T_{\min})/r$}
    \While{$|g_{con}(T_2 )-g_{con}(T_1 )|<\epsilon$}
    \If{$f(T_2)>f(T_1)$}
    \State{$T_{\max}\leftarrow T_2$}
    \Else{$\ T_{\min}\leftarrow T_1$}
    \EndIf
    \State{$T_1 \leftarrow T_{\max} - (T_{\max}-T_{\min})/r$}
    \State{$T_2 \leftarrow T_{\min} + (T_{\max}-T_{\min})/r$}
    \EndWhile
    \State{\bf Return: $(T_1+T_2)/2$}
\end{algorithmic}
\end{algorithm}

    \paragraph*{\textbf{Optimal Visiting Sequence}}
    
    We now focus on determining the optimal visiting sequence (i.e., $\Xi^*$). In particular, we will show that the optimal visiting sequence is the minimum length tour that visits every target of interest (active) in the network $\mathcal{G}=(\mathcal{V},\mathcal{E})$.

    \begin{proposition}
        \label{prop:optimal_cycle_tsp}
        Among all the constrained visiting sequences where all targets are visited, for any dwelling sequence $\mathcal{T}$, the visiting sequence given by the TSP solution ($\Xi=\Xi_{TSP}$) has the lowest cost $J(\Xi,\mathcal{T})$\eqref{eq:cost}.
    \end{proposition}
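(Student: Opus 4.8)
The plan is to show that, with the dwelling sequence $\mathcal{T}$ held fixed, the cost $J(\Xi,\mathcal{T})$ depends on the visiting sequence $\Xi$ only through the resulting cycle period $T$, and that this period is monotone in the quantity the TSP minimizes, namely the total tour length. First I would observe that in a constrained visiting sequence every target $i$ is visited exactly once per cycle, so it is observed for a total time $t_{\text{on},i}$ and left unobserved for the remainder of the cycle, i.e. $t_{\text{off},i}=T-t_{\text{on},i}$. Since (by the target's-perspective construction of Section \ref{subsec:simple_notation} together with Proposition \ref{prop:unique_attractive_sol_riccati_eq}) the steady-state covariance $\bar{\Omega}_i(t)$, and hence the peak $\overline{P}_i$, of target $i$ is fully determined by the pair $(t_{\text{on},i},t_{\text{off},i})$, the entire effect of $\Xi$ on every target's uncertainty is channelled through the single scalar $T$.

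Next I would express the cycle period explicitly as $T(\Xi,\mathcal{T})=\sum_{i\in\mathcal{V}}t_{\text{on},i}+\ell(\Xi)$, where $\ell(\Xi)$ denotes the length of the closed tour induced by $\Xi$ (sum of the $d_{y_k,y_{k+1}}$ with cyclic indices). For a fixed $\mathcal{T}$ the first sum is a constant, so $T(\Xi,\mathcal{T})$ is minimized precisely when $\ell(\Xi)$ is minimized, and by definition the minimizer over all tours visiting every target is $\Xi=\Xi_{TSP}$. Writing $T=T(\Xi,\mathcal{T})$ and $T^\star=T(\Xi_{TSP},\mathcal{T})$, we therefore get $T\ge T^\star$ for every constrained $\Xi$, which translates into $t_{\text{off},i}\ge t_{\text{off},i}^\star$ for every target $i$, with $t_{\text{on},i}$ unchanged.

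Then I would invoke Proposition \ref{prop:derivative_ton_toff}: since $\partial\overline{P}_i/\partial t_{\text{off},i}\succ 0$, integrating this matrix inequality along the segment that increases $t_{\text{off},i}$ from $t_{\text{off},i}^\star$ to $t_{\text{off},i}$ (holding $t_{\text{on},i}$ fixed) yields $\overline{P}_i\big|_{\Xi}\succeq\overline{P}_i\big|_{\Xi_{TSP}}$; hence $\norm{\overline{P}_i\big|_{\Xi}}\ge\norm{\overline{P}_i\big|_{\Xi_{TSP}}}$ because $\norm{\cdot}$ is monotone on the positive semidefinite cone, and hence $g_i(\norm{\overline{P}_i\big|_{\Xi}})\ge g_i(\norm{\overline{P}_i\big|_{\Xi_{TSP}}})$ because $g_i$ is strictly increasing. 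Taking the maximum over $i\in\mathcal{V}$ and applying Lemma \ref{lemma:max_on_peak} to rewrite $J$ as that maximum gives $J(\Xi,\mathcal{T})\ge J(\Xi_{TSP},\mathcal{T})$, which is the claim.

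The step I expect to demand the most care is the reduction in the first paragraph: one must be sure that reordering the visits (without changing which targets are active or their individual dwell-times) genuinely leaves each target's covariance trajectory invariant up to a time shift, so that indeed only $T$ matters — this is exactly where the ``each target visited at most once per cycle'' hypothesis is essential, and the argument would break for unconstrained sequences in which a target's off-time is split into several pieces. A minor point to dispose of separately is the degenerate case in which some target has $t_{\text{on},i}=0$ (inactive) or unstable $A_i$: then its $\limsup$ uncertainty is either a schedule-independent constant or $+\infty$, so it cannot reverse the inequality, and the argument above applies verbatim to the remaining active targets.
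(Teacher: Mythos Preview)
Your proof is correct and follows essentially the same route as the paper's: both arguments observe that in a constrained visiting sequence each target is visited exactly once, so $t_{\text{off},i}$ equals the total dwell-time of the other targets plus the total tour travel time, the latter being minimized by $\Xi_{TSP}$; both then invoke Proposition~\ref{prop:derivative_ton_toff} to conclude that the peak covariances, and hence the cost, are minimized. Your write-up is somewhat more explicit (spelling out $T=\sum_i t_{\text{on},i}+\ell(\Xi)$, citing Lemma~\ref{lemma:max_on_peak}, and handling the inactive/unstable edge cases), but the underlying idea is identical.
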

    
    \begin{proof}
        Recall that $t_{\text{off},i}$ is the sum of all the entries in the dwelling sequence $\mathcal{T}$ (omitting $t_{\text{on},i}$) and the sequence of travel-times corresponding to the visiting sequence $\Xi$. Since $\Xi$ is a constrained visiting sequence, every target is visited only once. Hence, the visiting sequence given by the TSP solution $\Xi_{TSP}$ is the one with the least amount of total travel-time and, according to Prop. \ref{prop:derivative_ton_toff}, $g_i(\norm{\overline{P}_i})$ strictly increases with $t_{\text{off},i}$. Since for any dwelling sequence $\mathcal{T}$, the values of $t_{\text{off},i},\ \forall i$, are minimized if $\Xi = \Xi_{TSP}$ we conclude that the visiting sequence $\Xi = \Xi_{TSP}$ yields the lowest cost.
    \end{proof}

     This proposition gives a good insight into the optimal visiting sequence. However, we highlight that our procedure for optimizing the dwelling sequence does not require the agent to physically visit each target only once during a cycle; it only requires the agent to spend a non-zero dwell-time at most in one of the visits to each target. This implies that we can use the algorithm developed in this section for visiting sequences where each target may be visited multiple times during a cycle, but the agent will dwell there on only one of those visits. This is an important distinction because the TSP cycle may be longer than the shortest path that visits all the targets while also allowing any target to be visited multiple times. Additionally, some graphs do not have a closed path that visits every target only once, the simplest example of this being a graph with three nodes and two edges, one that connects nodes 1 and 2 and the other connecting nodes 2 and 3. Therefore, in every closed cycle, node 2 has to be visited at least twice. In order to overcome this issue, a common approach is to preprocess the graph by replacing the edge between two nodes by the shortest path between them.
    
    \begin{remark}
        The problem of computing the optimal TSP cycle is NP-hard. However, efficient sub-optimal solutions are available (see e.g. \cite{TANG2000267}). The approach we discussed for optimizing the dwelling sequence does not rely on having the optimal TSP cycle and indeed can handle any cycle as long as every target is sensed (with a non-zero dwell-time) exactly once. Therefore, if finding the TSP cycle is computationally infeasible, we still can use a sub-optimal cycle (visiting sequence).
    \end{remark}
    
However, note that in an optimal visiting sequence, inactive targets do not necessarily have to be a part of the agent's tour (and, even if they are, the agent will not have to dwell on them). Thus, when searching for the optimal visiting sequence, we can exclude the agent to move through inactive targets, as long as the cost is benefited from this exclusion.

We now describe a procedure to obtain this optimal visiting sequence. Intuitively, this procedure starts by finding the optimal cycle that visits every target in the network and computing the corresponding optimal peak uncertainty and dwell-time sequences. Then, the algorithm checks if every target is active, and if that is the case, the algorithm stops.  However, if there is some inactive target, the algorithm checks if it is beneficial to permanently exclude one of such targets from the search (note the distiction from {\it inactive} and {\it excluded}: inactive targets may become active at a subsequent step in the search, but excluded targets no longer can become active, and are completely removed from the search process). First, the inactive target with the lowest peak uncertainty is excluded. After this exclusion, two stopping criteria are checked: (i) are all the remaining targets active?, or (ii) is the new consensus peak uncertainty $g_{con}$ lower than the steady state uncertainty of the excluded target? If one of these criteria is met, no further change in the visiting sequence will lower the cost, as the limiting factor is the already excluded target. If none of the excluding criteria are met, the algorithm proceeds to remove the next inactive target.

Algorithm \ref{alg:single_visit_visiting_sequence} describes this procedure more formally.  In this algorithm, MinimumLengthCycle($\mathcal{W}$) gives the minimum length cycle that visits all the nodes contained in $\mathcal{W}$. This function can be implemented as the solution of the TSP. 


    \begin{algorithm}
    \caption{Optimal Visiting Sequence Computation}
    \label{alg:single_visit_visiting_sequence}
    \begin{algorithmic}[1]
    \State $\mathcal{W} \leftarrow \mathcal{V}$; 
    \State $\Xi \leftarrow \text{MinimumLengthCycle}(\mathcal{W})$;
    \State $[\mathcal{T},C,g_{con}]\leftarrow\text{OptimizeDwelling}(\Xi)$;
    \While{True}
        \State $[c_{\min},i_{\min}] \leftarrow \min(C)$;
        \Comment{$C=$ peak uncertainties.}
        \If{$c_{\min}=g_{con}$} 
            \State \textbf{Break}; 
            \Comment{Every visited target is active.}
        \Else{
            \State $\mathcal{W} \leftarrow \mathcal{W} \backslash \{i_{\min}\}$;
            \State $\Xi \leftarrow \text{MinimumLengthCycle}(\mathcal{W})$;
            \State $[\mathcal{T},C,g_{con}]\leftarrow\text{OptimizeDwelling}(\Xi);$
            \If{ $g_{con}<c_{\min}$} 
                \State \textbf{Break}; \Comment{Inactive target is the bottleneck.}
            \EndIf
            }
        \EndIf
        
    \EndWhile
    \State{\bf Return: $\Xi,\mathcal{T},C,g_{con}$} 
    \end{algorithmic}
    \end{algorithm}

    \paragraph*{\textbf{Some Simulation Results}}
    \begin{figure*}[htp!]
    \centering
    \begin{subfigure}[t]{0.31\textwidth}
        \centering\includegraphics[width=\textwidth]{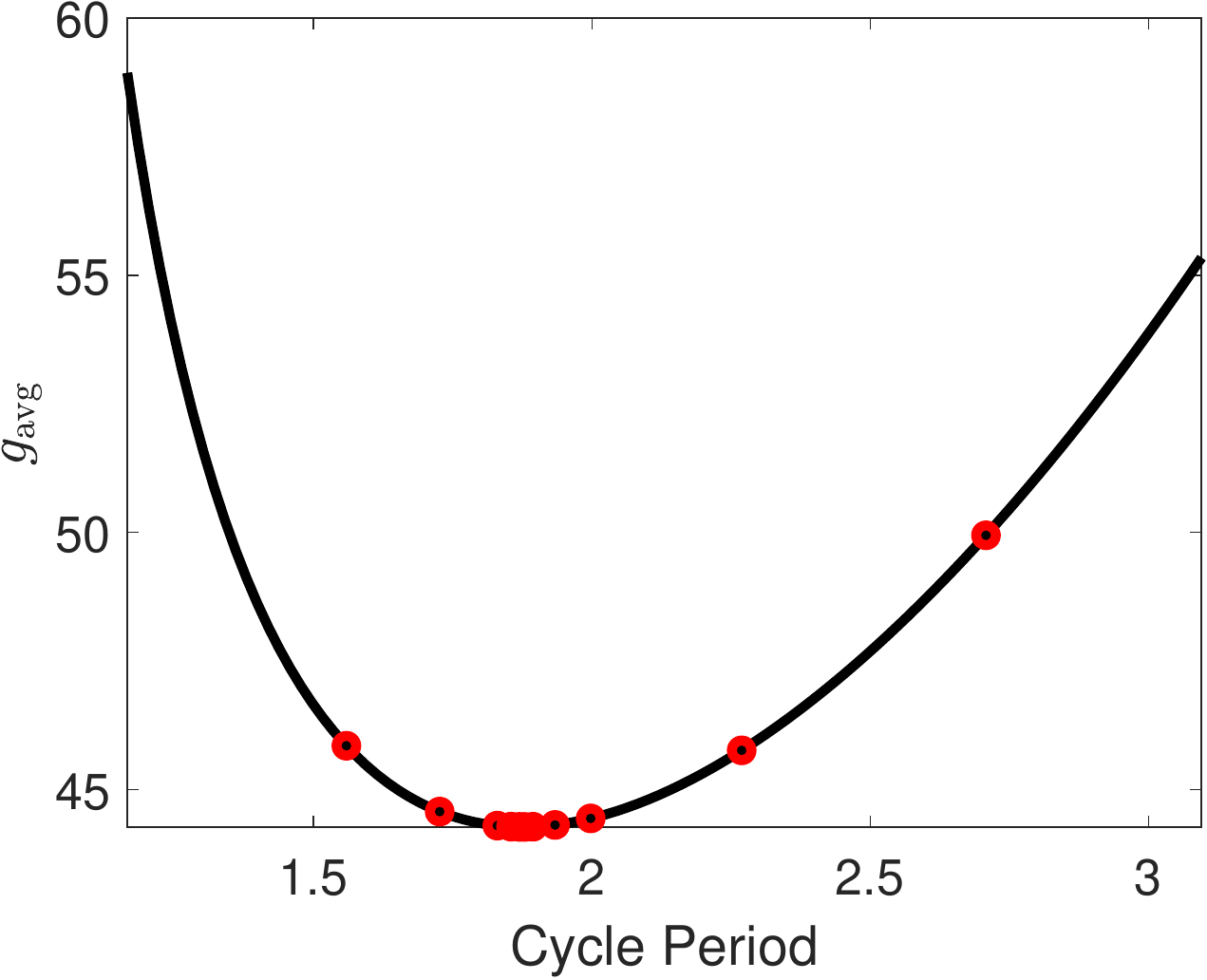}
        \caption{Peak value after balance among targets, as a function of cycle period.}
        \label{fig:peak_period}
    \end{subfigure}\hfill
    \begin{subfigure}[t]{0.32\textwidth}
        \centering\includegraphics[width=\textwidth]{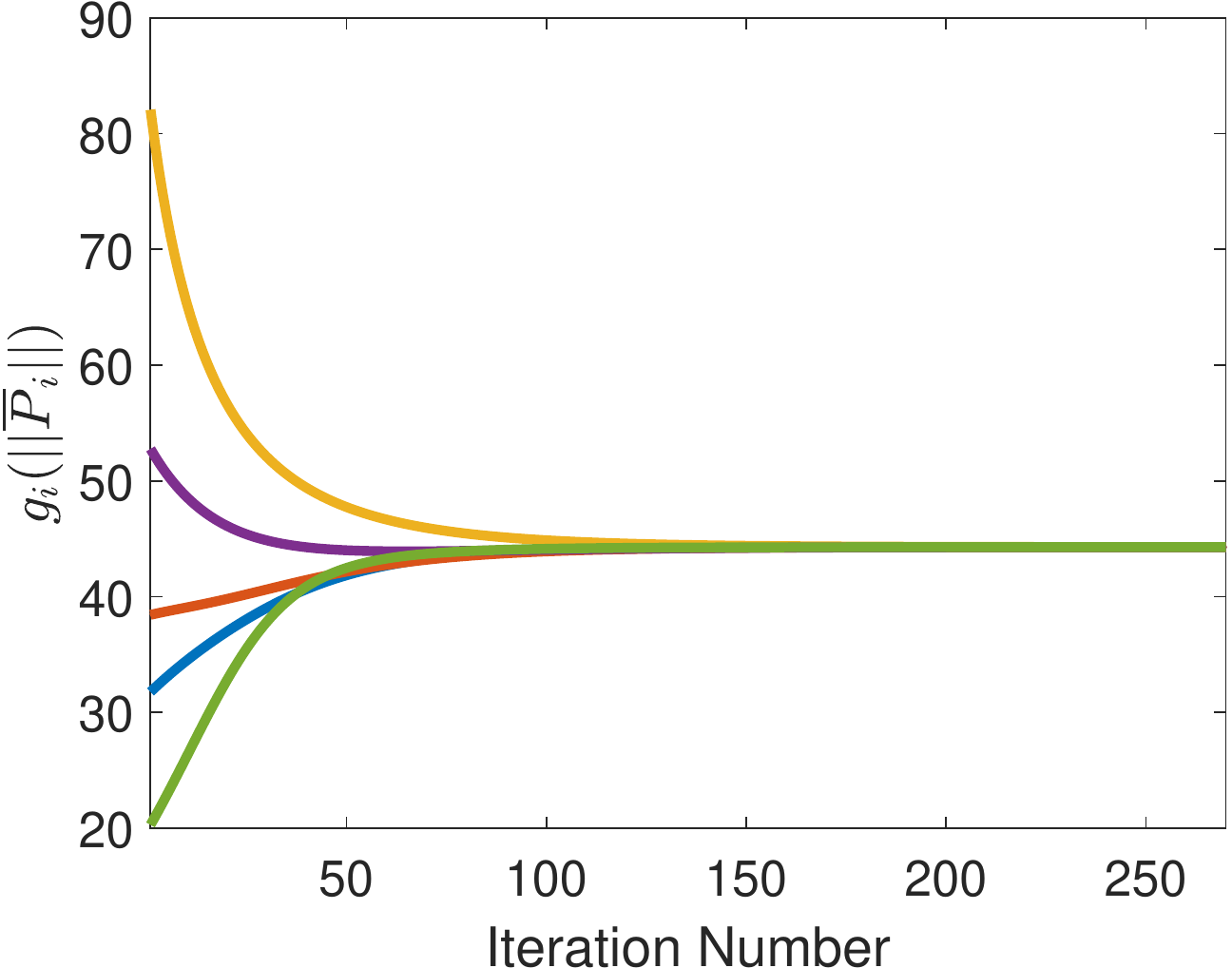}
        \caption{Peak uncertainty at the optimal period.}
        \label{fig:cost_iteration}
    \end{subfigure}\hfill
    \begin{subfigure}[t]{0.32\textwidth}
        \centering\includegraphics[width=\textwidth]{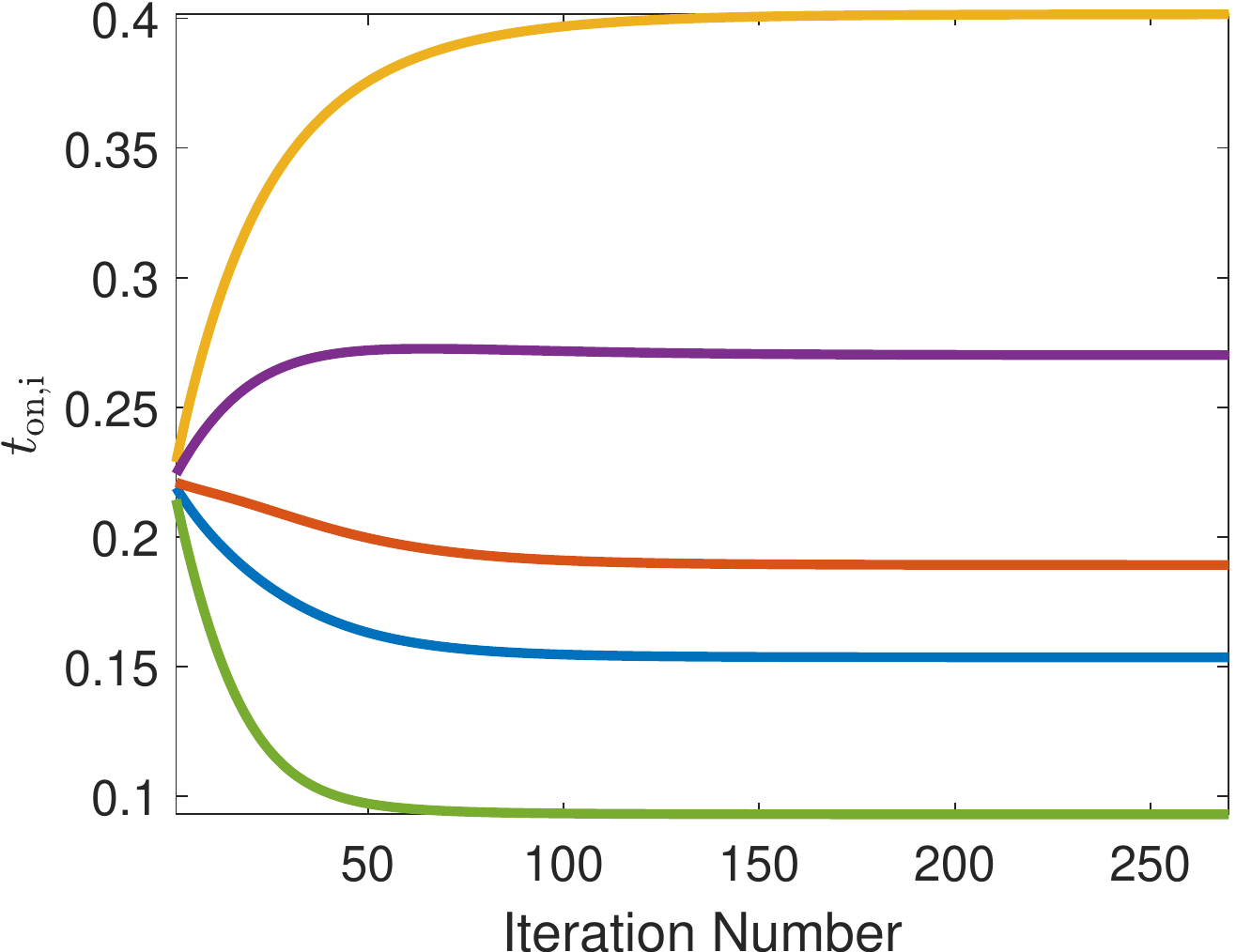}
        \caption{Dwell-time at the optimal period.}
        \label{fig:time_iteration}
    \end{subfigure}
    \caption{Results of simulating Algorithm \ref{alg:compute_optimal_visitation_times}. In (a), the balanced peak uncertainty, as a function of the cycle period. The red dots mark the values of $T$ that were explored by the golden ratio search. In (b)-(c), we show the evolution of the peak uncertainty and the dwell-time for each target.}
    \label{fig:results_1_one_target}
    \vspace{-3mm}
    \end{figure*}
    
    We have implemented the model described in \eqref{eq:dynamics_phi} and \eqref{eq:observation_model_ij}, with parameters indicated in Table \ref{tab:sim_param}. Note that targets are also assigned colors that will be used to identify each target in figures containing the simulation results. For simplicity, the internal states of the targets were assumed to be scalars. The targets location was drawn from a uniform distribution in $[0,0.5]\times[0,0.5]$. The target locations are displayed in Fig. \ref{fig:traj_simulation_single_agent} and the graph was assumed to be fully connected, with edge costs being the distance between two targets. Moreover, for the definition of the optimization goal as in \eqref{eq:cost}, we used $g_i(\xi)=\xi$, $\forall i$, and $\norm{\Gamma}=|\Gamma|$.
    
\begin{table}[h!]
\centering
\caption{Parameters used in the simulation.}
\label{tab:sim_param}
\begin{tabular}{ |c||c|c|c|c|c| } 

 \hline
 Target & 1 & 2 & 3 & 4 & 5  \\ 
 \hline
 Color& blue & red & yellow & purple & green  \\ 
 \hline
 $A_i$ & 0.3487 & 0.1915 & 0.4612 & 0.2951 & 0.1110  \\
 \hline
 $Q_i$ & 1.1924 & 1.2597 & 0.8808 & 1.7925 & 0.4363  \\
 \hline
 $R_i$ & 2.3140 & 7.1456 & 4.2031 & 5.2866 & 7.5314  \\
 \hline
\end{tabular}
\vspace{-3mm}
\end{table}

    For the visiting sequence, we considered the TSP cycle and Alg. \ref{alg:compute_optimal_visitation_times} was then used to find the corresponding optimal dwelling sequence. The parameter $k_p$ in Eq. \eqref{eq:consensus_update_discrete} was chosen to be $10^{-2}$ and we set $[T_{\min},T_{\max}]=[0.1t_{\text{travel}},3t_{\text{travel}}]$, where $t_{\text{travel}}$ is the total travel-time required to complete the cycle. 
    
\begin{figure}[htp!]
    \centering
        \begin{subfigure}[t]{0.55\columnwidth}
    \vskip 0pt
    \includegraphics[width=\columnwidth]{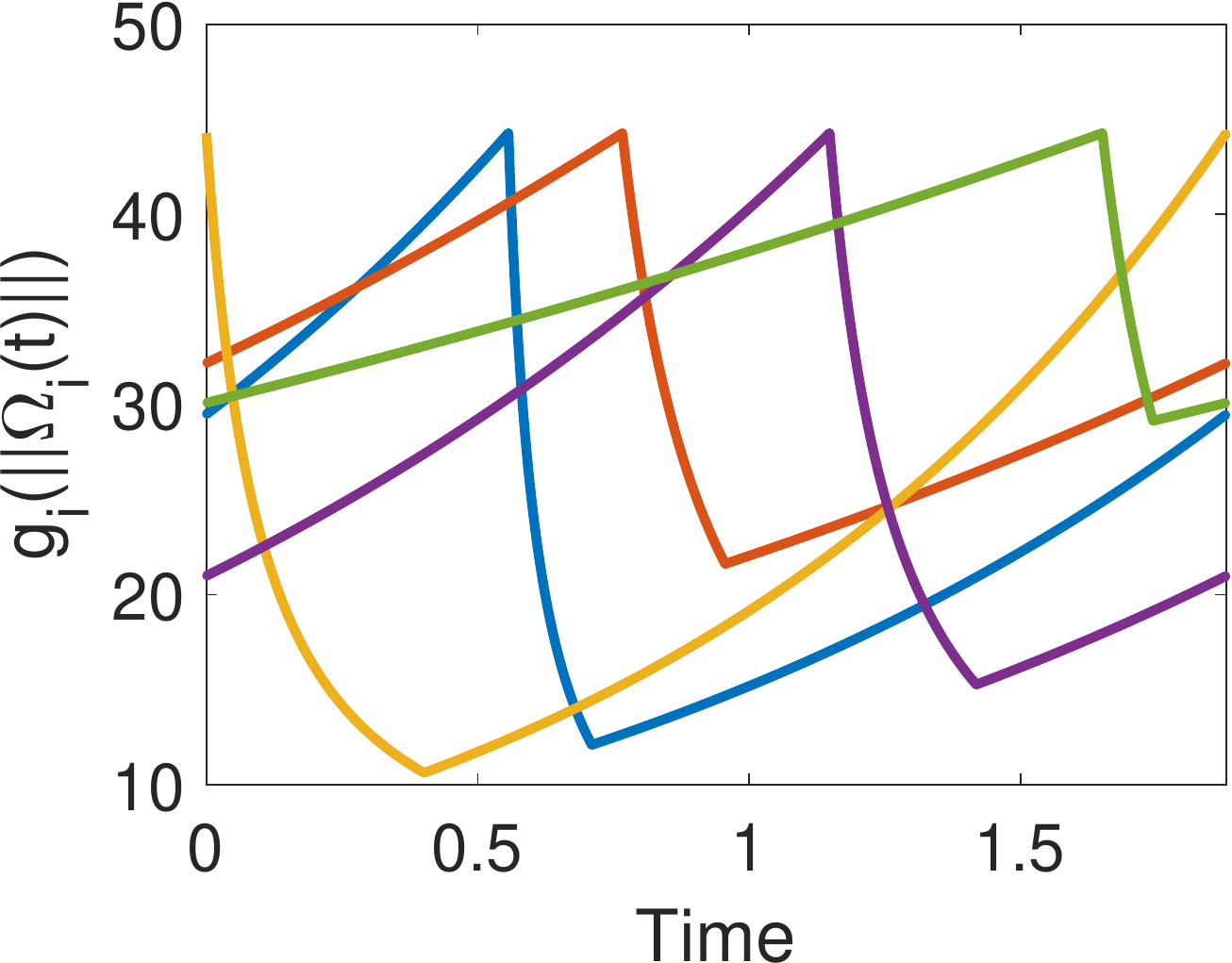}%
    \caption{Covariance over one period.}
    \end{subfigure}%
    \begin{subfigure}[t]{0.45\columnwidth}
    \vskip 5pt
    \includegraphics[width=\columnwidth]{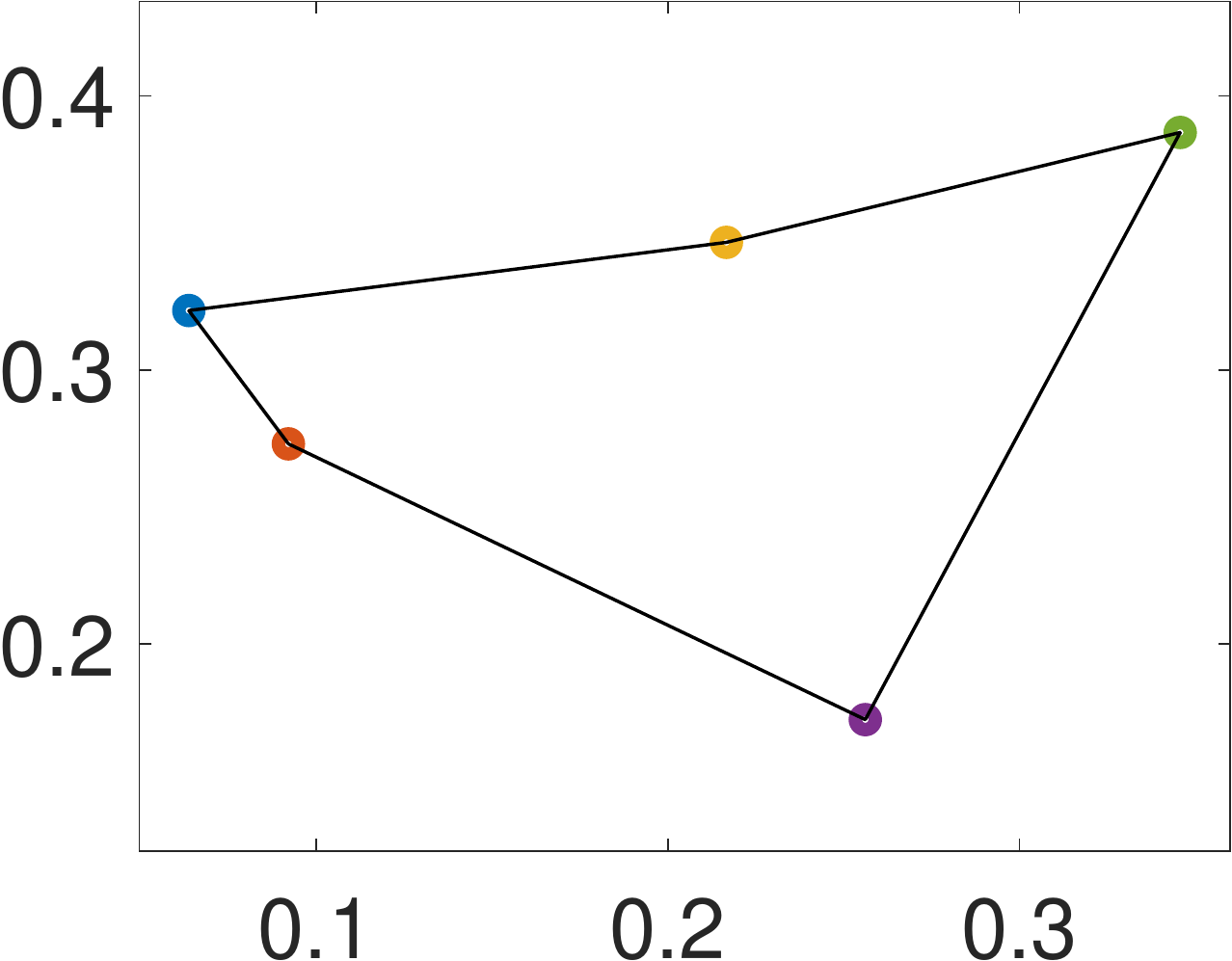}%
    \caption{Agent trajectory (black) and target locations (colored).}
    \label{fig:traj_simulation_single_agent}%
    \end{subfigure}
    \caption{Results obtained after optimizing the visiting and dwelling sequences.}
    \label{fig:results_2_one_target}
    \vspace{-3mm}
    \end{figure}
    
    The results are shown in Figs. 
    \ref{fig:results_1_one_target} and \ref{fig:results_2_one_target}. In particular, Fig. \ref{fig:results_1_one_target} shows the evolution of the steady-state covariance matrix norm over one complete period of the agent trajectory, under the optimal dwelling sequence. In Fig. \ref{fig:results_1_one_target}, details of the optimization process are highlighted. In Fig. \ref{fig:peak_period}, we can see how the peak uncertainty behaved as a function of the period (after balancing the dwelling times among targets). Moreover, this figure also highlights that the golden ratio search scheme was efficiently converged to what appears to be a global minimum. Figs. \ref{fig:cost_iteration} and \ref{fig:time_iteration} show how the dwelling sequence and the peak covariance varied while using the update law \eqref{eq:update_law_single_target}. Initially, all targets are visited for the same amount of time. However, as the iterations go on, the dwell-time spent on some targets becomes larger than that of the others. As expected, in the final iteration, all the peak covariances have converged to the same value (consensus).




\section{Optimal Dwelling Sequence on an Unconstrained Visiting Sequence}
\label{sec:multiple_visits}

In Section \ref{sec:optimal_dwelling_single_visit}, we only considered situations where each target is (effectively) visited at most once in every cycle (i.e., we assumed that visiting sequences are constrained). Then, we designed a procedure that computes the optimal dwelling sequence for a given visiting sequence and made some remarks about selecting the optimal visiting sequence. This section extends these ideas to scenarios where targets can be visited multiple times in a cycle (we call such visiting sequences ``unconstrained''). In particular, we design an algorithm that aims to obtain a dwelling sequence such that the peak uncertainty (i.e., $g_i(\norm{\bar{P}_i^k})$) is the same at each visit $k$ for every active target $i$. In other words, the goal is not only to have every target with the same maximum peak uncertainty, but also that the peak uncertainties within visits to the same target have the same peak value. Note that in this section, we assume the visiting sequence as a given (the process of determining the optimal visiting sequence is discussed in Section \ref{Sec:GreedyExplorations}).

As we now consider situations where targets can be visited multiple times, to maintain the same notation as before, let 
\begin{equation}
    t_{\text{on},i}=\sum_{r=1}^{N_{i}}t_{\text{on},i}^r,
\end{equation}
where the index $r$ refers to the $r$\textsuperscript{th} visit to the target of interest, in the given visiting sequence $\Xi$. Thus, we maintain the same update law as in the case with constrained visiting sequences, given in \eqref{eq:consensus_update_discrete}.

However, only defining the total dwell-time spent at each target is not sufficient to determine the peak uncertainties as one has to determine how long the agent dwells at each of its visits at each target. 
To this end, we consider an additional optimization step, that takes place for each target for each total dwell-time update step $k$ (of \eqref{eq:consensus_update_discrete}). In particular, starting from an arbitrary valid distribution of dwell-times $t_{\text{on},i}^p[1]$ such that $\sum_{p=1}^{N_i}t_{\text{on},i}^p[1]=t_{on,i}[1]$, we propose the update law:
\begin{equation}
    t_{\text{on},i}^p[m+1] = t_{\text{on},i}^p[m]+k_p \log \left(\frac{g_i\left(\norm{\overline{P}_i^p}\right)}{g_{\text{avg,i}}}\right),
\end{equation}
where $g_{\text{avg,i}}=\left(\Pi_{p=1}^{N_i} \bar{P}_{i}^p\right)^{\frac{1}{N_i}}$. The intuition behind this update law is that, similar to how a total dwell-time can be split among different targets to reach the same peak value, it can be split within the same target in order to yield the same peak value (at the beginning of each visit to that target). To better illustrate the relationship between both update stages, the complete optimization process is described in Alg. \ref{alg:compute_visitation_times_multiple_visits}. We use the procedure in Alg. \ref{alg:compute_visitation_times_multiple_visitsMV} (referenced as ``MV'') that is responsible for balancing the dwelling time of multiple visits to the same target. Additionally, we point out that the procedure ``computeToff'' is responsible for computing the time between subsequent visits to the same target using \eqref{eq:def_toff} and ``SSPeaks'' computes the steady state peak uncertainty values.

\begin{algorithm}[!h]
\caption{Computing the optimal dwelling sequence}
\label{alg:compute_visitation_times_multiple_visits}
    \begin{algorithmic}[1]
    \State{{\bf Input:} Visiting sequence $\Xi$, Cycle period $T$}.
    \State{$k\leftarrow 1$};
    \State{$t_{\text{on},i}[k] \leftarrow 1/N$};
    \For{$i\in\{1,...,N\}$}
    \For{$p\in\{1,...,N_i\}$}
        \State{$t_{\text{on},i}^p[k] \leftarrow t_{\text{on},i}[k]/N_i$};
    \EndFor
    \EndFor
    \State{$g_{prev} \leftarrow \infty$};
    \While{True}
    \For{$i\in\{1,...,N\}$}
    \State{$[\bar{P}_{i}[k],t_{\text{on},i}^p[k]]\leftarrow \text{MV}(i,t_{\text{on},i}[k],\tilde{t}_{\text{on},j}^p[k],T,\Xi)$};
    \EndFor
    \State{$g_{\text{avg}}\leftarrow\left(\prod_{j\in\mathcal{A}}g_j\left(\norm{\overline{P}_j}\right)\right)^{\frac{1}{M}}$};
    \If{$|g_{avg}-g_{prev}|<tol$}
        \State{\textbf{Break}};
    \Else
        \State{$t_{on,i}[k+1] = t_{on,i}[k] +k_p\log\left(\frac{g_i\left(\norm{\overline{P}_i}\right)}{g_{\text{avg}}}\right)$};
        \For{$i\in\{1,...,N\}$}
        \For{$p\in\{1,...,N_i\}$}
            \State{$\tilde{t}_{\text{on},i}^p[k+1] \leftarrow \frac{t_{\text{on},i}[k+1]}{t_{\text{on},i}[k]}\tilde{t}_{\text{on},i}^p[k]$};
        \EndFor
        \EndFor
        \State{$k\leftarrow k+1$};
    \EndIf
    
    \EndWhile    
    
    \State{\bf Return: $\bar{P}_i^p[k],t_{\text{on},i}[k]$}
\end{algorithmic}
\end{algorithm}

\begin{algorithm}[!h]
\caption{Computing dwell-times at a target: Process MV}
\label{alg:compute_visitation_times_multiple_visitsMV}
    \begin{algorithmic}[1]
    \State{\bf Input: $t$, $t_{\text{on},i}$, $t_{\text{on},j\neq i}^p$, $T$, $\Xi$.}
    \State{$t_{\text{off},i}^p\leftarrow \text{computeToff}(t_{\text{on},j\neq i}^p, T, \Xi)$};
    \State{$g_{prev,i} \leftarrow \infty$};
    \State{$m\leftarrow 1$};
    \While{True}
    \State{$[\bar{P}_i^1,...,\bar{P}_i^{N_i}]=\text{SSPeaks}(t_{\text{on},i}^p,t_{\text{off},i}^p)$};
    \State{$g_{\text{avg},i} \leftarrow \left(\prod_{j\in\mathcal{A}}g_j\left(\norm{\overline{P}_j}\right)\right)^{\frac{1}{M}}$};
    \If{$|g_{\text{avg},i}-g_{prev,i}|<tol$}
    \State{\textbf{Break}};
    \Else
    \For{$p\in\{1,...,N_i\}$}
            \State{$t_{\text{on},i}^p[m+1] = t_{\text{on},i}^p[m]+k_p \log \left(\frac{g_i\left(\norm{\overline{P}_i^p}\right)}{g_{\text{avg,i}}}\right)$};
    \EndFor
    \State{$g_{prev,i}\leftarrow g_{\text{avg},i}$};
    \State{$m\leftarrow m+1$};
    \EndIf
    \EndWhile
    \State{\bf Return: $g_{\text{avg},i},t_{\text{on},i}^p$};
\end{algorithmic}
\end{algorithm}

When the visiting sequence is unconstrained, one special difficulty in optimizing the dwelling sequence is: to compute the steady state covariance, a target must know the information about the agent dwell-times at other targets. Recall that when each target was only visited once during a cycle, the number of visits $N_i=1=p$ and thus $t_{\text{off},i}^p$ was computed based only on the dwell-time at target $i$, i.e. $t_{\text{off},i}^p=T-t_{\text{on},i}^p$. 

On the other hand, when multiple visits are allowed to each target, to compute the steady state covariance, one has to know the values of $t_{\text{on},i}^p$, $t_{\text{off},i}^p$ as well as $t_{\text{on},j}^p$ for all $j\neq i$ (see \eqref{eq:def_toff}). This interdependence does not allow independently optimizing the dwell-times and computing the steady state covariance at each target. Thus, as can be observed in line 18 of Alg.  \ref{alg:compute_visitation_times_multiple_visits}, to compute the steady state covariance (and dwelling sequence) at each target, what we propose is to assume each dwell-time holds the same proportional share of the total time as it did in the previous iteration. This challenge, however, imposes additional difficulties in formally proving the convergence and optimality of the proposed scheme, and thus this is a current topic of research. Nevertheless, our simulation results lead us to believe that this algorithm (Alg. \ref{alg:compute_visitation_times_multiple_visits}) converges for any positive cycle period $T$.

\section{A Greedy Solution for Optimal Visiting Sequence} \label{Sec:GreedyExplorations}

In previous sections, we explored how the optimal dwelling sequence should be determined for a given fixed visiting sequence. Naturally, to optimize the cost \eqref{eq:cost}, the next step is to determine the optimal visiting sequence on the given target network $\mathcal{G}$. However, for this problem, notice that the visiting sequence given by the corresponding TSP is only a sub-optimal solution as we now allow the visiting sequence to be unconstrained (see Proposition \ref{prop:optimal_cycle_tsp}). In this paper, we introduce a heuristic approach specifically designed for the problem we study to determine a high-performing (still sub-optimal) visiting sequence. In contrast to using computationally complex mixed-integer programming techniques \cite{Hari2019}, we use a genetic algorithm based TSP solver \cite{Kirk2020} together with a specifically designed greedy algorithm.  

In particular, the proposed greedy algorithm uses the TSP solution as its input, and, in each iteration, it \emph{explores} several possible modifications to the current solution and \emph{executes} the most profitable (greedy) modification. Intuitively, this greedy algorithm's computational efficiency depends on two factors: (i) the execution time required for the exploration of a single modified solution and (ii) the number of different modified solutions explored in each iteration. In order to strategically limit the prior, motivated by \cite{Welikala2019P3}, we propose a novel \emph{metric} that can be utilized to efficiently evaluate a solution (i.e., to explore a visiting sequence). Subsequently, to limit the latter, we exploit several structural properties of this PM setup. 


\subsection{The metric used to evaluate a visiting sequence}

The optimal cost for a given a visiting sequence (say $\bar{\Xi}$) is $J(\bar{\Xi},\mathcal{T}^*)$, where $J$ is as in \eqref{eq:cost} and $\mathcal{T}^*$ is the optimal dwelling sequence. In the optimization process corresponding to $\bar{\Xi}$ given by the optimization process proposed in Section \ref{sec:multiple_visits}  \ref{alg:compute_visitation_times_multiple_visits}). 
However, a greedy scheme requires repeated evaluations of the cost in each iteration, and thus a faster method for cost estimation is desired. Therefore, in this section, we establish a \emph{lower bound} to $J(\bar{\Xi},\mathcal{T}^*)$ as $\hat{J}(\bar{\Xi})\leq J(\bar{\Xi},\mathcal{T}^*)$, so that it (i.e., $\hat{J}(\bar{\Xi})$) can be used to efficiently evaluate any visiting sequence $\bar{\Xi}$.

Let us denote a generic visiting sequence (also called cycle) by $\bar{\Xi} = \{p_j\}_{j=1,2,\ldots,N}$ where each $p_j\in \mathcal{V}$. Note that the visiting sequence $\bar{\Xi}$ fully defines a corresponding sequence of edges $\bar{\xi} \subseteq \mathcal{E}$ as $\bar{\xi} = \{(p_{j-1},p_j)\}_{j=1,2,\ldots,N}$ with $p_{0}=p_N$. Since targets are allowed to be visited more than once during a cycle, some elements in $\bar{\Xi}$ may have repeated entries (i.e., there may be $p_j,p_l\in\bar{\Xi}$ such that $p_j=p_l$ even though $j\neq l$). For notational convenience, we define an equivalent cycle to $\bar{\Xi}$ that, however, has unique entries as $\Xi = \{p_j^k\}_{j=1,2,\ldots,N}$ where $p_j^k$ represents the $k$\textsuperscript{th} instance (visit) of the target $p_j\in\bar{\Xi}$. 
Recall that we previously used $N_i$ to represent the number of times target $i$ is visited in a cycle. Therefore, for each $p_j^k \in \Xi,\ 1 \leq k \leq N_{p_j}$. The corresponding sequence of edges of $\Xi$ is denoted by $\xi =\{(p_{j-1}^l,p_j^k)\}_{j=1,2,\ldots,N}$. For example, if $\mathcal{V}=\{1,2,3\}$ in $\mathcal{G}$, $\bar{\Xi} = \{2,3,1,3,1,3\}$ is an example cycle where its equivalent version would be $\Xi=\{2^1,3^1,1^1,3^2,1^2,3^3\}$. In essence, any given visiting sequence can be represented by the corresponding cycle $\bar{\Xi}$ or by any of its equivalent representations $\bar{\xi}$, $\Xi$ or ${\xi}$.   

Next, let us define the \emph{auxiliary target-pool} of a target $p_j\in\bar{\Xi}$ as $\tau_{p_j}=\{p_{j}^1,p_{j}^2,\ldots,p_j^{N_{p_j}}\}$. The \emph{sub-cycle} of a target $p_j^k\in\Xi$ is denoted as $\Xi_{p_j}^k$ and is defined as the ordered portion of elements in $\Xi$ starting after $p_j^{k-1}$ until $p_j^k$. For instance, in the previous example, the sub-cycles corresponding to targets $3^2$ and $1^1$ are $\Xi_3^2 = \{1^1,3^2\}$ and $\Xi_1^1 = \{3^3,2^1,3^1,1^1\}$ respectively. Similarly, $\xi_{p_j}^k$ is used to denote the sequence of edges of the corresponding sub-cycle $\Xi_{p_j}^k$. We further define $w_{p_j}^k$ as the total sub-cycle travel time required to traverse the edges in $\xi_{p_j}^k$. Notice that $w_{p_j}^k$ can be viewed as the $k$\textsuperscript{th} revisit time of the target $p_j\in\bar{\Xi}$ if no dwell-time was spent at any target in the sub-cycle $\xi_{p_j}^k$.


We next prove that, for any dwelling sequence $\mathcal{T}$,
\begin{equation}
    \label{eq:lower_bound_definition}
    J(\bar{\Xi},\mathcal{T})  \geq \max_{i \in \bar{\Xi}} L_i(\bar{t}_i)
\end{equation}
where $\bar{t}_i$ is the maximum revisit time of target $i\in\bar{\Xi}$, i.e.,  
\begin{equation}\label{eq:maximum_revisit_time}
    \bar{t}_i = \max_{k:i^k \in \tau_{i}}\ w_i^k, 
\end{equation}
and 
\begin{multline}
    \label{eq:lower_bound_expression}
    L_{i}(\bar{t}_i) = g_i\Big(||\exp{(A_i \bar{t}_i)}\Omega_{\text{ss},i}\exp(A_i^T \bar{t}_i)\\
    +\int_0^{\bar{t}_i} \exp(A_i(\bar{t}_i-\tau))Q_i\exp(A_i^T(\bar{t}_i-\tau))d\tau||\Big),
\end{multline}
with $\Omega_{\text{ss},i}$ being the positive definite solution of the algebraic Riccati Equation
$$A_i\Omega_{\text{ss},i}(t)+\Omega_{\text{ss},i}(t)A_i'+Q_i-\Omega_{\text{ss},i}(t){G}_i\Omega_{\text{ss},i}(t)=0.
$$

\begin{remark}
    The definition of $\bar{t}_i$ in \eqref{eq:maximum_revisit_time} is equivalent to $\max_{1\leq k \leq N_i}t_{\text{off},i}^k$ when $t_{\text{on},m}^n = 0$ for all targets $m$ and visiting instances $n$. In other words, $\bar{t}_i$ is the maximum time between two consecutive visits to the target $i$ given no dwell-time is spent on any of the targets in the visiting sequence.
\end{remark}

The intuition behind this lower bound is that it computes the peak target covariance (which directly affects the cost $J(\bar{\Xi},\mathcal{T})$ in \eqref{eq:cost}) as if when the target is visited, its covariance instantaneously decreases to the steady state value. This is a key result as it provides a computationally efficient metric:
\begin{equation}\label{Eq:CycleMetricGlobal}
    \hat{J}(\bar{\Xi}) = \max_{i\in \bar{\Xi}} L_i(\bar{t}_i)
\end{equation}
to estimate the cost $J(\bar{\Xi},\mathcal{T})$ of a known visiting sequence $\bar{\Xi}$. As stated earlier, we use this $\hat{J}(\cdot)$ metric in our greedy scheme to efficiently evaluate and thus compare the cost of different visiting sequences so as to find the optimal  visiting sequence.

\begin{proposition}
    $J(\bar{\Xi},\mathcal{T})  \geq \max_{i\in \bar{\Xi}} L_i(\bar{t}_i)$ holds for any dwelling sequence $\mathcal{T}$ and visiting sequence $\bar{\Xi}$, with $L_i$ defined as in \eqref{Eq:CycleMetricGlobal}.
\end{proposition}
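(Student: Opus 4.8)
The plan is to show that \emph{each} target $i\in\bar\Xi$ individually forces the cost to be at least $L_i(\bar{t}_i)$, so that the claimed bound follows by maximizing over $i\in\bar\Xi$. Fix $i\in\bar\Xi$ and first assume $i$ is observed for a strictly positive total amount of time during the cycle. (Otherwise $\Omega_i$ obeys the unobserved Lyapunov dynamics forever: if $A_i$ is unstable the cost is infinite and the bound is trivial, while if $A_i$ is stable the limiting covariance $\int_0^\infty e^{A_i\tau}Q_ie^{A_i'\tau}\,d\tau$ dominates the matrix inside $L_i(\bar{t}_i)$ because $\Omega_{\text{ss},i}\preceq\int_0^\infty e^{A_i\tau}Q_ie^{A_i'\tau}\,d\tau$, as one sees by subtracting the two algebraic Riccati/Lyapunov equations.) Under this assumption $\bar{\Omega}_i$ is well defined by Proposition \ref{prop:unique_attractive_sol_riccati_eq}, and by the consequence stated right after it, $\limsup_{t\to\infty}g_i(\norm{\Omega_i(t)})=\max_{t}g_i(\norm{\bar{\Omega}_i(t)})$.

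The crux is the lower bound $\bar{\Omega}_i(t)\succeq\Omega_{\text{ss},i}$ for every $t$, i.e.\ that intermittent sensing cannot beat the steady-state Kalman--Bucy covariance. To prove it, fix $t^*$ and let $P(s)$ solve the \emph{always-observed} Riccati flow $\dot{P}=A_iP+PA_i'+Q_i-PG_iP$ with $P(0)=\bar{\Omega}_i(t^*)$. Since the true observation indicator satisfies $\eta_i(\cdot)\le1$, the comparison principle for Riccati equations gives $\bar{\Omega}_i(t^*+s)\succeq P(s)$ for all $s\ge0$; concretely, $\Delta(s):=\bar{\Omega}_i(t^*+s)-P(s)$ starts at $0$ and one checks that $\{\Delta\succeq0\}$ is forward invariant under the $\Delta$-dynamics in both observation modes. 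By Assumption~1, $P(s)\to\Omega_{\text{ss},i}$ as $s\to\infty$. Evaluating the inequality at $s=nT$ and using $T$-periodicity of $\bar{\Omega}_i$ yields $\bar{\Omega}_i(t^*)=\bar{\Omega}_i(t^*+nT)\succeq P(nT)$ for all $n$; letting $n\to\infty$ and using that the positive-semidefinite cone is closed gives $\bar{\Omega}_i(t^*)\succeq\Omega_{\text{ss},i}$.

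With this lemma available, the next step is to locate target $i$'s longest uncovered stretch. Because the time elapsed between two consecutive visits to $i$ equals the travel time of the corresponding sub-cycle \emph{plus} the nonnegative dwell-times spent at the intermediate targets, $\max_k t_{\text{off},i}^k\ge\max_k w_i^k=\bar{t}_i$. Hence there is an interval $[t_0,t_0+t_{\text{off},i}^{k_0}]$ with $t_{\text{off},i}^{k_0}\ge\bar{t}_i$ on which $\eta_i\equiv0$, so there $\bar{\Omega}_i$ follows $\dot{\bar{\Omega}}_i=A_i\bar{\Omega}_i+\bar{\Omega}_iA_i'+Q_i$. Using the explicit solution together with $\bar{\Omega}_i(t_0)\succeq\Omega_{\text{ss},i}$ from the lemma,
\begin{multline*}
\bar{\Omega}_i(t_0+\bar{t}_i)=e^{A_i\bar{t}_i}\bar{\Omega}_i(t_0)e^{A_i'\bar{t}_i}+\int_0^{\bar{t}_i}e^{A_i\tau}Q_ie^{A_i'\tau}\,d\tau\\
\succeq e^{A_i\bar{t}_i}\Omega_{\text{ss},i}e^{A_i'\bar{t}_i}+\int_0^{\bar{t}_i}e^{A_i\tau}Q_ie^{A_i'\tau}\,d\tau=:M_i.
\end{multline*}
A change of variables $\tau\mapsto\bar{t}_i-\tau$ in the integral shows that $M_i$ is precisely the matrix appearing inside $L_i(\bar{t}_i)$ in \eqref{eq:lower_bound_expression}. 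Since $\norm{\cdot}$ is monotone on the positive-semidefinite cone and $g_i$ is strictly increasing,
\begin{multline*}
\limsup_{t\to\infty}g_i(\norm{\Omega_i(t)})=\max_t g_i(\norm{\bar{\Omega}_i(t)})\\
\ge g_i(\norm{\bar{\Omega}_i(t_0+\bar{t}_i)})\ge g_i(\norm{M_i})=L_i(\bar{t}_i).
\end{multline*}
Thus $J(\bar{\Xi},\mathcal{T})\ge\limsup_{t\to\infty}g_i(\norm{\Omega_i(t)})\ge L_i(\bar{t}_i)$ for every $i\in\bar{\Xi}$, and maximizing over $i\in\bar{\Xi}$ finishes the proof.

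The main obstacle is the lemma $\bar{\Omega}_i(t)\succeq\Omega_{\text{ss},i}$: although intuitively clear, a rigorous argument requires the Riccati comparison principle plus the period-limiting argument, and in particular a careful verification of forward invariance of the positive-semidefinite cone for the $\Delta$-dynamics (a Nagumo/Bony--Brezis tangency condition on the cone boundary, which must be checked separately in the observed and the unobserved modes). Everything downstream---the elementary inequality $\max_k t_{\text{off},i}^k\ge\bar{t}_i$ and the explicit Lyapunov flow on the uncovered interval---is routine.
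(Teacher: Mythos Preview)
Your proof is correct and follows the same skeleton as the paper's: both rely on the key inequality $\bar{\Omega}_i(t)\succeq\Omega_{\text{ss},i}$, then propagate the unobserved Lyapunov flow over the longest uncovered stretch and use $\max_k t_{\text{off},i}^k\ge\bar{t}_i$ together with monotonicity of $L_i$. The difference is one of rigor rather than strategy. The paper simply asserts $\bar{\Omega}_i(t)\succ\Omega_{\text{ss},i}$ without justification, whereas you supply a clean argument for it via Riccati comparison against the always-observed flow combined with periodicity and the limit $P(nT)\to\Omega_{\text{ss},i}$. You also explicitly treat the degenerate case where $i$ is never observed, and you evaluate the Lyapunov flow at the intermediate time $t_0+\bar{t}_i$ rather than at the peak $\overline{P}_i^k$ followed by a monotonicity step, which is a minor stylistic variation. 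As a side note, the paper's displayed inequality $\overline{P}_i^k\prec\cdots$ is a typo (it should be $\succ$); your version has the signs right.
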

\begin{proof}
    From the general solution of a linear matrix Riccati equation we get that
    \begin{multline}
        \overline{P}_i^k = \exp(A_i t_{\text{off},i}^{k-1})\underline{P}_i^{k-1}\exp(A_i^T t_{\text{off},i}^{k-1})\\+\int_0^{t_{\text{off},i}^{k-1}}\exp(A_i (t_{\text{off},i}^{k-1}-\tau))Q_i\exp(A_i^T (t_{\text{off},i}^{k-1}-\tau))d\tau
    \end{multline}
    Moreover, we note that $\bar{\Omega}_i(t) \succ \Omega_{\text{ss},i}$, where $\Omega_{\text{ss},i}$ is steady state covariance matrix of target $i$ when it is permanently observed. Therefore, 
    \begin{multline}
        \overline{P}_i^k \prec \exp(A_i t_{\text{off},i}^{k-1})\Omega_{\text{ss},i}\exp(A_i^T t_{\text{off},i}^{k-1})\\+\int_0^{t_{\text{off},i}}\exp(A_i (t_{\text{off},i}^{k-1}-\tau))Q_i\exp(A_i^T (t_{\text{off},i}^{k-1}-\tau))d\tau
    \end{multline}
    
    Note that if we replace $t_{\text{off},i}$ by $\bar{t}_i$, we get the definition of $L_i(\bar{t}_i)$ in \eqref{eq:lower_bound_definition}. Using Lemma \ref{lemma:pd_nd_time_derivative}, we know that if $\bar{t}_i\leq \max_{1\leq k\leq N_i}t_{\text{off},i}^k$, then 
    $
    J(\bar{\Xi},\mathcal{T})  \geq \max_{i\in \mathcal{V}} L_i(\bar{\Xi})
    $
\end{proof}

Note that this metric does not require the computation of dwelling times, thus can be immediately computed for a given trajectory. Additionally, we highlight that this metric for fast approximate cost evaluation can be used in conjunction with any heuristic method for exploration of visiting sequences, and is not exclusively tied to the specific heuristic we consider.

\subsection{Possible types of modifications for a visiting sequence}

As stated earlier, in each greedy iteration, we explore several modified versions of the current visiting sequence. In particular, we use three types of cycle modification operations (CMOs) to obtain modified cycles. Before discussing each of them, we first introduce some notations and a lemma.

Let us denote $\bar{\Xi}$ as the current cycle in a greedy iteration and $\Xi$ as its equivalent representation with unique entries (the respective sequences of edges $\bar{\xi}$ and $\xi$). We denote the \emph{critical target} $i^k\in\Xi$ as $i=\argmax{i\in\Xi}\hat{J}(\bar{\Xi})$ and $k=\argmax_{k\in\tau_i}{w_k}$ (i.e., the optimal $i\in\bar{\Xi}$ in \eqref{eq:lower_bound_definition} and the optimal $k\in\{1,2,\ldots,N_i\}$ in \eqref{eq:maximum_revisit_time}) as $i^{k*}$. The corresponding sub-cycle, the sequence of edges and the total sub-cycle travel time are denoted as $\Xi_{i}^{k*}$, $\xi_{i}^{k*}$ and $w_{i}^{k*}$, respectively.

\begin{lemma}\label{Lm:CriticalTargetVisit}
    The metric $\hat{J}(\bar{\Xi})$ can only be reduced (improved) by modifying the sub-cycle $\Xi_{i}^{k*}$ so that $w_{i}^{k*}$ is decreased. 
\end{lemma}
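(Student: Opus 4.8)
\textbf{Proof proposal for Lemma \ref{Lm:CriticalTargetVisit}.}

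The plan is to argue directly from the structure of the metric $\hat{J}(\bar\Xi) = \max_{i\in\bar\Xi} L_i(\bar t_i)$ together with the monotonicity of $L_i$ in its argument. First I would recall that, by definition, the critical target $i^{k*}$ attains the outer maximum in $\hat J(\bar\Xi)$, i.e. $\hat J(\bar\Xi) = L_i(\bar t_i)$ where $\bar t_i = w_i^{k*}$ is the largest sub-cycle travel time among all visits to target $i$. So the value of the metric is controlled entirely by the single quantity $w_i^{k*}$ and the function $L_i(\cdot)$.

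Next I would establish that $L_i(\cdot)$ is strictly increasing in $\bar t_i$. This follows from the same reasoning used in the proof of Proposition stating $J(\bar\Xi,\mathcal T)\ge\max_i L_i(\bar t_i)$: the expression inside $g_i(\|\cdot\|)$ in \eqref{eq:lower_bound_expression} is exactly the solution at time $\bar t_i$ of the (unforced-by-observation) Riccati/Lyapunov flow $\dot\Omega = A_i\Omega+\Omega A_i' + Q_i$ started from $\Omega_{\text{ss},i}$, and by Lemma \ref{lemma:pd_nd_time_derivative} (or rather its underlying monotonicity, since $\eta_i = 0$ along this flow and $\Omega_{\text{ss},i}$ lies below the periodic orbit) this solution is strictly increasing in the positive-definite order as $\bar t_i$ grows; composing with the strictly increasing norm and the strictly increasing $g_i$ gives strict monotonicity of $L_i$. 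Hence the only way to decrease $\hat J(\bar\Xi)$ is to decrease the term $L_i(\bar t_i)$ that currently realizes the maximum, which in turn requires decreasing $\bar t_i = w_i^{k*}$, i.e. shortening the critical sub-cycle $\Xi_i^{k*}$.

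Finally I would close the argument by observing that any cycle modification that does \emph{not} shorten $\Xi_i^{k*}$ leaves $w_i^{k*}$ (hence $\bar t_i$, hence $L_i(\bar t_i)$) unchanged or larger, so the max over $\bar\Xi$ stays at least $L_i(\bar t_i) = \hat J(\bar\Xi)$; thus no improvement is possible unless $w_i^{k*}$ is decreased. One caveat worth a sentence: shortening $\Xi_i^{k*}$ is \emph{necessary} but not sufficient — a CMO could shorten the critical sub-cycle yet lengthen some other sub-cycle and create a new, worse critical target — so the lemma should be read (as its statement already phrases it) as ``can only be reduced by'' rather than ``is reduced by.''

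\textbf{Main obstacle.} The only nontrivial step is the strict monotonicity of $L_i$ in $\bar t_i$; everything else is bookkeeping about which term attains the max. Strictly speaking Lemma \ref{lemma:pd_nd_time_derivative} is stated for the steady-state periodic covariance $\bar\Omega_i$ under a genuine schedule, not for the fictitious flow starting at $\Omega_{\text{ss},i}$, so I would either invoke the elementary fact that the Lyapunov flow $\dot\Omega = A_i\Omega+\Omega A_i'+Q_i$ is monotone increasing (in the Loewner order) whenever started at a point where the right-hand side is positive definite — and $A_i\Omega_{\text{ss},i}+\Omega_{\text{ss},i}A_i'+Q_i = \Omega_{\text{ss},i}G_i\Omega_{\text{ss},i}\succeq 0$, with strictness under Assumption~1 — or reuse the computation already performed in the preceding proposition's proof. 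I expect this to be a short argument, so the lemma's proof should be only a few lines.
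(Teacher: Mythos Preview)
Your proposal is correct and follows essentially the same approach as the paper: identify $\hat J(\bar\Xi)=L_i(w_i^{k*})$ from the definitions, use monotonicity of $L_i(\cdot)$, and conclude that decreasing $\hat J$ forces a decrease in $w_i^{k*}$ and hence a modification of the critical sub-cycle $\Xi_i^{k*}$. The only difference is that you spend a paragraph justifying the monotonicity of $L_i$ via the Lyapunov flow and the algebraic Riccati identity $A_i\Omega_{\text{ss},i}+\Omega_{\text{ss},i}A_i'+Q_i=\Omega_{\text{ss},i}G_i\Omega_{\text{ss},i}\succ 0$, whereas the paper simply asserts it by inspection of \eqref{eq:lower_bound_expression}; your extra care is warranted since, as you note, Lemma~\ref{lemma:pd_nd_time_derivative} does not literally apply to this fictitious trajectory.
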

\begin{proof}
From \eqref{eq:lower_bound_definition}, \eqref{eq:maximum_revisit_time} and \eqref{eq:lower_bound_expression}, it is clear that $\hat{J}(\bar{\Xi}) = L_i(w_i^{k*})$. According to \eqref{eq:lower_bound_expression}, $L_i(\cdot)$ is a monotonically increasing function. Therefore, to reduce the metric $\hat{J}(\bar{\Xi})$, the maximum revisit time $w_i^{k*}$ should be decreased. This can only be achieved if the corresponding sub-cycle $\Xi_i^{k*}$ is modified. 
\end{proof}

The above lemma implies that we only need to modify a portion of the complete cycle $\bar{\Xi}$, i.e., the sub-cycle $\Xi_{i}^{k*}$ to improve the metric $\hat{J}(\bar{\Xi})$. Hence, this result significantly reduces the number of modified cycles that need to be explored in a greedy iteration (thus reduces the computational complexity). We are now ready to introduce the three types of cycle modification operations shown in Fig. \ref{Fig:CMOs}. 

\begin{figure}[!h]
    \centering
    \includegraphics[width=\columnwidth]{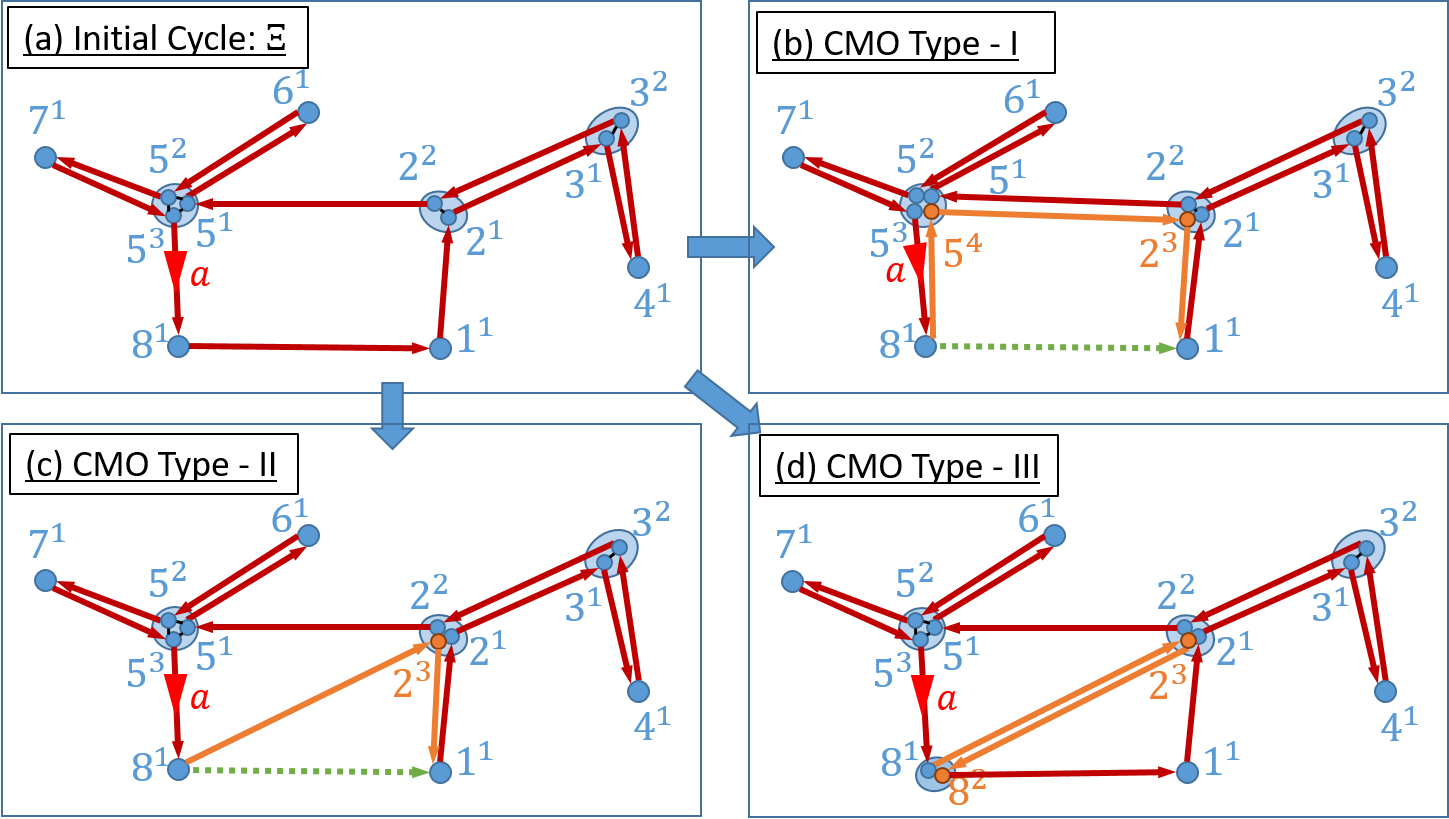}
    \caption{Three types of cycle modification operations (CMOs).}
    \label{Fig:CMOs}
\end{figure}

\paragraph*{\textbf{CMO Type - I}} Remove an edge $(l^m,j^n) \in \xi_{i}^{k*}$ and replace it with the fastest path between targets $l, j \in \bar{\Xi}$. 

Clearly, this modification is only effective if the fastest path between targets $l, j \in \bar{\Xi}$ is not the direct path $(l^m,j^n) \in \xi_{i}^{k*}$ that we remove. In practice, this CMO is useful in early greedy iterations - as we propose to start the greedy process with the TSP solution, where the agent is constrained to visit each target only once. Hence, such a TSP cycle may contain edges with high travel-time values that can be omitted if the agent is allowed to make multiple visits to some targets. 

One example where CMO Type - I has helped to reduce the $\hat{J}$ value is illustrated in Fig. \ref{fig:ExampleCMOTypeI}. Note that the gray colored edges in the \emph{graph diagram} in Fig. \ref{fig:ExampleCMOTypeI}(a) represent the edges with very high (infinite) travel-time values.  The \emph{cycle diagrams} in Fig. \ref{fig:ExampleCMOTypeI}(b) and (c) convey the $\{L_i(w_i^k), i^k\in\Xi\}$ values (as vertical gray colored bars) and travel-time values between targets (as circular red colored segments) of the cycle. Note that each such cycle diagram also indicates: (i) the cycle version with the unique entries $\Xi$, (ii) the lower bound metric value $\hat{J}(\bar{\Xi})$ and (iii) the critical target $i^{k*}$.

\begin{figure}[!h]
\centering
\begin{subfigure}[t]{0.48\columnwidth}
\includegraphics[width=\columnwidth]{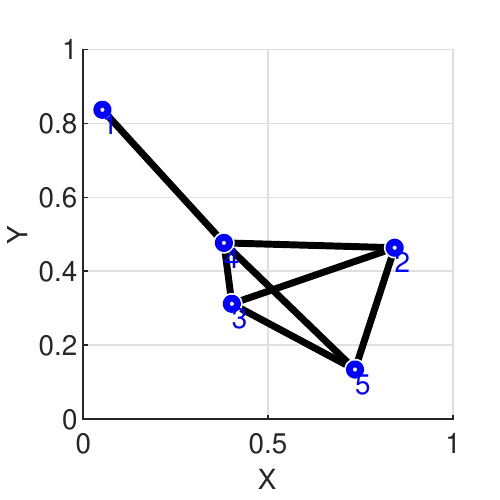}%
\caption{Graph $\mathcal{G}=(\mathcal{V},\mathcal{E})$}
\end{subfigure}%
\begin{subfigure}[t]{0.48\columnwidth}
\includegraphics[width=\columnwidth]{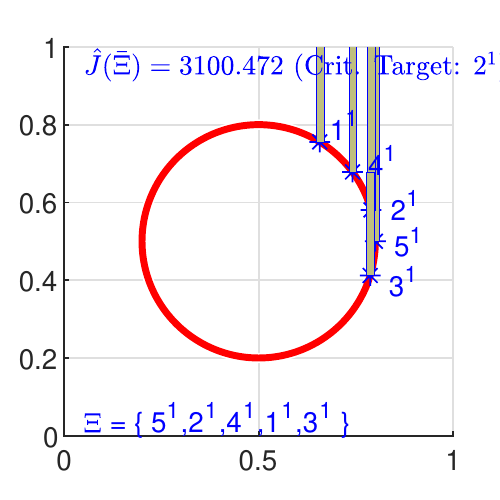}%
\caption{The Initial Cycle}
\end{subfigure}
\begin{subfigure}[t]{0.96\columnwidth}
\centering
\includegraphics[width=0.5\columnwidth]{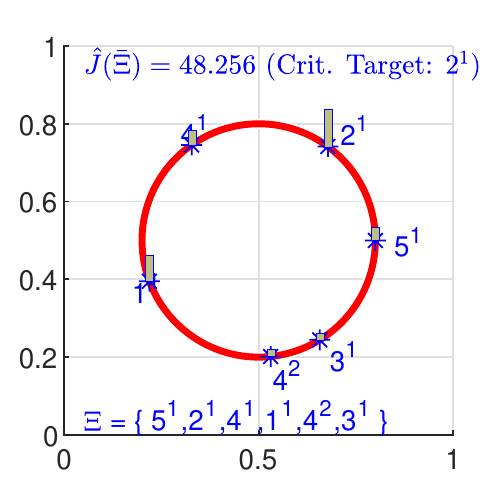}%
\caption{After a CMO Type - I: Replacement of the edge \\ \centering $(1^1,3^1)$ with the shortest path $(1^1, 4^2),(4^2, 3^1)$.}
\end{subfigure}
\caption{An example for CMO Type - I.}
\label{fig:ExampleCMOTypeI}
\end{figure}

\paragraph*{\textbf{CMO Type - II}}
Remove an edge $(l^m,j^n)\in\xi_{i}^{k*}$ such that $l^m \neq i^{k*}$ and $j^n \neq i^{k*}$. Then, replace it with two edges: $(l^m,i^{k*}),(i^{k*},j^n)$.

The following lemma can be established regarding the effectiveness of this CMO. 
\begin{lemma}\label{Lm:CMO_Type_II}
    If the travel-times between targets $l,m,i\in\bar{\Xi}$ are such that $|w_{l,m}- w_{i,m}| < w_{l,m}$, then the CMO Type - II described above improves the $\hat{J}(\bar{\Xi})$ value. 
\end{lemma}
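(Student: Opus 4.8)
The plan is to reduce the statement, via Lemma~\ref{Lm:CriticalTargetVisit}, to a short estimate on sub-cycle travel times and then verify that estimate with the triangle inequality of the (preprocessed) fastest-path travel-time metric. Lemma~\ref{Lm:CriticalTargetVisit} already gives $\hat J(\bar\Xi)=L_i(w_i^{k*})$ with $L_i$ strictly increasing, and tells us that $\hat J$ improves precisely when the critical revisit time $w_i^{k*}$ is decreased. So the plan is: (a) show that CMO Type~II strictly shrinks the largest revisit time of the critical target $i$; and (b) check that it does not lift any other target $q$'s bound $L_q(\bar t_q)$ up to the old bottleneck $L_i(w_i^{k*})$.

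For (a), I would decompose the critical sub-cycle $\xi_i^{k*}$ into three consecutive pieces: the sub-path from the preceding visit of $i$ up to $l$, of total travel time $a$; the removed edge, of travel time $w_{l,m}$; and the sub-path from $m$ back to $i^{k*}$, of total travel time $b$. Thus $w_i^{k*}=a+w_{l,m}+b$, and, since these sub-paths are candidate paths in the graph, $a\ge w_{i,l}$ and $b\ge w_{i,m}$. CMO Type~II inserts a new visit of $i$ between $l$ and $m$, replacing the single long revisit of $i$ by two revisits of travel times $a+w_{l,i}$ and $w_{i,m}+b$ (all other revisit times of $i$ are untouched, and as $i^{k*}$ is the critical instance they are already $\le w_i^{k*}$). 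Hence (a) reduces to $a+w_{l,i}<w_i^{k*}$ and $w_{i,m}+b<w_i^{k*}$, i.e., to $w_{l,i}<w_{l,m}+b$ and $w_{i,m}<a+w_{l,m}$. The triangle inequality ($w_{l,i}\le w_{l,m}+w_{i,m}\le w_{l,m}+b$ and $w_{i,m}\le w_{i,l}+w_{l,m}\le a+w_{l,m}$) already delivers both in the non-strict sense; the role of the hypothesis $|w_{l,m}-w_{i,m}|<w_{l,m}$ is to exclude the degenerate configuration in which one of the newly introduced edges, together with the leftover sub-path, exactly reconstitutes the old revisit length, thereby upgrading the bounds to strict ones. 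Combining the two strict inequalities, $i$'s new maximum revisit time, and hence the critical term of $\hat J$, strictly drops.

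For (b), the insertion lengthens the whole cycle only by $\delta=w_{l,i}+w_{i,m}-w_{l,m}\ge 0$, so the only revisit times that can grow are those of targets straddling the old edge $(l,m)$, each by at most $\delta$; using the hypothesis (which forces $w_{i,m}<2w_{l,m}$, hence keeps $\delta$ modest) together with the strict slack $L_q(\bar t_q)<L_i(w_i^{k*})$ available at every non-critical target, one concludes that these perturbed values remain below the old bottleneck, giving $\hat J(\bar\Xi^{\text{new}})<\hat J(\bar\Xi)$. I expect (a)'s strictness step to be the main obstacle: the raw triangle inequality only yields non-strict inequalities, so one must argue — invoking the hypothesis, and the fact that the sub-paths of lengths $a$ and $b$ typically traverse other targets and so are not themselves fastest paths — that the degenerate equality cannot occur; carefully bounding the perturbation $\delta$ in (b), and noting that (a) presumes the critical instance $i^{k*}$ to be the unique maximizer among $i$'s revisits, are the secondary difficulties.
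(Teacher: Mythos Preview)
Your part (a) \emph{is} the paper's proof: split $\Xi_i^{k*}$ at the inserted visit of $i$ into two sub-cycles $\Xi_i^{k_1},\Xi_i^{k_2}$, invoke the triangle inequality on the (preprocessed, metric) travel times to argue that both new revisit times $w_i^{k_1},w_i^{k_2}$ fall below $w_i^{k*}$, and conclude via $\hat J(\bar\Xi)=L_i(w_i^{k*})$ and the strict monotonicity of $L_i$. That is the entire content of the paper's argument; it does not carry out your step (b) at all.

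Your concern in (b) is legitimate: replacing the edge $(l,m)$ by the detour through $i$ lengthens the cycle by $\delta=w_{l,i}+w_{i,m}-w_{l,m}\ge 0$, so every target $q\neq i$ whose maximizing sub-cycle traverses the removed edge has its $\bar t_q$ bumped up by $\delta$, and in principle some $L_q(\bar t_q+\delta)$ could overtake the old bottleneck $L_i(w_i^{k*})$. The paper's proof silently ignores this and asserts improvement of $\hat J$ solely from the decrease of the critical term. So on (b) you are being more careful than the paper, but your own resolution (``$\delta$ modest'' from $w_{i,m}<2w_{l,m}$ plus ``strict slack at non-critical targets'') is not a proof either: the slack $L_i(w_i^{k*})-L_q(\bar t_q)$ at a non-critical target can be arbitrarily small while $\delta$ need not be, and the hypothesis $|w_{l,m}-w_{i,m}|<w_{l,m}$ does not by itself control that comparison. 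Your scruples about strictness in (a) are likewise more careful than the paper, which just writes ``using the triangle inequalities'' without isolating the degenerate equality case.
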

\begin{proof}
The CMO Type - II adds an extra visit to the target $i^{k*}$ by dividing its sub-cycle $\Xi_{i}^{k*}$ into two: say $\Xi_i^{k_1}$ and $\Xi_i^{k_2}$. Using the triangle inequalities, it can be shown that the corresponding new revisit times: say $w_i^{k_1}$ and $w_i^{k_2}$, will be lower than the previous (critical) revisit time $w_i^{k*}$. The proof is complete by observing that $\hat{J}(\bar{\Xi}) = L_i(w_i^{k*})$ (from \eqref{eq:lower_bound_definition}, \eqref{eq:maximum_revisit_time}, \eqref{eq:lower_bound_expression}) and $L_i(\cdot)$ is a monotonically increasing function.  
\end{proof}

\paragraph*{\textbf{CMO Type - III}}
Select a target $j^n\in\Xi_i^{k*}$ such that $j^n\neq i^{k*}$ and insert two new target visits: $\{i^{k*},j^{n+1}\}$ soon after it.

Similar to the CMO Type - II, this CMO also adds an extra visit to the critical target $i^{k*}$. Using Lemma \ref{Lm:CMO_Type_II}, it is easy to see that this CMO can always improve the metric $\hat{J}(\bar{\Xi})$. An illustrative example of this CMO, along with CMO Type - II, is shown in Fig. \ref{fig:ExampleCMOTypeIIAndIII}.

\begin{figure}[h!]
\centering
\begin{subfigure}[t]{0.48\columnwidth}
\includegraphics[width=\columnwidth]{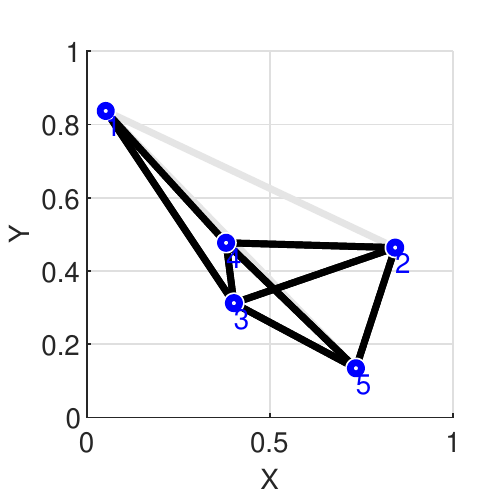}
\caption{Graph $\mathcal{G}=(\mathcal{V},\mathcal{E})$}
\end{subfigure}%
\begin{subfigure}[t]{0.48\columnwidth}
\includegraphics[width=\columnwidth]{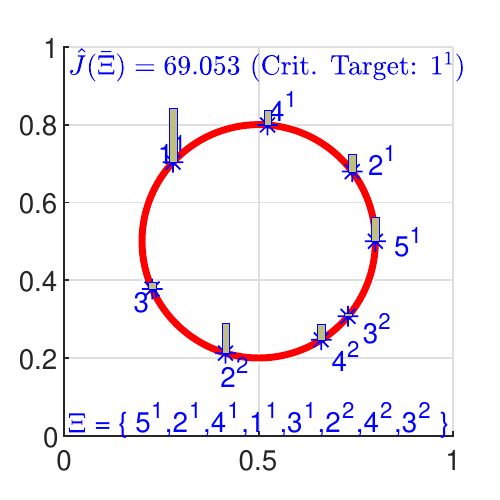}
\caption{The Initial Cycle}
\end{subfigure}
\begin{subfigure}[t]{0.48\columnwidth}
\includegraphics[width=\columnwidth]{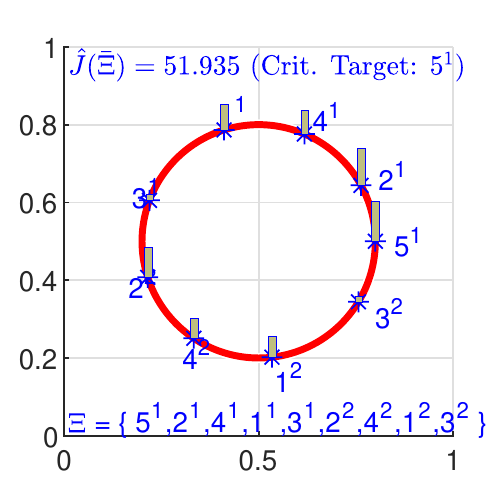}
\caption{After a CMO Type - II \\ \centering (Insertion of target $1^2$ to the initial cycle in (b))}
\end{subfigure}
\begin{subfigure}[t]{0.48\columnwidth}
\includegraphics[width=\columnwidth]{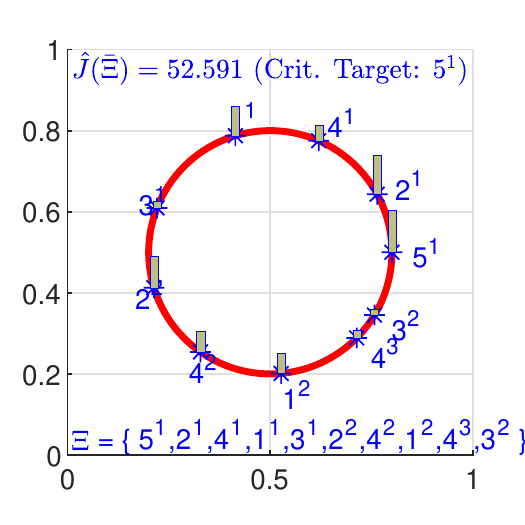}
\caption{After a CMO Type - III \\ \centering (Insertion of targets $1^2,\,4^3$ to the initial cycle in (b))}
\end{subfigure}
\caption{Examples for CMO Type II and III.}
\label{fig:ExampleCMOTypeIIAndIII}
\end{figure}

Under these three types of CMOs, the total number of modified cycles to be explored in a greedy iteration is strictly less than $3\vert \Xi_{i}^{k*} \vert$ ($\vert\cdot\vert$ stands for the cardinality operator when the argument is a set). Most importantly, as greedy iterations go on, according to the CMO Type - II and III discussed above, we always break the critical sub-cycle $\Xi_{i}^{k*}$ into two. Therefore  $\vert \Xi_{i}^{k*} \vert$ value will effectively decrease with the iterations.

\subsection{Greedy Algorithm}
To construct a high-performing sub-optimal visiting sequence (say $\bar{\Xi}_G$), we propose the greedy scheme given in Alg. \ref{Alg:Greedy_Exploration_Algorithm}. It starts with the TSP solution and involves two sequential greedy expansion loops. In the first loop, only the CMO Type - I is explored, and in the second loop, both CMO Type II and III are explored in parallel. In a greedy iteration, the gain of executing a CMO defined as  
\begin{equation}\label{Eq:CMOGain}
    \Delta_G = \hat{J}(\bar{\Xi})-\hat{J}(\bar{\Xi}')
\end{equation} 
is explored where $\bar{\Xi}$ is the current visiting sequence and $\bar{\Xi}'$ is the modified visiting sequence.

\begin{algorithm}
\caption{Greedy Construction of a Visiting Sequence}
\label{Alg:Greedy_Exploration_Algorithm}
    \begin{algorithmic}[1]
    \State{\bf Input}: Network topology: $\mathcal{G}=(\mathcal{V},\mathcal{E})$
    \State $\bar{\Xi}\leftarrow \bar{\Xi}_{TSP} = \{$TSP solution for $\mathcal{G}=(\mathcal{V},\mathcal{E})$ using \cite{Kirk2020}$\}$;
    \State{$\Delta_G \leftarrow \epsilon\,$}; \Comment{$\epsilon$ is an infinitely small positive number.}
    \While{$\Delta_G \geq \epsilon$} \Comment{First greedy loop.}
        \State{$\bar{\Xi}' \leftarrow \underset{\bar{\Xi}'}{\argmax}\ \Delta \hat{J}(\bar{\Xi}'\vert \bar{\Xi}, \mbox{I})$};
        \State{$\Delta_G \leftarrow \hat{J}(\bar{\Xi})-\hat{J}(\bar{\Xi}')$};
        \State{$\bar{\Xi} \leftarrow \bar{\Xi}'$};
    \EndWhile
    \State{$\Delta_G \leftarrow \epsilon$};
    \While{$\Delta_G \geq \epsilon$} \Comment{Second greedy loop.}
        \State{$\bar{\Xi}'\leftarrow \underset{\bar{\Xi}',\,Y\in\{\mbox{II,III}\}}{\argmax} \Delta \hat{J}(\bar{\Xi}'\vert\bar{\Xi}, Y)$}
        \State{$\Delta_G \leftarrow \hat{J}(\bar{\Xi})-\hat{J}(\bar{\Xi}')$};
        \State{$\bar{\Xi} \leftarrow \bar{\Xi}'$};
    \EndWhile
    \State{\bf Return: $\bar{\Xi}_G \leftarrow \bar{\Xi}$} \Comment{Greedily constructed cycle.}
\end{algorithmic}
\end{algorithm}

\begin{lemma}\label{Lm:Convergence}
The greedy algorithm given in Alg. \ref{Alg:Greedy_Exploration_Algorithm} converges after a finite number of iterations.
\end{lemma}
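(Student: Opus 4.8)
The plan is to show that the greedy algorithm terminates by exhibiting a quantity that strictly decreases across iterations and is bounded below, so that only finitely many iterations can occur in each of the two loops. The natural candidate is the metric $\hat{J}(\bar{\Xi})$ itself: by the stopping condition $\Delta_G \geq \epsilon$, every executed CMO decreases $\hat{J}$ by at least the fixed positive amount $\epsilon$. Since $\hat{J}(\bar{\Xi}) = \max_{i\in\bar{\Xi}} L_i(\bar{t}_i) \geq 0$ (each $L_i$ is a $g_i$-value of a norm of a positive semidefinite matrix, hence nonnegative), the value of $\hat{J}$ cannot decrease indefinitely in steps of size at least $\epsilon$. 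This immediately bounds the number of iterations of each \texttt{while} loop by $\hat{J}(\bar{\Xi}_{\text{start}})/\epsilon$ where $\bar{\Xi}_{\text{start}}$ is the cycle at the start of that loop (the TSP solution for the first loop, and the output of the first loop for the second).

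First I would note that each loop is well-defined: by Lemmas \ref{Lm:CMO_Type_II} (and the remark that CMO Type - III always improves $\hat{J}$) and the discussion of CMO Type - I, the $\argmax$ in lines 5 and 11 is taken over a nonempty finite set of candidate modified cycles (strictly fewer than $3|\Xi_i^{k*}|$ of them), so a best modification always exists and $\Delta_G$ is well-defined at each step. Then I would argue the monotone-decrease-with-gap property: whenever the loop body executes, $\Delta_G = \hat{J}(\bar{\Xi}) - \hat{J}(\bar{\Xi}') \geq \epsilon > 0$, so $\hat{J}$ strictly decreases. Combined with the lower bound $\hat{J} \geq 0$, the first loop halts after at most $\lceil \hat{J}(\bar{\Xi}_{TSP})/\epsilon \rceil$ iterations, and then the second loop, starting from a finite value $\hat{J}(\bar{\Xi})$, halts after at most $\lceil \hat{J}(\bar{\Xi})/\epsilon \rceil$ further iterations. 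Hence the total iteration count is finite.

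I would also remark, as a secondary reason reinforcing finiteness of the second loop, that CMO Types II and III each strictly increase the number of visits in the cycle (they add two target visits, splitting the critical sub-cycle), so $|\Xi|$ grows; but this by itself does not bound the loop unless one separately argues $|\Xi|$ cannot grow without bound. The clean argument is purely via the $\hat{J}$-gap, so I would lead with that and mention the sub-cycle-splitting only as intuition for why progress is genuine.

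The main obstacle, such as it is, is mostly a matter of care rather than depth: one must make sure that $\hat{J}$ is genuinely bounded below independently of the cycle (true, since it is nonnegative) and that the gain $\Delta_G$ at an \emph{executed} step is at least the fixed threshold $\epsilon$ rather than merely positive — this is exactly what the loop guard $\Delta_G \geq \epsilon$ enforces, so the argument is self-contained. A minor subtlety to address is that $\epsilon$ is described informally as ``an infinitely small positive number'': for the finiteness bound to be meaningful one should read it as an arbitrarily small but \emph{fixed} positive constant, and I would state this explicitly so the counting bound $\hat{J}_{\text{start}}/\epsilon$ is a finite number.
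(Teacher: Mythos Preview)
Your argument is correct and matches the paper's proof for the second loop: $\hat{J}(\bar{\Xi})\geq 0$ together with the guard $\Delta_G\geq\epsilon$ forces termination. For the first loop the paper argues slightly differently, giving a combinatorial bound (CMO Type~I only replaces edges of the TSP cycle by shortest paths, so at most $|\Xi_{TSP}|$ replacements are possible) rather than reusing the $\hat{J}$-gap argument; your uniform treatment of both loops via the $\hat{J}$-gap is equally valid and arguably cleaner, while the paper's edge-count bound for the first loop is tighter and independent of~$\epsilon$.
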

\begin{proof}
The first greedy loop executes the CMO Type - I iteratively. Since it attempts to replace edges of the TSP solution with alternative fastest paths, it will only run for at most $\vert \Xi_{TSP} \vert$ iterations. 
To prove the convergence of the second greedy loop, we use the fact that $\hat{J}(\bar{\Xi})$ is lower bounded: $\hat{J}(\bar{\Xi}) \geq 0$. Note also that each greedy iteration maintains the condition $\Delta_G\geq\epsilon$, which implies that $\hat{J}(\bar{\Xi})$ decreases monotonically over the greedy iterations. Hence, it is clear that $\Delta_G \rightarrow 0$ as greedy iterations go on, and thus the second greedy loop will converge.
\end{proof}

\paragraph*{\textbf{Simulation Results}} 
We create random persistent monitoring problems by randomly generating network topologies together with target parameters. Such a PM setup is shown in Fig. \ref{Fig:GreedyAlgorithmExample1}(a). Figures \ref{Fig:GreedyAlgorithmExample1}(b)-(e) show the evolution of the greedy visiting sequence $\bar{\Xi}$ and its cost (in terms of the metric $\hat{J}(\bar{\Xi})$) over three consecutive greedy iterations. Another two randomly generated persistent monitoring problems together with their respective high-performing greedily constructed visiting sequences are shown in Figs. \ref{Fig:GreedyAlgorithmExample2} and \ref{Fig:GreedyAlgorithmExample3}.  

\begin{figure*}[h!]
\centering
\begin{subfigure}[t]{0.40\columnwidth}
\includegraphics[width=\columnwidth]{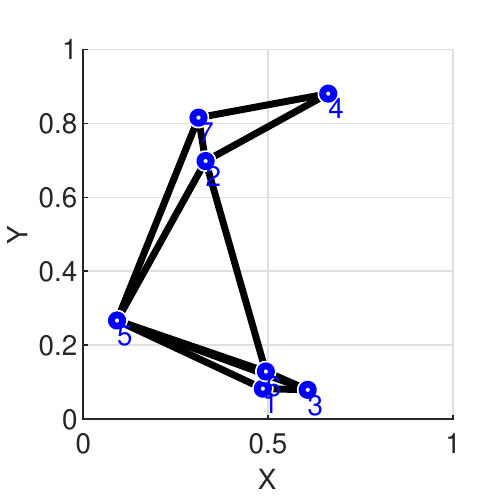}
\caption{Graph $\mathcal{G}=(\mathcal{V},\mathcal{E})$}
\end{subfigure}%
\begin{subfigure}[t]{0.40\columnwidth}
\includegraphics[width=\columnwidth]{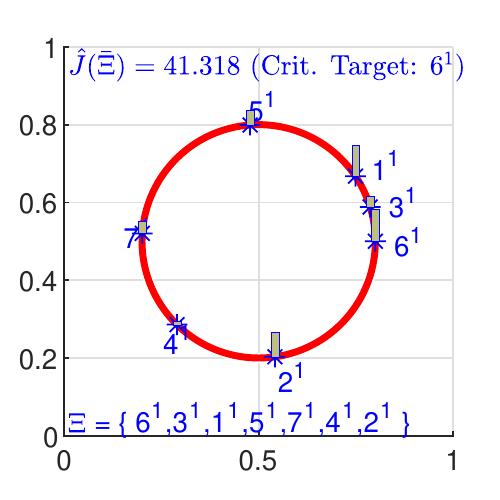}
\caption{Initial Cycle: $\bar{\Xi}_{TSP}$}
\end{subfigure}
\begin{subfigure}[t]{0.40\columnwidth}
\includegraphics[width=\columnwidth]{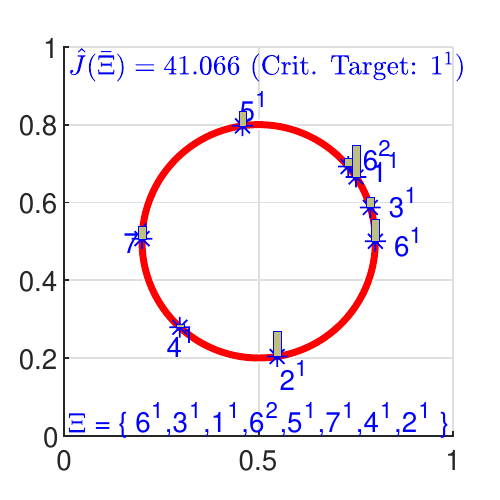}
\caption{Intermediate Cycle 1}
\end{subfigure}
\begin{subfigure}[t]{0.40\columnwidth}
\includegraphics[width=\columnwidth]{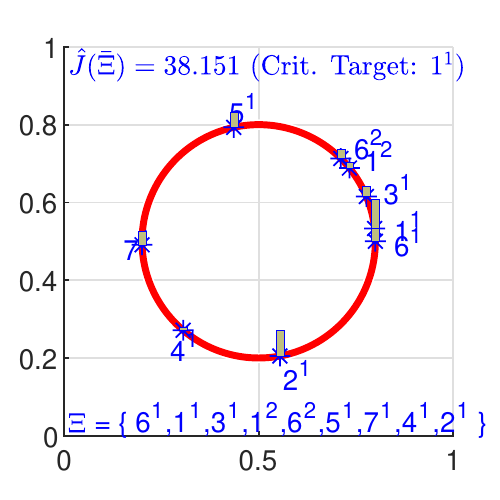}
\caption{Intermediate Cycle 2}
\end{subfigure}
\begin{subfigure}[t]{0.40\columnwidth}
\includegraphics[width=\columnwidth]{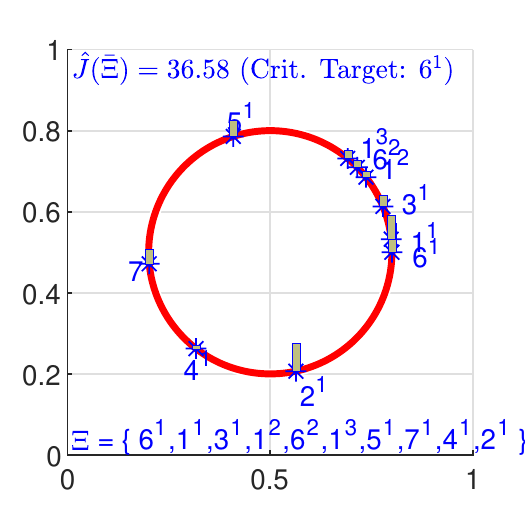}
\caption{Greedy Cycle: $\bar{\Xi}_G$}
\end{subfigure}
\caption{Example 1: The Greedy Cycle Construction Process.}
\label{Fig:GreedyAlgorithmExample1}
\end{figure*}

\begin{figure}[h!]
\centering
\begin{subfigure}[t]{0.48\columnwidth}
\includegraphics[width=\columnwidth]{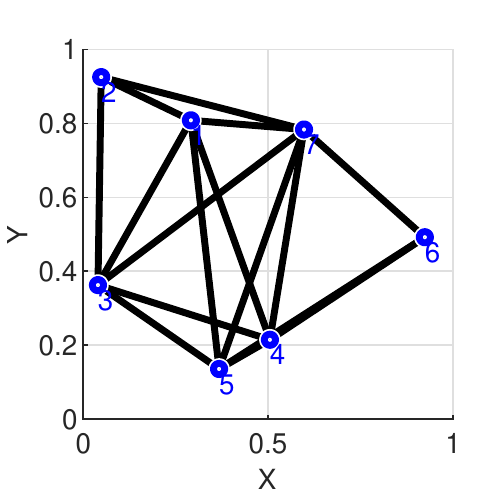}
\caption{Graph $\mathcal{G}=(\mathcal{V},\mathcal{E})$}
\end{subfigure}
\begin{subfigure}[t]{0.48\columnwidth}
\includegraphics[width=\columnwidth]{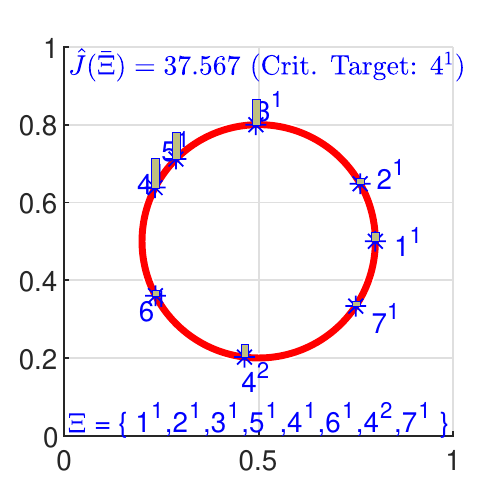}%
\caption{Greedy Cycle: $\bar{\Xi}_G$}
\end{subfigure}
\caption{Example 2: A Constructed Greedy Cycle.}
\label{Fig:GreedyAlgorithmExample2}
\end{figure}

\begin{figure}[h!]
\centering
\begin{subfigure}[t]{0.48\columnwidth}
\includegraphics[width=\columnwidth]{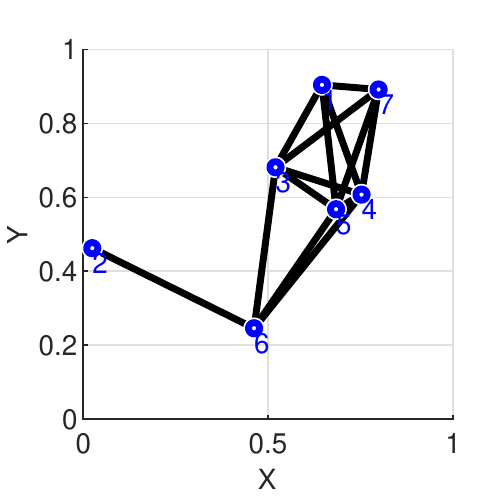}
\caption{Graph $\mathcal{G}=(\mathcal{V},\mathcal{E})$}
\label{fig:traj_simulation}%
\end{subfigure}
\begin{subfigure}[t]{0.48\columnwidth}
\includegraphics[width=\columnwidth]{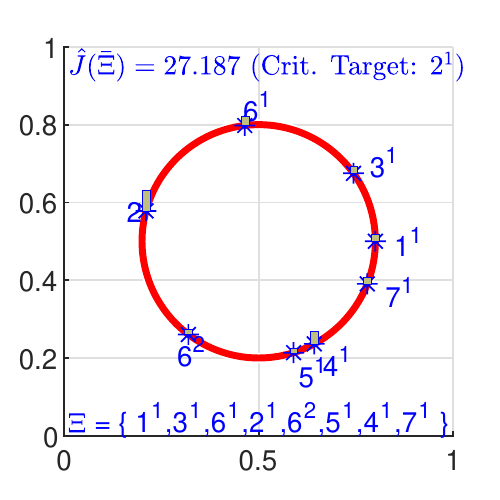}
\caption{Greedy Cycle: $\bar{\Xi}_G$}
\label{fig:traj_simulation2}%
\end{subfigure}
\caption{Example 3: A Constructed Greedy Cycle.}
\label{Fig:GreedyAlgorithmExample3}
\end{figure}

\section{Extension to Multi-Agent Problems}
\label{sec:multi_agent_extension}

In this section, we extend the developed single-agent persistent monitoring solution to handle multi-agent systems. Let us denote the set of agents as $\mathcal{A} = \{1,2,\ldots,N\}$. The key idea here is to partition the target network $\mathcal{G}$ into $N$ sub-graphs and then assign individual agents to each of those sub-graphs. This ``divide and conquer'' approach was motivated by two main reasons: 
(i) to uphold the ``no target sharing'' assumption made early on (due to the fact that sharing is ineffective  \cite{Welikala2020P7,Welikala2019P3}) and 
(ii) to maintain the applicability of the developed single-agent persistent monitoring solution in previous sections. The main steps of the proposed multi-agent persistent monitoring solution are outlined in Alg. \ref{Alg:OverallSolution}. 
In it, to execute Steps 3 and 5, we respectively use the techniques developed in Sections \ref{Sec:GreedyExplorations} and \ref{sec:multiple_visits} (i.e., Alg. \ref{Alg:Greedy_Exploration_Algorithm} and \ref{alg:compute_visitation_times_multiple_visits}). The details of (the remaining) Steps 2 and 4 are provided in the following two subsections. 

\begin{algorithm}[!h]
\caption{The multi-agent persistent monitoring solution.}
\begin{algorithmic}[1]
\State \textbf{Inputs}: The network of targets $\mathcal{G}=(\mathcal{V},\mathcal{E})$ and agents $\mathcal{A}$.
\State Partition the given graph $\mathcal{G}$ into $N$ sub-graphs $\{\mathcal{G}^a\}_{a\in \mathcal{A}}$.
\State Find a high-performing cycle $\bar{\Xi}^{a}$ on each sub-graph $\mathcal{G}^a$.
\State Refine the sub-graphs and respective cycles: $\{\mathcal{G}^a,\bar{\Xi}^a\}_{a\in\mathcal{A}}$. 
\State Compute the optimal dwelling sequence $\mathcal{T}^a$ corresponding to each visiting sequence $\bar{\Xi}^{a}$.
\end{algorithmic}\label{Alg:OverallSolution}
\end{algorithm}

\subsection{Spectral Clustering Based Graph Partitioning}

In order to partition the graph $\mathcal{G}=(\mathcal{V},\mathcal{E})$, we use the well-known \emph{spectral clustering} technique \cite{Luxburg2007} mainly due to its advantages of: (i) simple implementation, (ii) efficient solution and (iii) better results compared to conventional partitioning techniques like the $k$-means algorithm \cite{Luxburg2007}. In particular, the spectral clustering method derives the graph partitions based on a set of inter-target \emph{similarity} values $\{s_{ij}\in \mathbb{R}_{\geq0}:i,j\in\mathcal{V}\}$ so two targets that have high similarity will end up belonging to the same partition and two targets with low similarity to different ones. 


\paragraph*{\textbf{Similarity Values}}
In persistent monitoring, a similarity value $s_{ij}$ should represent the effectiveness of covering both targets $i$ and $j$ in $\mathcal{V}$ using a single agent \cite{Welikala2019Ax2}. Specifically, we define $s_{ij}$ using the \emph{Gaussian Similarity Function} 
\begin{equation}
\label{Eq:DisparityToSimilarity}
s_{ij} = \exp{\left(-\frac{\vert d(i,j) \vert^2 }{2\sigma^2}\right)}.
\end{equation} 
where $d:\mathcal{V}\times\mathcal{V}\rightarrow \mathbb{R}_{\geq 0}$ is a \emph{disparity metric} between two targets and $\sigma$ is a parameter that controls how rapidly the similarity $s_{ij}$ falls off with the disparity $d(i,j)$. According to \eqref{Eq:DisparityToSimilarity}, note that the similarity and disparity metrics are inversely related. We next focus on defining an appropriate disparity metric for the considered persistent monitoring paradigm.

\begin{remark}\label{Rm:InaccurateDisparityMetrics}
As a candidate for the disparity metric $d(i,j)$, neither using the physical distance nor the shortest path distance (between the targets $i$ and $j$) provides a reasonable characterization to the underlying persistent monitoring aspects of the problem - as such metrics disregard target parameters as well as agent behaviors when monitoring targets.
\end{remark}

Considering the above remark, we propose a novel disparity metric named \emph{covering cycle cost} (CCC): 
\begin{equation}\label{Eq:DisparityMetric}
d(i,j) = \min_{\bar{\Xi}:\,i,j \in \bar{\Xi}} \hat{J}(\bar{\Xi}),   
\end{equation} 
with $\hat{J}(\bar{\Xi})$ defined as in \eqref{Eq:CycleMetricGlobal}. The intuition behind this metric is that, given two targets, the similarity is given by the lowest value of the lower bound (that serves as a proxy for the actual post) among all the closed path that contains these two targets.
We also name the $\arg\min$ of \eqref{Eq:DisparityMetric} as the \emph{optimal covering cycle} (OCC) and denote it as $\bar{\Xi}^*_{ij}$. Simply, the OCC $\bar{\Xi}_{ij}^*$ is the best way to cover targets $i,j\in\mathcal{V}$ in a single cycle so that the corresponding cycle metric $\hat{J}(\cdot)$ is minimized. Therefore, if the CCC value is higher for a certain target pair, it implies that it is not effective to cover both those targets in a cycle (in other words, by a single agent). Thus, it is clear that the disparity metric $d(i,j)$ defined in \eqref{Eq:DisparityMetric} provides a good characterization of the underlying persistent monitoring aspects of the problem.

Due to the combinatorial nature of the computation of the metric in \eqref{Eq:DisparityMetric}, we use a greedy technique to obtain a sub-optimal solution for it. In particular, given a target pair $i,j\in\mathcal{V}$, we start with a candidate cycle $\bar{\Xi} = \{i\}$ which is then \emph{iteratively expanded} (by adding external targets in $\mathcal{V}\backslash\bar{\Xi}$) in a greedy manner (using the greedy expansion approach similar to the ones described in Sec. \ref{Sec:GreedyExplorations}) until it includes the target $j$. The terminal state of the candidate cycle $\bar{\Xi}$ is then considered as the OCC $\bar{\Xi}_{ij}^*$. The same procedure is next used to greedily construct the OCC $\bar{\Xi}_{ji}^*$. Finally, the corresponding CCC value, i.e., the disparity metric $d(i,j)$ in \eqref{Eq:DisparityMetric} is estimated as: $d(i,j) = \frac{1}{2}(\hat{J}(\bar{\Xi}_{ij}^*) + \hat{J}(\bar{\Xi}_{ji}^*))$.

Note, however, that for the computation of the similarity metric, the set of targets that will be part of $\Xi$ is not defined a priori. Thus, the greedy expansion process considered here is slightly different from the one in Sec. \ref{Sec:GreedyExplorations}. Therefore, for the sake of completeness, we now provide the details of the iterative greedy cycle expansion mechanism considered in this section. Take $\bar{\Xi}$ as the current version of the candidate cycle in a certain greedy iteration. Here $\bar{\xi}$ (similarly to $\xi$) represents the corresponding sequence of edges. Note however that, unlike $\xi$, the sequence $\bar{\Xi}$ may not contain all the targets in $\mathcal{V}$ (i.e., $\vert \mathcal{V}\backslash\bar{\Xi}\vert > 0$). There are three types of cycle expansion operations (CEOs) as shown in Fig. \ref{Fig:CEOs} that can be used to expand the current cycle $\bar{\Xi}$ so that it includes an external target $i\in\mathcal{V}\backslash\bar{\Xi}$. 

\paragraph*{\textbf{CEO Type - I}} Replace an edge $(l,j)\in\bar{\xi}$ with two edges: $\{(l,i),(i,j)\}$.

\paragraph*{\textbf{CEO Type - II}} Select a target $j\in\bar{\Xi}$ and inset two target $\{i,j\}$ following $j$ in the sequence $\Xi$.

\paragraph*{\textbf{CEO Type - III}} Select two targets $l,j\in\bar{\Xi}$ such that removing all the intermediate targets between them will not reduce the number of distinct targets in $\bar{\Xi}$.Then replace those intermediate targets with the external target $i$.

\begin{figure}[!h]
    \centering
    \includegraphics[width=0.8\columnwidth]{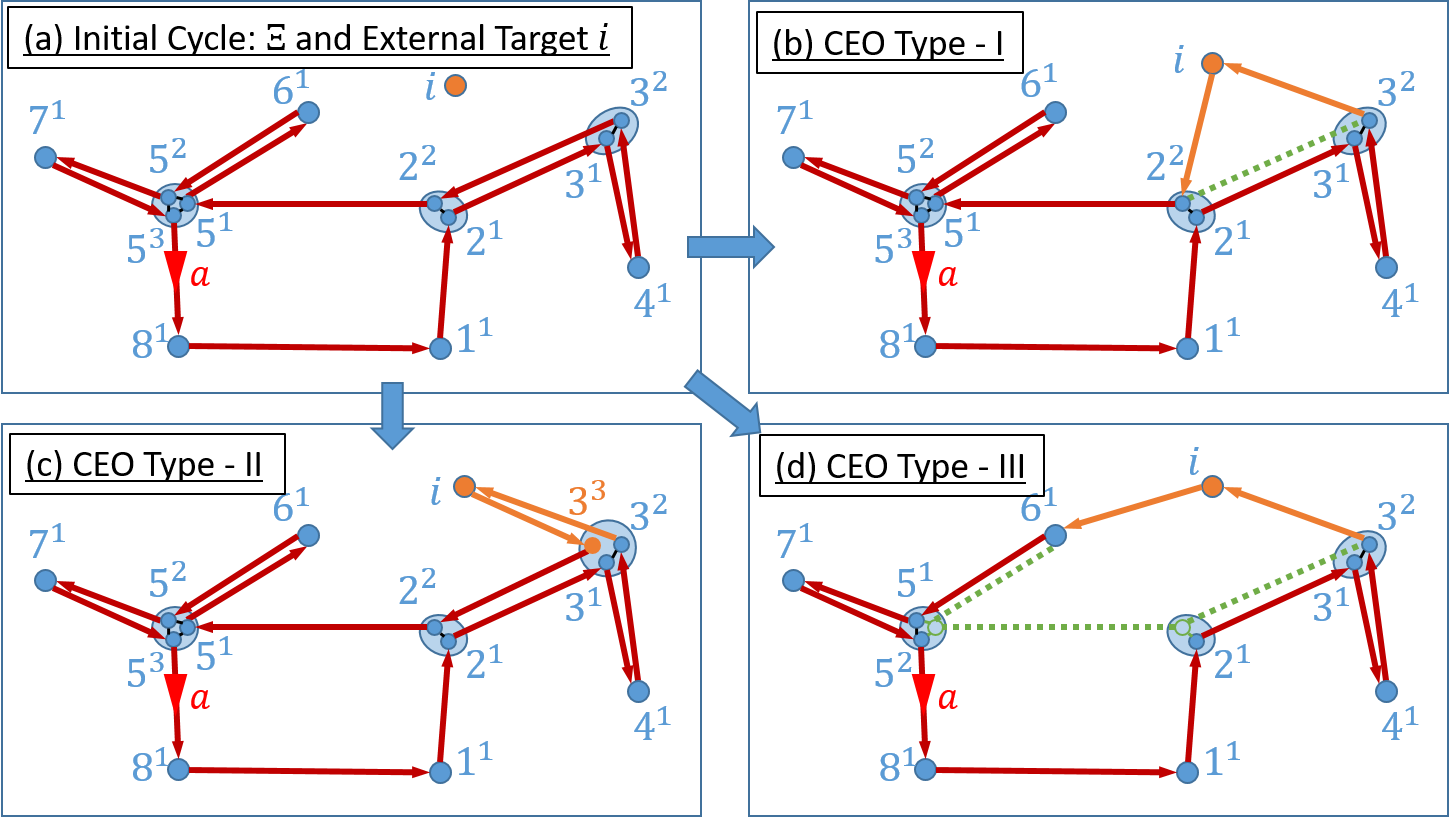}
    \caption{Three types of cycle expansion operations (CEOs).}
    \label{Fig:CEOs}
\end{figure}

In this greedy algorithm, the gain of executing a CEO Type - $Y$ where $Y\in\{\mbox{I, II, III}\}$ defined as
\begin{equation}\label{Eq:CEOGain}
    \Delta \hat{J}(\bar{\Xi}',\,j\vert \bar{\Xi},Y) = \hat{J}(\bar{\Xi}) - \hat{J}(\bar{\Xi}') 
\end{equation}
is exploited to chose the most suitable: 
(i) CEO type $Y$, 
(ii) external target $j$ and 
(iii) expanded cycle $\bar{\Xi}'$ (here, $\bar{\Xi}$ stands for the current cycle). The exact form of the greedy algorithm is provided in Alg. \ref{Alg:DisparityValues}. In all, we use Alg. \ref{Alg:DisparityValues} together with \eqref{Eq:DisparityToSimilarity} to compute the similarity values: $s_{ij},\,\forall i,j\in\mathcal{V}$ between different targets in the network $\mathcal{G}$.

\begin{algorithm}[!h]
\caption{The greedy cycle expansion algorithm for the computation of disparity values $\{d(i,j): i,j\in\mathcal{V}\}$}
\begin{algorithmic}[1]
\State \textbf{Input}: Network of targets $\mathcal{G} = (\mathcal{V},\mathcal{E})$
\State $d(i,j) \leftarrow 0,\,\forall i,j \in \mathcal{V}$;
\For{$i \in \mathcal{V}$} 
\Comment{For each start node.}
    \State $\bar{\Xi} \leftarrow \{i\}$;  \Comment{Initial cycle.} 
    \While{$\vert\mathcal{V}\backslash \bar{\Xi}\vert > 0$} 
    \Comment{Add all external targets.}
        \State $[\bar{\Xi}',\, j] \leftarrow \underset{\bar{\Xi}',\, j\in\mathcal{V}\backslash\bar{\Xi},\, Y\in\{\mbox{I,\,II,\,III}\}}{\arg\max}\,\Delta \hat{J}(\bar{\Xi}',\, j\vert \bar{\Xi},Y)$;
        \Comment{The expanded cycle and the added external target.}
        \State $\bar{\Xi}' \leftarrow \underset{\bar{\Xi}, Y\in\{\mbox{II,\,III}\}}{\argmax} \Delta \hat{J}(\bar{\Xi} \vert \bar{\Xi}',Y)$ 
        \Comment{An optional further refinement for $\bar{\Xi}'$ step based on CMOs.}
        \State $d(i,j) = d(i,j) + \frac{1}{2}\hat{J}(\bar{\Xi}')$;
        \State $\bar{\Xi}\leftarrow \bar{\Xi}'$ \Comment{Update the current cycle.}
    \EndWhile
\EndFor
\State \textbf{Return} $\{d(i,j):\ i,j \in \mathcal{V}\};$
\end{algorithmic}\label{Alg:DisparityValues}
\end{algorithm}

\paragraph*{\textbf{Spectral Clustering Algorithm}}
In spectral clustering, the \emph{Weighted Adjacency Matrix} $W$, the \emph{Degree Matrix} $D$, and the \emph{Laplacian Matrix} $L$ plays a main role \cite{Luxburg2007}. In our case, $W=S=[s_{ij}]_{(i,j)\in\mathcal{V}}$ (i.e., the \emph{Similarity Matrix}), $D = diag(d_1,d_2,\cdots,d_M)$ where $d_i = \sum_{j=1}^M s_{ij}$, and $L = D - W$. Similarly to \cite{Welikala2019Ax2}, we specifically use the normalized spectral clustering technique proposed in \cite{ShiMalik2000} where the normalized Laplacian $L_{rw}=D^{-1}L$ is used instead of $L$.

Algorithm \ref{Alg:SpectralClusteringAlgorithm} outlines the used normalized spectral clustering method. It gives the target clusters $\mathcal{V}^1,\mathcal{V}^2,\ldots,\mathcal{V}^N$ where each $\mathcal{V}^a,\,a\in\mathcal{A}$ is then used to form a \emph{sub-graph} $\mathcal{G}^a = (\mathcal{V}^a,\mathcal{E}^a)$ out of the given graph $\mathcal{G}=(\mathcal{V},\mathcal{E})$ by selecting $\mathcal{E}^a\subseteq \mathcal{E}$ as the set of intra-cluster edges taken from $\mathcal{E}$. Note that the set of inter-cluster edges (i.e., $\mathcal{E}\backslash \cup_{a\in\mathcal{A}} \mathcal{E}^a$) are now not included in any one of these sub-graphs.

\begin{algorithm}[!h]
\caption{Normalized Spectral Clustering Algorithm \cite{ShiMalik2000}}
\begin{algorithmic}[1]
\State \textbf{Input}: Normalized Laplacian $L_{rw}$.
\State Compute first $N$ eigenvectors of $L_{rw}$ as: $u_1,u_2,\cdots,u_N$.
\State Form $U \in \mathbb{R}^{M \times N}$ using $u_1,u_2,\cdots,u_N$ as its columns.
\State For $i = 1,\cdots,M,$ let $y_i \in \mathbb{R}^N$ be the $i$\textsuperscript{th} row of $U$.
\State Cluster the data points $\{y_i\}_{i=1,\cdots,M}$ using the k-means algorithm into $N$ clusters as: $C_1,C_2, \cdots, C_N$.
\State \textbf{Return} $\{\mathcal{V}^1,\mathcal{V}^2,\ldots,\mathcal{V}^N\}$ with each $\mathcal{V}^{i} = \{j: y_j \in C_i\}$
\end{algorithmic}\label{Alg:SpectralClusteringAlgorithm}
\end{algorithm}

Once the sub-graphs are formed, we follow Step 4 of Alg. \ref{Alg:OverallSolution} by executing the greedy high-performing cycle construction procedure (i.e., Alg. \ref{Alg:Greedy_Exploration_Algorithm}) for each sub-graph. The resulting visiting sequence on a sub-graph $\mathcal{G}^a$ is denoted as $\bar{\Xi}^a$ and is assumed to be assigned (arbitrarily) to an agent $a\in \mathcal{A}$.

\begin{figure*}[h!]
\centering
\begin{subfigure}[t]{0.28\columnwidth}
\includegraphics[width=\columnwidth]{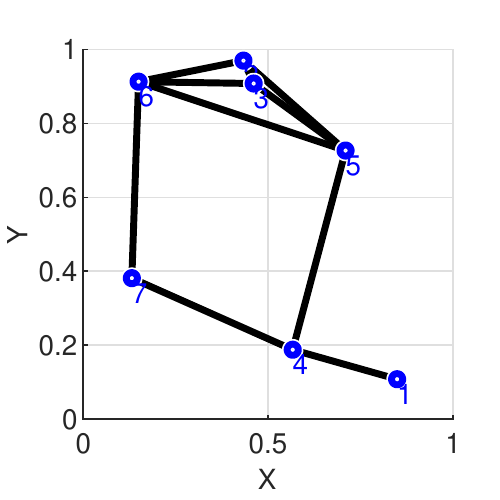}
\caption{Graph $\mathcal{G}$}
\end{subfigure}%
\begin{subfigure}[t]{0.28\columnwidth}
\includegraphics[width=\columnwidth]{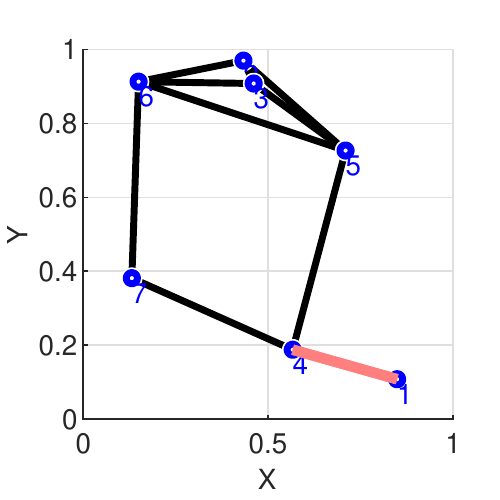}
\caption{$\hat{J}(\bar{\Xi}') = 11.3$}
\end{subfigure}
\begin{subfigure}[t]{0.28\columnwidth}
\includegraphics[width=\columnwidth]{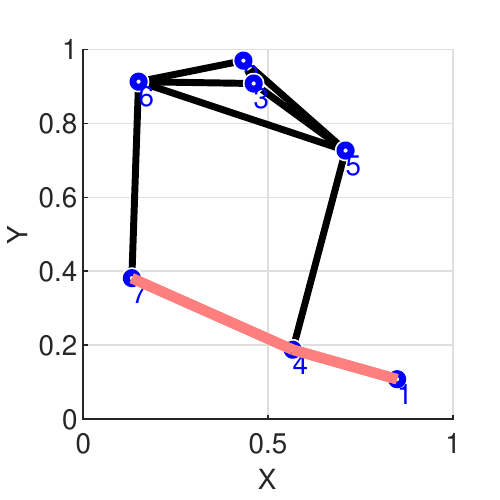}
\caption{$\hat{J}(\bar{\Xi}') = 19.5$}
\end{subfigure}
\begin{subfigure}[t]{0.28\columnwidth}
\includegraphics[width=\columnwidth]{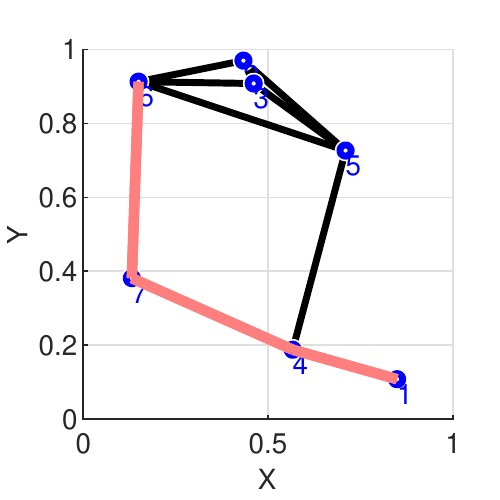}
\caption{$\hat{J}(\bar{\Xi}') = 42.1$}
\end{subfigure}
\begin{subfigure}[t]{0.28\columnwidth}
\includegraphics[width=\columnwidth]{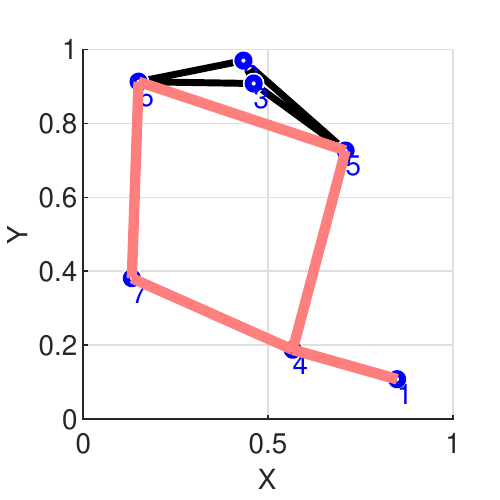}
\caption{$\hat{J}(\bar{\Xi}') = 46.6$}
\end{subfigure}
\begin{subfigure}[t]{0.28\columnwidth}
\includegraphics[width=\columnwidth]{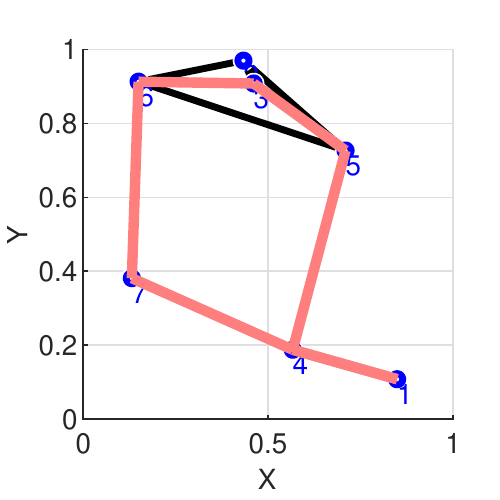}
\caption{$\hat{J}(\bar{\Xi}') = 47.5$}
\end{subfigure}
\begin{subfigure}[t]{0.28\columnwidth}
\includegraphics[width=\columnwidth]{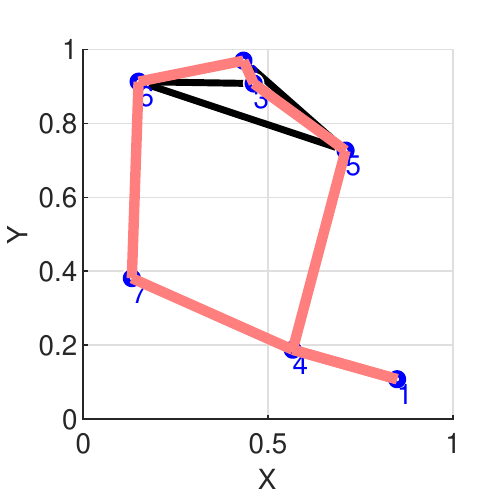}
\caption{$\hat{J}(\bar{\Xi}') = 49.1$}
\end{subfigure}
\caption{Greedy cycle expansion process for the computation of disparity values w.r.t. target 1: $\{d(1,j):j\in\mathcal{V}\}$ using Alg. \ref{Alg:DisparityValues}. The red contours in (b)-(g) show the expanded cycle $\bar{\Xi}'$ after executing Steps 5-9 of Alg. \ref{Alg:DisparityValues} with $i=1$.}
\label{Fig:GreedyCycleExpansions}
\end{figure*}

\begin{figure*}[htp!]
\centering
\begin{subfigure}[t]{0.4\columnwidth}
\includegraphics[width=\columnwidth]{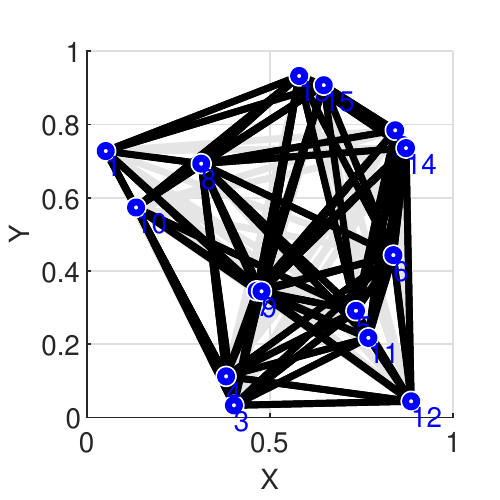}
\caption{Graph $\mathcal{G}$}
\end{subfigure}%
\begin{subfigure}[t]{0.4\columnwidth}
\includegraphics[width=\columnwidth]{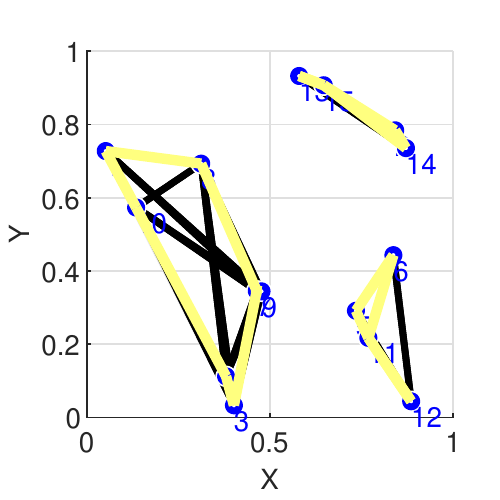}
\caption{$(\mathcal{G}^a,\bar{\Xi}^a): a\in \mathcal{A}$}
\end{subfigure}
\begin{subfigure}[t]{0.4\columnwidth}
\includegraphics[width=\columnwidth]{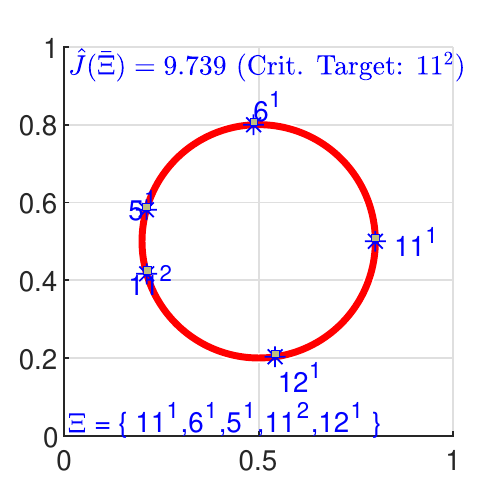}
\caption{Agent 1: $\bar{\Xi}^1$}
\end{subfigure}
\begin{subfigure}[t]{0.4\columnwidth}
\includegraphics[width=\columnwidth]{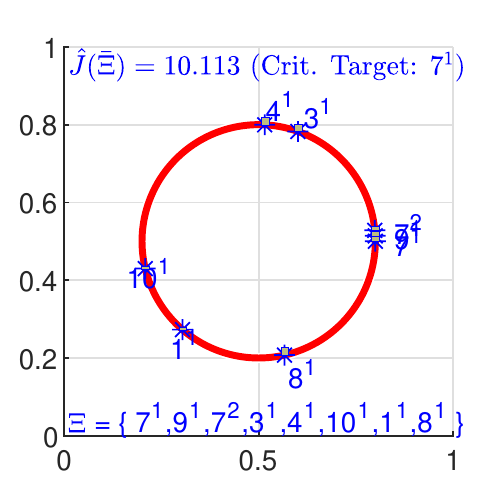}
\caption{Agent 2: $\bar{\Xi}^2$}
\end{subfigure}
\begin{subfigure}[t]{0.4\columnwidth}
\includegraphics[width=\columnwidth]{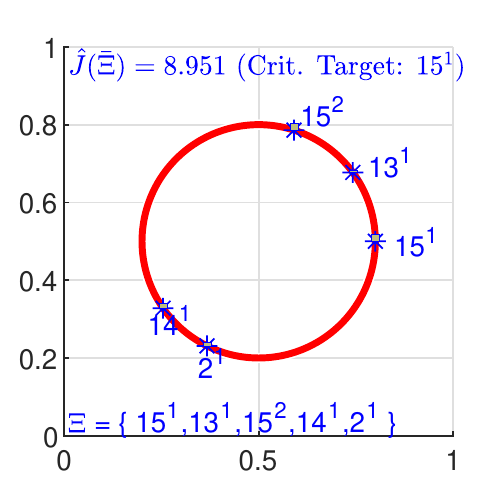}
\caption{Agent 3: $\bar{\Xi}^3$}
\end{subfigure}
\caption{Clustering results and the greedy cycles constructed in each sub-graph for individual agents.}
\label{Fig:ClusteringResults1}
\end{figure*}

\begin{figure}[htp!]
\centering
\begin{subfigure}[t]{0.48\columnwidth}
\includegraphics[width=\columnwidth]{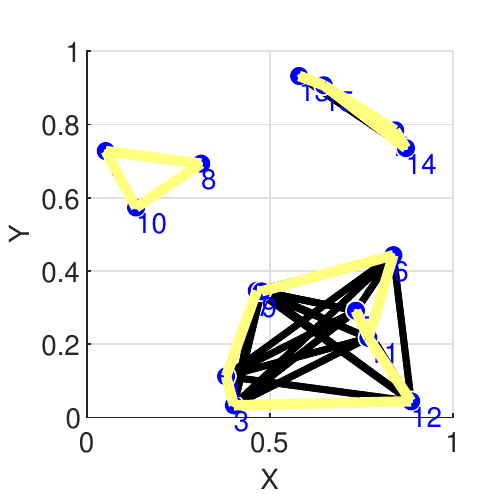}
\caption{$(\mathcal{G}^a,\bar{\Xi}^a): a\in \mathcal{A}$}
\end{subfigure}
\begin{subfigure}[t]{0.48\columnwidth}
\includegraphics[width=\columnwidth]{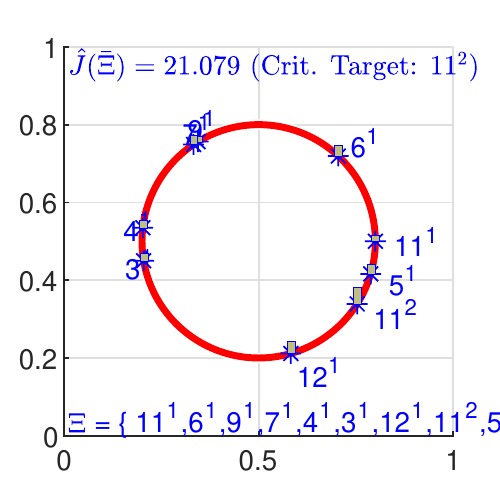}
\caption{Agent 1: $\bar{\Xi}^1$}
\end{subfigure}%
\caption{Clustering results (for the graph in Fig. \ref{Fig:ClusteringResults1}(a)) when the shortest path distance is used as the disparity metric.}
\label{Fig:ClusteringResults2}
\end{figure}

\paragraph*{\textbf{Simulation Results}}
Figure \ref{Fig:GreedyCycleExpansions} shows a few intermediate results obtained when executing Alg. \ref{Alg:DisparityValues} (to compute disparity values according to \eqref{Eq:DisparityMetric}) for an example target network.

Figure \ref{Fig:ClusteringResults1}(a) shows an example target network with 15 targets. Assuming that three agents $\mathcal{A}=\{1,2,3\}$ are to be deployed to monitor these targets, Fig. \ref{Fig:ClusteringResults1}(b) shows the generated sub-graphs when Alg. \ref{Alg:SpectralClusteringAlgorithm} is used.
The yellow contours indicate the cycles constructed within each sub-graph (using Alg. \ref{Alg:Greedy_Exploration_Algorithm}) for each agent to traverse. Figs. \ref{Fig:ClusteringResults1}(c)-(e) provide details of each agent's cycle. In these figures, notice that $\hat{J}(\bar{\Xi}^1) = 9.739,\ \hat{J}(\bar{\Xi}^2) = 10.113$ and $\hat{J}(\bar{\Xi}^1) = 8.951$, which implies that the worst $L_i$ \eqref{eq:lower_bound_definition} value over the network is $10.113$ (by sub-cycle $\bar{\Xi}_7^1$ of target instant $7^1$ on agents 2's cycle).  

We next use the same problem setup in Fig. \ref{Fig:ClusteringResults1}(a) to highlight the importance of the proposed disparity metric in \eqref{Eq:DisparityMetric}. Figure \ref{Fig:ClusteringResults2}(a) shows the clustering result obtained when the similarity values required in Alg. \ref{Alg:SpectralClusteringAlgorithm} are computed using a shortest path distance based disparity metric (instead of \eqref{Eq:DisparityMetric}). Observing the cycle assigned for the Agent 1 shown in Fig. \ref{Fig:ClusteringResults2}(b), it is evident that now the worst $L_i$ \eqref{eq:lower_bound_definition} value over the network is $21.079$ (i.e., a $108.4\%$ degradation from the use of \eqref{Eq:CEOGain}).

\subsection{Target-exchange scheme (TES) used to refine sub-graphs}

Practically, it is preferred to have the persistent monitoring load \emph{balanced} across all the deployed agents. In other words, we prefer to have sub-graphs $\{\mathcal{G}^a\}_{a\in\mathcal{A}}$ that have a numerically closer set of visiting sequence metrics $\{\hat{J}(\bar{\Xi}^a)\}_{a\in\mathcal{A}}$. As a result of the used disparity metric in \eqref{Eq:DisparityMetric}, the spectral clustering algorithm (Alg. \ref{Alg:SpectralClusteringAlgorithm}) often directly provides such a balanced set of sub-graphs (e.g., like in Fig. \ref{Fig:ClusteringResults1}(b) as opposed to Fig. \ref{Fig:ClusteringResults2}(a)).

However, to further enforce this requirement, we propose a target exchange scheme (TES) between sub-graphs that attempts to balance the sub-graphs by minimizing the metric:
\begin{equation}\label{Eq:TESObjective}
    \hat{J}(\mathcal{G}) = \max_{a\in\mathcal{A}}\hat{J}(\bar{\Xi}^a).
\end{equation}
Note that \eqref{Eq:TESObjective} can be simplified using \eqref{Eq:CycleMetricGlobal}, \eqref{eq:maximum_revisit_time} and  \eqref{eq:lower_bound_definition} into:
\begin{equation}\label{Eq:TESObjectiveSimplified}
\hat{J}(\mathcal{G}) = \max_{a\in\mathcal{A}}\max_{i\in\mathcal{V}^a}L_i\Big(
    \max_{k:i^k \in \tau_{i}}\ w_i^k\Big). 
\end{equation}

According to \eqref{Eq:TESObjectiveSimplified}, note that $\hat{J}(\mathcal{G})$ is determined by a specific (critical) target $i^k \in \Xi^a$ where the critical $i,k,a$ are the optimal arguments of the three optimization problems involved in the R.H.S. of \eqref{Eq:TESObjectiveSimplified}. Analogous to Lemma \ref{Lm:CriticalTargetVisit}, it is clear that the only way to decrease $\hat{J}(\mathcal{G})$ is by modifying the sub-cycle corresponding to this particular critical target instant $i^k\in\bar{\Xi}^a$. Let us denote this critical sub-cycle as $\Xi_i^{k,a}$ and note that it is a segment of the cycle $\bar{\Xi}^a$ constructed on the sub-graph $\mathcal{G}^a$. A such feasible sub-cycle modification is to remove a target $j\in\Xi_i^{k,a}$ from agent $a$'s trajectory (i.e., from sub-graph $\mathcal{G}^a$). Clearly, one of the neighboring agents $b \in \mathcal{A}\backslash\{a\}$ will have to annex this removed target $j$ into its cycle $\bar{\Xi}^b$.

Upon such a target exchange between sub-graphs  $\mathcal{G}^a$ and $\mathcal{G}^b$, typically, both cycles $\bar{\Xi}^a$ and $\bar{\Xi}^b$ should be re-computed (using Alg. \ref{Alg:Greedy_Exploration_Algorithm}) respectively on the updated sub-graphs $\mathcal{G}^a$ and $\mathcal{G}^b$. Since Alg. \ref{Alg:Greedy_Exploration_Algorithm} involves solving a TSP problem, we only use it when the optimal target exchange, i.e., the optimal target $j^* \in \Xi_i^{k,a}$ to remove and the optimal neighbor $b^* \in \mathcal{A}\backslash\{a\}$ to receive, is known. To find this optimal $b^*$ and $j^*$, we again use a computationally efficient greedy scheme as follows. 

First, we estimate the gain of annexing an external target $j$ to a cycle $\bar{\Xi}^{b}$ as $\Delta\hat{J}_A(j,\bar{\Xi}^b)$ where 
\begin{equation}\label{Eq:CycleExpansion}
    \Delta\hat{J}_A(j,\bar{\Xi}^b) =  \max_{\bar{\Xi}^{b'},\ Y\in\{\mbox{I,\,II,\,III}\}}
    \Delta \hat{J}(\bar{\Xi}^{b'}, j \vert \bar{\Xi}^b,Y),
\end{equation}
where  $\Delta\hat{J}(\bar{\Xi}^{b'}, j \vert \bar{\Xi}^b,Y)$ is given in \eqref{Eq:CEOGain} and $\mbox{I,\,II,\,III}\}$ refer to the three CEOs considered in this paper.

Next, we estimate the gain of removing a target $j$ from a cycle $\bar{\Xi}^a$ as $\Delta\hat{J}_R(j,\bar{\Xi}^a)$ where 
\begin{equation}\label{Eq:CycleContraction}
    \Delta\hat{J}_R(j,\bar{\Xi}^a) = \hat{J}(\bar{\Xi}^a) - \hat{J}(\bar{\Xi}^{a\#}),
\end{equation} 
where $\bar{\Xi}^{a\#}$ represents the contracted version of the cycle $\bar{\Xi}^a$. This contracted version is obtained by following the steps: (i) remove the entries of $j$ from $\bar{\Xi}^a$, (ii) bridge the gaps created by this removal using corresponding fastest paths and (iii) refine the obtained cycle (say $\bar{\Xi}^{a1}$) using CMOs \eqref{Eq:CMOGain}. In particular,   
\begin{equation}
    \bar{\Xi}^{a\#} = 
    \underset{\bar{\Xi}^{a'},\,Y\in\{\mbox{II,III}\}}{\argmin}\hat{J}(\bar{\Xi}^{a'}\vert \bar{\Xi}^{a1}, Y).
\end{equation}

The final step of determining the optimal target exchange: $b^*$, $j^*$ exploits $\Delta\hat{J}_A(j,\bar{\Xi}^b)$ and $\Delta\hat{J}_R(j,\bar{\Xi}^a)$ functions defined respectively in \eqref{Eq:CycleExpansion} and \eqref{Eq:CycleContraction}. In particular, $b^*$ and $j^*$ are:
\begin{equation}
    [b^*,j^*] = \underset{b\in\mathcal{A}\backslash \{a\},\ j \in\Xi_i^{a,k}}{\argmax} \Delta\hat{J}_A(j,\bar{\Xi}^b) + \Delta\hat{J}_R(j,\bar{\Xi}^a).
\end{equation}

Continuing the previous discussion, upon finding this optimal target exchange, we modify the sub-graphs $\mathcal{G}^a$ and $\mathcal{G}^{b^*}$ appropriately and use Alg. \ref{Alg:Greedy_Exploration_Algorithm} to re-compute the respective cycles $\bar{\Xi}^a$ and $\bar{\Xi}^{b^*}$ on them. However, we only commit to this target exchange if the resulting $\hat{J}(\mathcal{G})$ in \eqref{Eq:TESObjective} is better (i.e., smaller) than before. Clearly, this target exchange process can be iteratively executed until there is no feasible target exchange that results in an improved $\hat{J}(\mathcal{G})$ metric. 

The proposed TES above is distributed as it only involves one agent (sub-graph) and its neighbors at any iteration. Further, the proposed TES is convergent due to the same arguments made in the proof of Lemma \ref{Lm:Convergence}. Finally, apart from the fact that this TES leads to a balanced set of sub-graphs, it also makes the final set of sub-graphs independent of the clustering parameter $\sigma$ in  \eqref{Eq:DisparityToSimilarity} that we have to choose.

\paragraph*{\textbf{Simulation Results}} Figures \ref{fig:Example1TES} and \ref{fig:Example2TES} illustrate the operation and the advantages of the proposed TES for two different persistent monitoring problems.

\begin{figure}[t]
\centering
\begin{subfigure}[t]{0.48\columnwidth}
\includegraphics[width=\columnwidth]{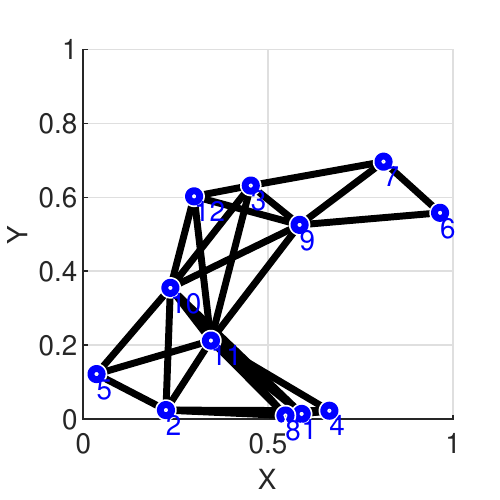}
\caption{Graph $\mathcal{G}=(\mathcal{V},\mathcal{E})$}
\end{subfigure}%
\begin{subfigure}[t]{0.48\columnwidth}
\includegraphics[width=\columnwidth]{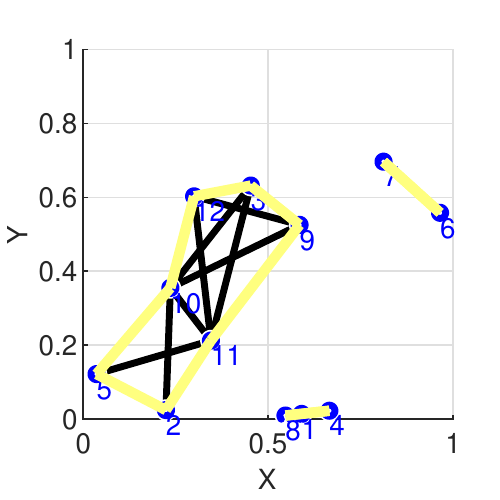}
\caption{Initial sub-graphs and cycles}
\end{subfigure}
\begin{subfigure}[t]{0.48\columnwidth}
\includegraphics[width=\columnwidth]{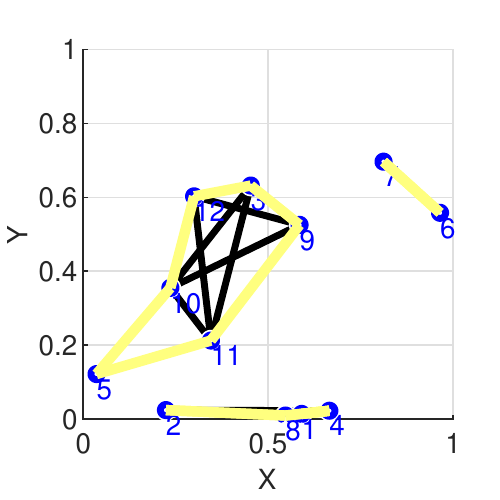}
\caption{After target exchange 1}
\end{subfigure}
\begin{subfigure}[t]{0.48\columnwidth}
\includegraphics[width=\columnwidth]{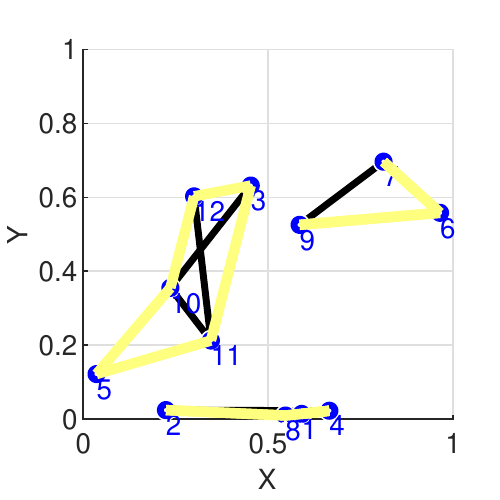}
\caption{After target exchange 2}
\end{subfigure}
\caption{Target Exchange Scheme (TES) Example 1. Initial set of sub-graphs (in (b)): $\hat{J}(\mathcal{G}) = 16.3 = \max\{16.3,\, 6.9,\,9.7\}$. Final set of sub-graphs (in (d)): $\hat{J}(\mathcal{G}) = 11.6 = \max\{11.2,\,11.6,\,10.9\}$ (Balanced and improved by $28.8\%$).}
\label{fig:Example1TES}
\end{figure}

\begin{figure*}[t]
\centering
\begin{subfigure}[t]{0.4\columnwidth}
\includegraphics[width=\columnwidth]{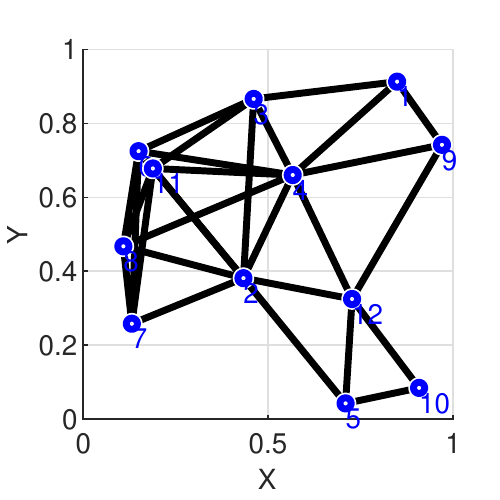}
\caption{Graph $\mathcal{G}=(\mathcal{V},\mathcal{E})$}
\end{subfigure}%
\begin{subfigure}[t]{0.4\columnwidth}
\includegraphics[width=\columnwidth]{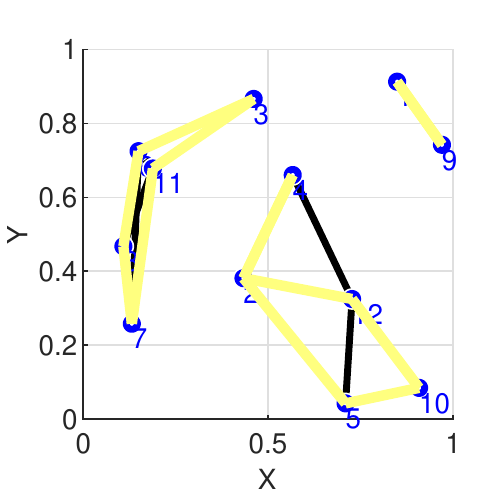}
\caption{Initial sub-graphs}
\end{subfigure}
\begin{subfigure}[t]{0.4\columnwidth}
\includegraphics[width=\columnwidth]{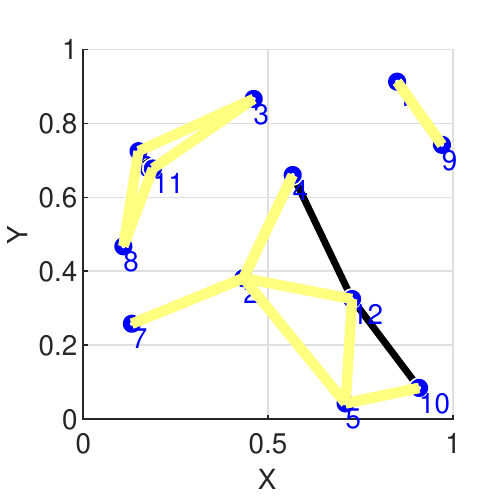}
\caption{After target exchange 1}
\end{subfigure}
\begin{subfigure}[t]{0.4\columnwidth}
\includegraphics[width=\columnwidth]{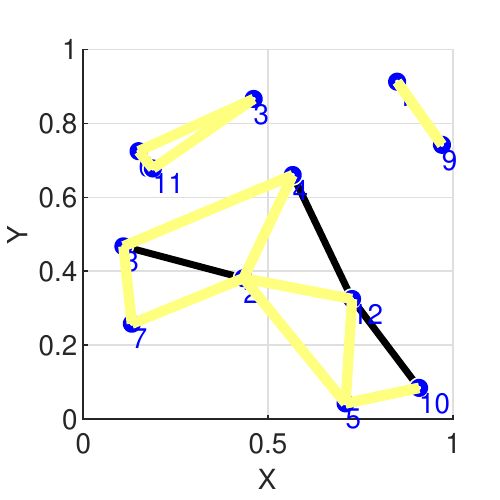}
\caption{After target exchange 2}
\end{subfigure}
\begin{subfigure}[t]{0.4\columnwidth}
\includegraphics[width=\columnwidth]{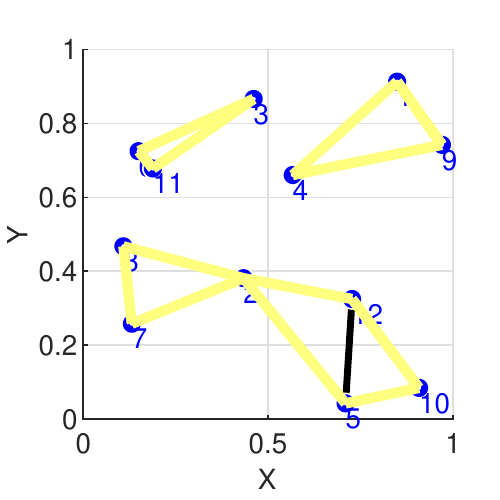}
\caption{After target exchange 3}
\end{subfigure}
\caption{Target Exchange Scheme (TES) Example 2. Initial set of sub-graphs (in (b)): $\hat{J}(\mathcal{G}) = 12.3 = \max\{7.2,\, 12.3,\,4.4\}$. Final set of sub-graphs (in (e)): $\hat{J}(\mathcal{G}) = 8.3 = \max\{7.7,\,8.3,\,5.6\}$ (Balanced and improved by $32.5\%$).}
\label{fig:Example2TES}
\end{figure*}

Finally, we use the randomly generated persistent monitoring problems shown in Fig. \ref{Fig:InitialCondition1} to compare the performance (in terms of the cost $J$ in \eqref{eq:cost}) of persistent monitoring solutions: (i) the minimax approach proposed in this paper (labeled ``Minimax''), (ii) the event-driven receding horizon control approach proposed in \cite{Welikala2020P7} (labeled ``RHC'') and (iii) the threshold-based basic distributed control approach proposed in \cite{Welikala2020P7} (labeled ``BDC''). It should be highlighted that, compared to the minimax solution, both RHC and BDC solutions: (i) are distributed on-line, (ii) can only handle one-dimensional target state dynamics and (iii) have the objective of minimizing the average overall target error covariance observed over a finite period. Despite these structural differences, it is important to note that all three solutions have the same ultimate goal of maintaining the target error covariances as low as possible. 

In particular, Fig. \ref{Fig:InitialCondition1}(a) and \ref{Fig:InitialCondition1}(b) show the initial conditions (i.e., at $t=0$) of the two persistent monitoring problem setups considered. After executing the agent controls given by Minimax, RHC and BDC approaches until $t=500$, the cost value $J$ in \eqref{eq:cost} was evaluated as the maximum target error covariance value observed in the period $t\in[450,500]$. The observed cost values are summarized in Tab. \ref{Tab:ComparisonResults}. The superiority of the Minimax approach proposed in this paper is evident from the reported simulation results in Tab. \ref{Tab:ComparisonResults}. We note however that the other methods in Tab. \ref{Tab:ComparisonResults} were not designed for the specific cost metric we consider in this paper and, to the best of the authors knowledge, there is no algorithm in the scientific literature designed specifically for the cost function we consider there.


\begin{table}[!h]
\centering
\caption{Performance comparison of different agent control methods under different persistent monitoring problem setups}
\label{Tab:ComparisonResults}
\resizebox{0.8\columnwidth}{!}{%
\begin{tabular}{|c|c|c|r|r|}
\hline
\multicolumn{2}{|c|}{\multirow{2}{*}{Cost $J$ in \eqref{eq:cost}}} & \multicolumn{3}{c|}{Agent Control Method} \\ \cline{3-5} 
\multicolumn{2}{|c|}{} & Minimax & \multicolumn{1}{c|}{BDC \cite{Welikala2020P7}} & \multicolumn{1}{c|}{RHC \cite{Welikala2020P7}} \\ \hline
\multirow{2}{*}{\begin{tabular}[c]{@{}c@{}}Problem \\ Configuration\end{tabular}} & 1 & \multicolumn{1}{r|}{\textbf{40.23}} & 54.47 & 56.65 \\ \cline{2-5} 
 & 2 & \multicolumn{1}{r|}{\textbf{46.17}} & 456.55 & 110.87 \\ \hline
\end{tabular}%
}
\end{table}

\begin{figure}[t]
\centering
\begin{subfigure}[t]{0.48\columnwidth}
\includegraphics[width=\columnwidth]{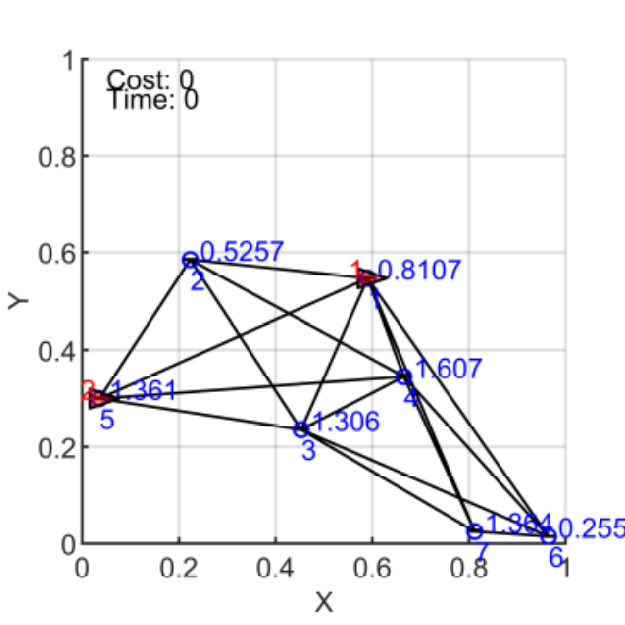}
\caption{Problem Configuration 1}
\end{subfigure}%
\begin{subfigure}[t]{0.48\columnwidth}
\includegraphics[width=\columnwidth]{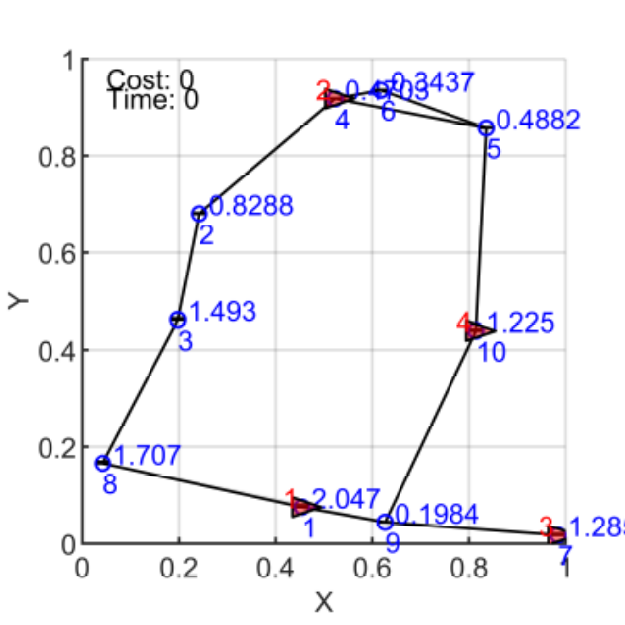}
\caption{Problem Configuration 2}
\end{subfigure}%
\caption{The persistent monitoring problem setups used in the performance comparison (at the initial condition).}
\label{Fig:InitialCondition1}
\end{figure}



\section{Conclusion}
\label{sec:conclusion}
In this paper, we discussed the problem of minimizing the maximum uncertainty of dynamically evolving internal states of targets using multiple mobile sensors. We started from a single agent perspective, presenting provably optimal algorithms for the case where each agent is visited only once. Then we extended this notion to situations where targets are allowed to be visited multiple times in a single period. In order to provide efficient solutions, we developed a greedy scheme to explore the possible visiting sequences. We then extended these results to multi-agent scenarios, assuming that each target would be visited by only one agent. We presented a clustering algorithm to split the targets among different agents. Finally we proposed a distributed scheme for swapping targets between different agents, improving the quality of the initial clustering. Simulation results show the efficiency of our proposed technique.

\appendix
\subsection{Proof of Lemma \ref{lemma:pd_nd_time_derivative}}
\label{app:proof_pd_nd_time_derivative}
Let $\Omega_{i,ss}$ be the solution of the algebraic Riccati equation given by \eqref{eq:dynamics_omega}, when $\eta_i(t)=1$ and $\dot{\Omega}_i=0$, i.e.
\begin{equation}
    \label{eq:dynamics_omega_ss_not_observed}
    A_i\Omega_{i,ss}+\Omega_{i,ss}A_i'+Q_i=0.
\end{equation} Since the system is observable and $Q_i\succ 0$, ${\Omega}_{i,ss}$  is guaranteed to be unique and positive definite \cite{bittanti2012riccati}.

We then define a similar concept to $\Omega_{i,ss}$, but for the case where $\eta_i(t)=0$, $\forall t$. Intuitively, we want to define $\Omega_{i}^\infty$ as being the covariance matrix when the target is never observed. Note that when $A_i$ is unstable, the covariance matrix will diverge thus $\Omega_{i}^\infty$ is not well defined. 
In particular, if $A_i$ is unstable but some of the eigenvalues of $A_i$ are negative, the covariance does not diverge in every direction. Thus, in order to overcome these peculiarities in ``$\Omega_{i}^\infty$", for a given vector $\zeta \in \mathbb{C}^{L_i}$ (i.e. $\zeta$ has the same dimension as the the target state), we define $\zeta^* \Omega_i^\infty \zeta$ as:
\begin{equation}
\label{eq:def_omega_ss}
\zeta^* \Omega_i^\infty \zeta=\lim_{t\rightarrow\infty}\zeta^*\left(\int_0^{t}\exp(A_i\gamma)Q_i\exp(A_i'\gamma)d\gamma\right)\zeta.
\end{equation}
Note that, if $\eta_i(t)=0$, $\lim_{t\rightarrow \infty}\zeta^*\Omega_i(t)\zeta = \zeta^* \Omega_i^\infty \zeta$, independently of the initial condition $\Omega_i(0)$ \cite{bittanti2012riccati}. 

Using Theorem 2 in \cite{dieci1996preserving}, we get that the steady state covariance matrix $\bar{\Omega}_i(t)$ generated by a cyclic schedule where target $i$ observed by some time part of the cycle (but not the entire cycle) is such that:
\begin{equation}
    \label{eq:zeta_ss_without_observation}
    \zeta^*\Omega_{i,ss}\zeta < \zeta^*\bar{\Omega}_i(t)\zeta < \zeta^* \Omega_{i}^\infty \zeta,
\end{equation}
for $\forall \zeta\neq 0 \in \mathbb{R}^{L_i}$ such that $\zeta^* \Omega_{i}^\infty \zeta$ is bounded. Informally, one can think of $\Omega_{i,ss}$ and $\Omega_{i}^\infty$ as being respectively lower and upper bounds on the covariance matrix.

Now that we have defined these lower and upper bounds, we go back to analyze the steady state covariance matrix resulting from an agent trajectory that visits target $i$. First we show that, at steady state, the covariance will increase when the target is not observed. Towards that, we analyze dynamics of the steady state covariance matrix $\dot{\bar{\Omega}}_i$ in time instants where the target is not observed ($\eta_i=0$), thus $\dot{\bar{\Omega}}_i=A_i\bar{\Omega}_i+\bar{\Omega}_iA_i'+Q_i$. If we pick $\zeta$ to be a right eigenvector of $A_i'$ and its corresponding eigenvalue $\lambda$, we get that
\begin{equation}
\label{eq:zeta_proposition_eta_omega_derivative}
\zeta^*(A_i\bar{\Omega}_i+\bar{\Omega}_iA_i'+Q_i) \zeta= \zeta^*\left((\lambda+\lambda^*)\bar{\Omega}_i+Q_i\right)\zeta.
\end{equation}

Note that, if the real part of $\lambda$ is non-negative, then the expression \eqref{eq:zeta_proposition_eta_omega_derivative} is necessarily positive, thus $\zeta^*\dot{\bar{\Omega}}_i(t)\zeta>0$. However, if the real part of $\lambda$ is negative, then \eqref{eq:def_omega_ss} implies that $\zeta^*{\Omega}_i^{\infty}\zeta$ is bounded and from \eqref{eq:dynamics_omega} we get that $(\lambda+\lambda^*)\zeta^*{\Omega}_{i,ss} \zeta + \zeta^*Q_i\zeta = \Omega_{i,ss}G_i\Omega_{i,ss} > 0$. Therefore, using \eqref{eq:zeta_ss_without_observation}, we have 
\begin{equation*}
    \zeta^*\left((\lambda+\lambda^*)\bar{\Omega}_i+Q_i\right)\zeta>(\lambda+\lambda^*)\zeta^*{\Omega}_{i,ss}\zeta + \zeta^*Q_i\zeta > 0.
\end{equation*}

A parallel discussion applies to the eigenvalues of $A_i$, which are the complex conjugates of the eigenvalues of $A_i'$. Since the eigenvectors of $A_i$ and $A_i'$ span $\mathbb{R}^{L_i}$, we have that $\xi$ can be written as a linear combination of these eigenvectors, $\forall \xi \in \mathbb{R}^{L_i}$. Therefore, $\chi^*\bar{\Omega}_i\chi$ is positive for any $\chi\in\mathbb{R}^{L_i}$ whenever $\eta_i=0$ which implies that $\bar{\Omega}_i(t)\succ 0$ if $\eta_i(t)=0$.

We then consider the instants when the target is observed ($\eta_i=1$). Note that $\dot{\Omega}_{i,ss}=0$. Then we define $U=\bar{\Omega}_i-\Omega_{i,ss}$, for which $\dot{\bar{\Omega}}_i=\dot{U}$. Using \eqref{eq:dynamics_omega} and \eqref{eq:dynamics_omega_ss_not_observed}, we get that
\begin{equation}
\begin{aligned}
\dot{U} &= A_i(\bar{\Omega}_i-\Omega_{i,ss})+(\bar{\Omega}_i-\Omega_{i,ss})A_i'-\bar{\Omega}_iG_i\bar{\Omega}_i\\&\qquad+\Omega_{i,ss}G_i\Omega_{i,ss}\\ &=A_iU+UA_i'+\Omega_{i,ss}G_i\Omega_{i,ss}-\bar{\Omega}_iG_i\bar{\Omega}_i.
\end{aligned}
\end{equation}
Additionally, using the fact that $\dot{U}^{-1}=U^{-1}\dot{U}U^{-1}$, we have
\begin{equation*}
\dot{U}^{-1} = (A_i-G_i\Omega_{i,ss})U^{-1}+U^{-1}(A_i-G_i\Omega_{i,ss})'-G_i.
\end{equation*}
We take an eigenvector $\zeta$ of $(A_i-G_i\Omega_{i,ss})'$ and $\lambda$ its corresponding eigenvalue. Note that $A_i'\zeta=\lambda \zeta +(G_i\Omega_{i,ss})'\zeta$. Substituting  $\zeta^*\dot{\Omega}_{i,ss}\zeta=0$ into the Riccati differential equation, we get that
\begin{equation*}
(\lambda+\lambda^*)\zeta^*\Omega_{i,ss}+\zeta^*(Q_i+\Omega_{i,ss}G_i\Omega_{i,ss})\zeta=0,
\end{equation*}
which implies that $(\lambda+\lambda^*)<0$. Computing $\zeta^{*}\dot{U}^{-1}\zeta$, we get
\begin{equation*}
    \zeta^{*}\dot{U}^{-1}\zeta = (\lambda+\lambda^*)\zeta^*U^{-1}\zeta-\zeta^*G_i\zeta,
\end{equation*}
thus $\zeta^{*}\dot{U}^{-1}\zeta<0$, since $U=\bar{\Omega}_i-\Omega_{i,ss}\succ 0$. We can claim the same result for any eigenvalue of $A_i-G_i\Omega_{i,ss}$. Since these eigenvectors together span the space $\mathbb{R}^{L_i}$, we have that $\dot{U}^{-1}$ is negative definite and, consequently, $\dot{\bar{\Omega}}_{i} = \dot{U}$ is also negative definite.

\subsection{Proof of Proposition \ref{prop:derivative_ton_toff}}
\label{app:proof_props_ton_toff}
To prove these propositions, we first define some notations that we omitted in Section \ref{subsec:simple_notation}. Recall that the goal is to optimize the visiting sequence $\Xi$ and the corresponding dwelling sequence $\mathcal{T}$. We now discuss the conversion of indices from the agent's perspective (i.e., $\Xi$ and $\mathcal{T}$) to the target's perspective. First, for each target $i$, we group all the instances where this target was visited and define the vector $\mathcal{P}_i=[p_i^1,...,p_i^{N_i}]$ where $p_i^j$ is the position in the visiting sequence $\Xi$ that $i$ is visited for the $j$\textsuperscript{th} time. Consider, for example, a visiting sequence $\Xi = [1,2,1]$. Then, $N_1=2$, $N_2=1$ and $\mathcal{P}_1=[p_1^1,p_1^2]=[1,3]$, $\mathcal{P}_2=[p_2^1]=[2]$.

Moreover, we define the tuple $(a(q),b(q))$ as the pair such that $p_{a(q)}^{b(q)}=q$. Hence, $a(q)$ is the target being visited at the agent's $q$\textsuperscript{th} visit and $b(q)$ represents the number of times this target has been visited so far (including the current visit). 

Finally, we highlight some important timings and covariance matrix values at the steady state covariance profile. Figure \ref{fig:multiple_observations_same_target} illustrates all the definitions that we will give. We start with $t_{\text{on},i}^k$ and $t_{\text{off},i}^k$ (defined in Section \ref{subsec:simple_notation}) which are given by \begin{subequations}
    \label{eq:targets_perspective_variables}
    \begin{equation}
        t_{\text{on},i}^k = t_{p_i^k},
    \end{equation}
    \begin{equation}
        \label{eq:def_toff}
        t_{\text{off},i}^k= d_{i,a(p_i^k+1)}+\sum_{q=p_i^k+1}^{p_i^{k+1}-1}\left(t_{\text{on},a(q)}^{b(q)}+d_{a(q),a(q+1)}\right).
    \end{equation}
\end{subequations}
In \eqref{eq:def_toff}, $d_{i,a(c(k,i)+1)}$ is the travel time between target $i$ and the next target the agent visits. The index $q$ varies over all the visits the agent makes until it returns to target $i$ (note that $c(k,i)$ and $c(k+1,i)$ give the index of two consecutive visits to target $i$, from the agent's perspective). Moreover,  $t_{\text{on},a(q)}^{b(q)}$ is the time the agent spent at its $q$-th visit and $d_{a(q),a(q+1)}$ is the travel time between the agent's $q$\textsuperscript{th} and $(q+1)$\textsuperscript{th} visit.

Additionally, we define $\overline{\tau}_i^k$ as the instant when the $k$\textsuperscript{th} visit to target $i$ started and $\overline{P}_i^k$ is the covariance at the beginning of this visit. Intuitively, variables with a bar over them refer to a local maximum peak ($\overline{P}_i^k$) and instant ($\overline{\tau}_i^k$). Similarly, $\underline{\tau}_i^k$ and $\underline{P}_i^k$ are respectively the time instant and the covariance at the end of the $k$\textsuperscript{th} visit and represent locally minimum peaks, as represented in Fig. \ref{fig:multiple_observations_same_target}. More formally:
\begin{subequations}
    \label{eq:targets_perspective_variables_2}
    \begin{equation}
        \overline{\tau}_i^k = \sum_{m=0}^{k-1}\left(t_{\text{on},i}^m+t_{\text{off},i}^m\right), \qquad \underline{\tau}_i^k =\overline{\tau}_i^k +t_{\text{on},i}^k,
    \end{equation}
        \begin{equation}
        \overline{P}_i^k = \bar{\Omega}(\overline{\tau}_i^k), \qquad\underline{P}_i^k = \bar{\Omega}(\underline{\tau}_i^k).
        \end{equation}
\end{subequations}
Also, since both the agent trajectories and the steady state covariance are periodic, we have that $t_{\text{on},i}^k=t_{\text{on},i}^{k+N_i}$, $t_{\text{off},i}^k=t_{\text{on},i}^{k+N_i}$, $\overline{P}_i^k=\overline{P}_i^{k+N_i}$ and $\underline{P}_i^k=\underline{P}_i^{k+N_i}$. On the other hand, visiting instants are not periodic, but are spaced by $T$, hence $\overline{\tau}_i^{k+N_i}=T+\overline{\tau}_i^{k}$ and $\underline{\tau}_i^{k+N_i}=T+\underline{\tau}_i^{k}$.

\begin{proof} \lbrack of Proposition \ref{prop:derivative_ton_toff}] 
First, we prove that $\frac{\partial \overline{P}_i^k}{\partial t_{\text{on},i}^m} \prec 0$. Towards that, we define $\Phi_i^m$ as
\begin{equation}
    \Phi_i^m = \int_{0}^{t_{\text{on},i}^m}\exp\left(A_i-
   \bar{\Omega}_i(t+\overline{\tau}_i^m)G_i\right) dt.
\end{equation}
To understand the intuition behind $\Phi_i^m$, we first recall Proposition 3 in \cite{pinto2020periodicfull} which states that the derivative $\frac{\partial {\bar{\Omega}}}{\partial t_{\text{on},i}^m}(t)$, when it exists, for $t\in(\overline{\tau}_i^m,\underline{\tau}_i^m)$ is given by:
\begin{equation}
    \frac{\partial {\bar{\Omega}}}{\partial t_{\text{on},i}^m}(t) = \Sigma^T(t)\frac{\partial {\bar{\Omega}}}{\partial t_{\text{on},i}^m}(\overline{\tau}_i^m)\Sigma(t),
\end{equation}
where $\Sigma$ is the solution of the following ODE:
\begin{equation}
    \dot{\Sigma}(t) - (A-\bar{\Omega}_i(t)G_i)\Sigma(t) = 0,\ \Sigma(\underline{\tau}_i^m)=I.
\end{equation}
Moreover, $\Phi_i^m=\Sigma(\overline{\tau}_i^m)$ can be interpreted as the transition matrix between times $\overline{\tau}_i^m$ and  $\underline{\tau}_i^m$ of the homogeneous version of the ODE for which the derivative $\frac{\partial {\bar{\Omega}}}{\partial t_{\text{on},i}^m}(t)$ is a solution. As a side note, the existence of the derivative is discussed in Appendix A of \cite{pinto2020periodicfull}, and more details on the interpretation of $\Phi_i^m$ can be found on the same paper.

Now, by computing the derivative of $\bar{\Omega}_i(t)$ with respect to $t_{\text{on},i}^m$ as $t\rightarrow(\overline{\tau}_i^k)^-$ (we recall that that at time $t=\overline{\tau}_i^k$ the derivative is discontinuous, since at this instant the target switches from not being observed to being observed), we obtain the following recursive expression:
\begin{equation}
\frac{\partial {\bar{\Omega}}}{\partial t_{\text{on},i}^m}(\underline{\tau}_i^m) = \frac{\partial \underline{P}_i^m}{\partial t_{\text{on},i}^m} = (\Phi_i^m)^T\frac{\partial \overline{P}_i^{m-1}}{\partial t_{\text{on},i}^m}\Phi_i^m + \dot{\bar{\Omega}}_i((\underline{\tau}_i^{m})^-).
\end{equation}

Furthermore, defining $\Psi_i^m=\exp(A_i t_{\text{off},i}^m)$ and using again the result in Proposition 3 of \cite{pinto2020periodicfull}, we get that:
\begin{equation}
    \label{eq:propagation_lower_upper_derivative}
    \frac{\partial \overline{P}_i^{m+1}}{\partial t_{on,i}^m} = (\Psi_i^m)^T\frac{\partial \underline{P}_i^{m+1}}{\partial t_{on,i}^m}\Psi_i^m.
\end{equation}
 
Now, repeating the same steps and propagating the previous expression to the $k$-th visit, $m\leq k \leq m+N_i-1$, we get the recursion:
\begin{multline}
    \frac{\partial \overline{P}_i^{k}}{\partial t_{\text{on},i}^m} =\\ (\Lambda_i^{k,m})^T\left(\frac{\partial \overline{P}_i^{m-1}}{\partial t_{\text{on},i}^m}+(\Phi_i^m)^{-T}\dot{\bar{\Omega}}_i((\underline{\tau}_i^{m})^-)(\Phi_i^m)^{-1}\right)\Lambda_i^{k,m},
\end{multline}
where $\Lambda_i^{k,m}=\prod_{\alpha=m}^{k-1}\Psi_i^\alpha\Phi_i^\alpha$. In particular, for $k=m+N_i-1$, due to periodicity, we have:

\begin{multline}
    \frac{\partial \overline{P}_i^{m-1}}{\partial t_{\text{on},i}^{m}} =  
    (\Lambda_i^{m+N_i-1,m})^T\left(\frac{\partial \overline{P}_i^{m-1}}{\partial t_{\text{on},i}^m}\right.\\\left.+(\Phi_i^m)^{-T}\dot{\bar{\Omega}}_i((\underline{\tau}_i^{m})^-)(\Phi_i^m)^{-1}\right)\Lambda_i^{m+N_i-1,m},
\end{multline}
which is a Lyapunov equation. Note that $\Lambda_i^{m+N_i-1,m}$ is stable, as discussed in Proposition 3 of \cite{pinto2020periodicfull}. Therefore all of its eigenvalues have modulus lower than one. Also, since $\Lambda_i^{m+N_i-1,m}$ is a product of matrix exponentials, its null space is trivial. This implies that, since Lemma \ref{lemma:pd_nd_time_derivative} tells us that $\dot{\bar{\Omega}}_i((\underline{\tau}_i^{m})^-) \prec 0$, the Lyapunov equation has a unique negative definite solution and therefore $\frac{\partial \overline{P}_i^{m-1}}{\partial t_{\text{on},i}^{m}}\prec 0$.

Moreover, note that, for $m \leq k < m+N_i-1$,
\begin{equation}
    \frac{\partial \overline{P}_i^{m-1}}{\partial t_{on,i}^m} = (\Lambda_i^{m+N_i-1,k})^T\frac{\partial \overline{P}_i^{k}}{\partial t_{on,i}^m}\Lambda_i^{m+N_i-1,k},
\end{equation}
which leads us to conclude that, $\forall k$, $\frac{\partial \overline{P}_i^{k}}{\partial t_{on,i}^m}\prec 0.$

The argument to claim that $\frac{\partial \overline{P}_i^k}{\partial t_{\text{off},i}^m} \succ 0$ is very similar to the one we just use to show that $\frac{\partial \overline{P}_i^k}{\partial t_{\text{on},i}^m} \prec 0$. Therefore, only a brief summary will be given.
 Note that 
\begin{equation}
 \frac{\partial \overline{P}_i^m}{\partial t_{\text{off},i}^m} = (\Lambda_i^{m,m-1})^T\frac{\partial \overline{P}_i^{m-1}}{\partial t_{\text{off},i}^m}\Lambda_i^{m,m-1} + \dot{\bar{\Omega}}_i((\overline{\tau}_i^{m})^-).
\end{equation}
Using a similar recursion as in the previous proof, we get that
\begin{multline}
 \frac{\partial \overline{P}_i^{m-1}}{\partial t_{\text{off},i}^m} = (\Lambda_i^{m+N_i,m-1})^T\frac{\partial \overline{P}_i^{m-1}}{\partial t_{\text{off},i}^m}\Lambda_i^{m+N_i,m-1} \\+ (\Lambda_i^{m+N_i,m})^T\dot{\bar{\Omega}}_i((\overline{\tau}_i^{m})^-)\Lambda_i^{m+N_i,m}.
\end{multline}
As $\dot{\bar{\Omega}}_i((\overline{\tau}_i^m)^-) \succ 0$, then $\frac{\partial \underline{P}_i^{m-1}}{\partial t_{\text{off},i}^m} \succ 0$ and thus $\frac{\partial \overline{P}_i^{k}}{\partial t_{\text{off},i}^m} \succ 0$.
\end{proof}

\subsection{Proof the unimodaility of $g_{\text{con}}$ with respect to $T$}
\label{app:unimodality_proof}

In this appendix, we show that when the internal target state $\phi_i$ is a scalar (i.e. $L_i=1$), the optimal peak uncertainty is a unimodal function of the cycle period $T$. First, we define a function $\beta_i(\rho,T)$ that returns the dwell-time $t_{\text{on},i}$. Note that, by definition, $t_{\text{off},i}=T-\beta_i(\rho,T)$, and it defines a peak uncertainty level as   $g_i(||\overline{P}_i(\beta_i(\rho,T),T-\beta_i(\rho,T))||)=\rho$.

First, we recall that the function $\beta(\rho,T)$ is well defined for any $\rho$ such that $g_i(||\Omega_{i,ss}||)\leq \rho \leq g_i(||\Omega_i^\infty||)$ and $T>0$.
This is due to the fact that $\rho=g_i(||\Omega_{i,ss}||)$ will give $t_{\text{on},i}=\beta(\rho,T)=T$ and $\rho= g_i(||\Omega_i^\infty||)$ will give $t_{\text{on},i}=\beta(\rho,T)=0$. Since the peak uncertainty of target $i$ varies continuously with $t_{\text{on},i}$, we have that there will always be a $t_{\text{on},i}$ that yields a given value of $\rho$, if $g_i(||\Omega_{i,ss}||)\leq \rho \leq g_i(||\Omega_i^\infty||)$ and $T>0$.

Now, we make the following assumption on the smoothness of the function $\beta(\rho,T)$, that will be used in the proof of the following lemma.

\begin{assumption}
    If $\Omega_{i,ss}<\rho<\Omega_i^{\infty}$ function $\beta(\rho,T)$ is differentiable with respect to T. 
\end{assumption}

\begin{lemma}
    \label{lemma:strictly_increasing_derivative_beta}
    When the state $\phi_i$ is a scalar, the function $\frac{\partial \beta_i(\rho,T)}{\partial T}$ is strictly positive and increasing.
\end{lemma}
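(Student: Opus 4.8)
The plan is to exploit the scalar hypothesis to reduce everything to an elementary one-dimensional ODE computation plus a single time-of-travel integral. Write the scalar version of \eqref{eq:dynamics_omega} as $\dot\Omega_i = f_{\text{on}}(\Omega_i)$ while target $i$ is observed and $\dot\Omega_i = f_{\text{off}}(\Omega_i)$ otherwise, where $f_{\text{on}}(x) = 2A_ix + Q_i - G_i x^2$ and $f_{\text{off}}(x) = 2A_i x + Q_i$, so that $f_{\text{off}}(x) - f_{\text{on}}(x) = G_i x^2$. Fix $\rho$ strictly between $g_i(\Omega_{i,ss})$ and $g_i(\Omega_i^\infty)$ and set $\overline{P} = g_i^{-1}(\rho)$, a constant depending only on $\rho$ and with $\Omega_{i,ss} < \overline{P} < \Omega_i^\infty$. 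By Lemma \ref{lemma:pd_nd_time_derivative} and \eqref{eq:zeta_ss_without_observation} the relevant steady-state orbit stays strictly inside $(\Omega_{i,ss},\Omega_i^\infty)$, so $f_{\text{on}} < 0 < f_{\text{off}}$ along it and $1/f_{\text{off}}$ is integrable there (recall $\Omega_{i,ss}$ is the positive root of $f_{\text{on}}$, and $\Omega_i^\infty$ is the positive root of $f_{\text{off}}$ when $A_i<0$, while $f_{\text{off}}>0$ on $(0,\infty)$ when $A_i\geq 0$).

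First I would parametrize the period by the dwell-time. Over the ``on'' phase the covariance decreases from $\overline{P}$ to a value $\underline{P}$ governed by $\tfrac{d\underline{P}}{dt_{\text{on}}} = f_{\text{on}}(\underline{P}) < 0$, so $\underline{P}$ is a strictly decreasing function of $t_{\text{on}}$, sweeping $(\Omega_{i,ss},\overline{P})$ as $t_{\text{on}}$ sweeps $(0,\infty)$. Over the ``off'' phase the covariance must climb from $\underline{P}$ back to $\overline{P}$, which takes time $t_{\text{off}} = \int_{\underline{P}}^{\overline{P}} \frac{dx}{f_{\text{off}}(x)}$. Hence $T = t_{\text{on}} + t_{\text{off}}$, and differentiating with respect to $t_{\text{on}}$ with $\rho$ (hence $\overline{P}$) held fixed,
\[
\frac{dT}{dt_{\text{on}}} = 1 - \frac{f_{\text{on}}(\underline{P})}{f_{\text{off}}(\underline{P})} = \frac{G_i\underline{P}^2}{f_{\text{off}}(\underline{P})} > 0.
\]
This shows $t_{\text{on}}$ is a well-defined, strictly increasing function of $T$ sweeping $(0,\infty)$, i.e. $\beta_i(\rho,T) = t_{\text{on}}(T)$, with $\partial\beta_i/\partial T = f_{\text{off}}(\underline{P})/(G_i\underline{P}^2) > 0$ — which is the first claim, and moreover a smooth closed-form expression, so the differentiability hypothesis stated just above the lemma is not, strictly speaking, needed here.

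For the monotonicity claim I would regard $\partial\beta_i/\partial T = h(\underline{P})$ with $h(x) = \frac{2A_ix+Q_i}{G_ix^2}$, for which $h'(x) = -\frac{2(A_ix+Q_i)}{G_ix^3}$. The crux is to show $A_i\underline{P}+Q_i > 0$: when $A_i\geq 0$ this is immediate since $\underline{P}>0$ and $Q_i>0$, and when $A_i<0$ it follows from $\underline{P} < \Omega_i^\infty = \frac{Q_i}{2|A_i|}$ (the equilibrium of $f_{\text{off}}$, cf. \eqref{eq:def_omega_ss}), which yields $A_i\underline{P}+Q_i > \frac{Q_i}{2} > 0$. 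Hence $h'(\underline{P}) < 0$, so $h$ is strictly decreasing in $\underline{P}$; combining this with the facts that $\underline{P}$ is strictly decreasing in $t_{\text{on}}$ and $t_{\text{on}} = \beta_i(\rho,T)$ is strictly increasing in $T$, the composite $h(\underline{P}) = \partial\beta_i/\partial T$ is strictly increasing in $T$, which completes the proof.

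I expect the main obstacle — beyond routine bookkeeping — to be pinning down the sign of $A_i\underline{P}+Q_i$ together with the verification that $\underline{P}$ remains strictly inside $(\Omega_{i,ss},\Omega_i^\infty)$ over the entire range of $T$; this is precisely where the scalar hypothesis and Lemma \ref{lemma:pd_nd_time_derivative} (through $\Omega_{i,ss}<\overline{P}<\Omega_i^\infty$, which forces $f_{\text{on}}(\overline{P})<0$ and controls the orbit) do the real work, and it is the only place the argument would fail if $\phi_i$ were vector-valued.
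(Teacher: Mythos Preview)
Your proof is correct and takes essentially the same approach as the paper: both derive the closed form $\partial\beta_i/\partial T = (2A_i\underline{P}+Q_i)/(G_i\underline{P}^2)$, then show monotonicity by checking that this expression is strictly decreasing in $\underline{P}$ while $\underline{P}$ is strictly decreasing in $T$. Your derivation via the time-of-travel integral $t_{\text{off}}=\int_{\underline{P}}^{\overline{P}}dx/f_{\text{off}}(x)$ is a minor repackaging of the paper's matching of one-sided rates at the switching instant, and your case split on the sign of $A_i$ to establish $A_i\underline{P}+Q_i>0$ is in fact more careful than the paper's assertion that the derivative is negative ``for any positive value of $P_i$.''
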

\begin{proof}
    When computing $\frac{\partial \beta(\rho,T)}{\partial T}$, we leave the upper peak $\overline{P}_{i,k}(t_{\text{on},i},t_{\text{off},i})$ constant and vary the period $T$. However, the under peak $\underline{P}_{i,k}(t_{\text{on},i},t_{\text{off},i})$ does not remain constant. Indeed, the variation of the under peak can be computed as:
    \begin{equation*}
        \frac{\partial \underline{P}_{i,k}}{\partial T}=\dot{\bar{\Omega}}_i(\underline{\tau}_i^-)\frac{\partial \beta_i(\rho,T)}{\partial T} = -\dot{\bar{\Omega}}_i(\underline{\tau}_i^+)\left(1-\frac{\partial \beta_i(\rho,T)}{\partial T}\right)
    \end{equation*}
    Therefore, $\frac{\partial \beta_i(\rho,T)}{\partial T} = \frac{2A_i\underline{P}_{i}+Q_i}{\underline{P}_{i}^2G_i}>0$.
    Note that since $\underline{P}_i>\Omega_{i,ss}$, $0<\frac{\partial \beta_i(\rho,T)}{\partial T}<1$. Therefore, $t_{\text{off},i}$ also increases as $T$ increases. This implies that $\underline{P}_i$ decreases with the increase of $T$, since $\underline{P}_i=\exp(-2A_i t_{\text{off},i})(\rho+Q_i)-Q_i$.
    
    Computing the variation of $\frac{\partial \beta_i(\rho,T)}{\partial T}$ with respect to $P_i$, we get that
    $
        \frac{\partial}{\partial P_i}\left(\frac{\partial \beta_i(\rho,T)}{\partial T}\right) =-\frac{2A_iP_i+Q_i}{P_i^3G_i},
    $
    which is negative for any positive value of $P_i$. 
    Since $\frac{\partial \beta_i(\rho,T)}{\partial T}$ is strictly decreasing with $P_i$ and $P_i$ is strictly decreasing with respect to $T$, we get that $\frac{\partial \beta_i(\rho,T)}{\partial T}$ is strictly increasing with the increase of $T$.
\end{proof}

\begin{lemma}
    \label{lemma:beta_max_two_roots}
    Given a visiting sequence $\Xi$, where each target is only visited once, the equation $\sum_{i\in\Xi}\beta_i(\rho,T)=T-t_{\text{travel}}$ for a fixed $\rho$ has at most two solutions.
\end{lemma}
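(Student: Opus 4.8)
The plan is to read the equation $\sum_{i\in\Xi}\beta_i(\rho,T)=T-t_{\text{travel}}$ as the assertion that a certain \emph{strictly convex} function of $T$ vanishes, and then to invoke the elementary fact that a strictly convex function on an interval can vanish at no more than two points. The only genuine analytic ingredient is the convexity in $T$ of each individual term $\beta_i(\rho,\cdot)$, and that is precisely what Lemma~\ref{lemma:strictly_increasing_derivative_beta} supplies: since $\partial\beta_i(\rho,T)/\partial T$ is strictly increasing in $T$, the map $T\mapsto\beta_i(\rho,T)$ is strictly convex on the set of periods $T$ for which it is defined.

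Concretely, I would proceed as follows. First, fix $\rho$ with $g_i(\norm{\Omega_{i,ss}})\le\rho\le g_i(\norm{\Omega_i^\infty})$ for every $i\in\Xi$, so that (as recalled just before Lemma~\ref{lemma:strictly_increasing_derivative_beta}) each $\beta_i(\rho,\cdot)$ is well-defined on the interval $T>0$, and deduce from Lemma~\ref{lemma:strictly_increasing_derivative_beta} that each $\beta_i(\rho,\cdot)$ is strictly convex there. Second, conclude that $f(T):=\sum_{i\in\Xi}\beta_i(\rho,T)$, being a finite sum of strictly convex functions, is itself strictly convex. Third, set $h(T):=f(T)-(T-t_{\text{travel}})$; since $T\mapsto T-t_{\text{travel}}$ is affine, $h$ is again strictly convex, and its zeros are exactly the solutions of the equation in the statement. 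Fourth, suppose for contradiction that $h$ had three distinct zeros $T_1<T_2<T_3$; writing $T_2=(1-\lambda)T_1+\lambda T_3$ with $\lambda=\tfrac{T_2-T_1}{T_3-T_1}\in(0,1)$, strict convexity gives $h(T_2)<(1-\lambda)h(T_1)+\lambda h(T_3)=0$, contradicting $h(T_2)=0$. Hence $h$ has at most two zeros.

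I do not expect a serious obstacle once Lemma~\ref{lemma:strictly_increasing_derivative_beta} is available; the only points needing care are bookkeeping ones. One must make sure the domain on which $h$ is examined is genuinely an interval --- it is, being $\{T>0\}$ (or the sub-interval $\{T\ge t_{\text{travel}}\}$ on which the right-hand side is nonnegative) --- so that the three-point convexity inequality applies. One should also note the degenerate boundary case $\rho=g_i(\norm{\Omega_{i,ss}})$ for some $i$, in which $\beta_i(\rho,\cdot)$ reduces to the affine function $T$ rather than being strictly convex: $f$ remains strictly convex as long as at least one target is non-degenerate, and if every term is affine then $h$ is affine and has at most one zero, so the claim holds a fortiori. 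Finally, the hypothesis that each target is visited only once is what makes $\sum_{i\in\Xi}\beta_i(\rho,T)$ equal to the total dwell-time and $T-t_{\text{travel}}$ the correct affine comparison, which closes the reduction.
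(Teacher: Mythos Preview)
Your proposal is correct and essentially identical to the paper's proof: both define the same auxiliary function $h(T)=\Gamma(T)=\sum_{i}\beta_i(\rho,T)-(T-t_{\text{travel}})$, invoke Lemma~\ref{lemma:strictly_increasing_derivative_beta} to conclude its derivative is strictly increasing, and then argue that such a function can have at most two zeros. The only cosmetic difference is that the paper finishes with Rolle's theorem (three zeros would force two distinct critical points, contradicting strict monotonicity of $\Gamma'$), whereas you use the equivalent three-point strict-convexity inequality; your additional remarks on the domain and the degenerate boundary case are careful bookkeeping the paper leaves implicit.
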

\begin{proof}
Denoting $\Gamma(T) = \sum_{i\in\Xi}\beta_i(\rho,T)-T+t_{\text{travel}}$ and using Lemma \ref{lemma:strictly_increasing_derivative_beta}, we know that $\frac{\partial\Gamma}{\partial T}$ is a strictly increasing function. Now suppose that $\Gamma(T)$ has 3 or more distinct roots and we pick three of them $T_1<T_2<T_3$. The mean value theorem tells us that there is $\theta_1\in(T_1,T_2)$ such that $\frac{\partial\Gamma}{\partial T}(\theta_1)=0$ and 
$\theta_2\in(T_2,T_3)$ such that $\frac{\partial\Gamma}{\partial T}(\theta_2)=0$. This is a contradiction since $\theta_2>\theta_1$ and $\frac{\partial\Gamma}{\partial T}$ is strictly increasing.
\end{proof}

\begin{proposition}
\label{prop:unimodality}
When the state $\phi_i$ is a scalar, for a given visiting sequence $\Xi$, the optimal peak is a unimodal function of $T$.
\end{proposition}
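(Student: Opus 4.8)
The plan is to prove the stronger statement that $g_{\mathrm{con}}(\cdot)$ is \emph{quasiconvex} on the set of admissible periods; since $g_{\mathrm{con}}$ is continuous, quasiconvexity is equivalent to unimodality (non-increasing up to a minimizer $T^*$, then non-decreasing), which is exactly the property Algorithm~\ref{alg:compute_optimal_visitation_times} relies on. The central object is the function $\Gamma_\rho(T) = \sum_{i\in\Xi}\beta_i(\rho,T) - T + t_{\mathrm{travel}}$ already used in the proof of Lemma~\ref{lemma:beta_max_two_roots}, for which two monotonicity properties will be needed. First, $\partial\Gamma_\rho/\partial T = \sum_{i\in\Xi}\partial\beta_i/\partial T - 1$ is \emph{strictly increasing in $T$} by Lemma~\ref{lemma:strictly_increasing_derivative_beta} (this is the only place the scalar-state hypothesis enters), so $T\mapsto\Gamma_\rho(T)$ is strictly convex. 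Second, $\Gamma_\rho(T)$ is \emph{strictly decreasing in $\rho$}: a larger admissible peak level requires a strictly smaller dwell-time at each target, because with $T$ fixed $g_i(\|\overline P_i\|)$ is strictly decreasing in $t_{\mathrm{on},i}$ (Proposition~\ref{prop:derivative_ton_toff} and Lemma~\ref{lemma:unique_consensus}), hence $\partial\beta_i/\partial\rho<0$.

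I would then record the identity $g_{\mathrm{con}}(T)=\rho \iff \Gamma_\rho(T)=0$: by Proposition~\ref{prop:same_peak} the optimal dwelling sequence for a fixed $(\Xi,T)$ equalizes the peaks of all active targets at the common value $\rho=g_{\mathrm{con}}(T)$ with $t_{\mathrm{on},i}=\beta_i(\rho,T)$, and feasibility of the period forces $\sum_{i\in\Xi}t_{\mathrm{on},i}+t_{\mathrm{travel}}=T$; uniqueness of such $\rho$ is Lemma~\ref{lemma:unique_consensus}. Combining this with the strict monotonicity of $\Gamma_\rho(T)$ in $\rho$, for every fixed $T$ we obtain $\Gamma_\rho(T)<0 \iff \rho>g_{\mathrm{con}}(T)$, and therefore for every level $\rho$
\[
\{\,T:\ g_{\mathrm{con}}(T)<\rho\,\}\ =\ \{\,T:\ \Gamma_\rho(T)<0\,\}.
\]
Since $T\mapsto\Gamma_\rho(T)$ is strictly convex, the right-hand side is an (open) interval. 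Thus every strict sublevel set of $g_{\mathrm{con}}$ is an interval, i.e. $g_{\mathrm{con}}$ is quasiconvex, and continuity of $g_{\mathrm{con}}$ (which follows from the implicit characterization above together with $\partial\Gamma_\rho/\partial\rho<0$) upgrades this to unimodality, completing the proof.

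Two points require care. The domain: $\beta_i(\rho,T)$, and hence $g_{\mathrm{con}}$, is defined only for periods large enough that a balanced allocation exists (at least $T>t_{\mathrm{travel}}$), and one must check that this admissible set is an interval and that $\Gamma_\rho$ actually changes sign on it — both follow from the continuity/range argument for $\beta_i$ recalled at the start of Appendix~\ref{app:unimodality_proof} (and one should, as in that appendix, either assume every target in $\Xi$ is active or handle inactive targets by clamping $\beta_i$ at $0$, which preserves the two monotonicity facts used above). The main obstacle I anticipate is rigorously establishing that $\Gamma_\rho(\cdot)$ is differentiable in $T$ so that Lemma~\ref{lemma:strictly_increasing_derivative_beta} yields strict convexity; this rests on the smoothness assumption on $\beta$ and on the differentiability of $\bar\Omega_i$ with respect to the timing parameters from \cite{pinto2020periodicfull}. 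This obstacle can in fact be bypassed: Lemma~\ref{lemma:beta_max_two_roots} already asserts that $\Gamma_\rho(\cdot)=0$ has at most two solutions on the admissible interval, and combined with the sign of $\Gamma_\rho$ near the endpoints of that interval this forces $\{\,T:\ \Gamma_\rho(T)<0\,\}$ to be an interval, giving the same conclusion without invoking convexity directly.
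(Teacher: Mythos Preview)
Your argument is correct and rests on the same auxiliary facts as the paper (Lemma~\ref{lemma:strictly_increasing_derivative_beta} and, as a fallback, Lemma~\ref{lemma:beta_max_two_roots}), but the logical organization is genuinely different. The paper proceeds by contradiction: it assumes $g_{\mathrm{con}}$ has at least two interior extrema, uses the boundary behavior $g_{\mathrm{con}}(T)\to\max_i g_i(\|\Omega_i^\infty\|)$ as $T\to t_{\mathrm{travel}}^+$ together with the intermediate value theorem to exhibit three distinct periods $T_a<T_b<T_c$ at which $g_{\mathrm{con}}$ takes the same value $\rho$, and then observes that this forces $\Gamma_\rho$ to have three roots, contradicting Lemma~\ref{lemma:beta_max_two_roots}. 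You instead prove quasiconvexity directly by identifying the strict sublevel sets $\{T:g_{\mathrm{con}}(T)<\rho\}$ with $\{T:\Gamma_\rho(T)<0\}$ via the additional (easy) monotonicity of $\Gamma_\rho$ in $\rho$, and then appeal to the convexity of $\Gamma_\rho(\cdot)$ to conclude these sets are intervals. Your route is shorter, avoids the geometric case analysis and picture, and yields the quasiconvexity formulation that is precisely what golden-section search requires; the paper's route has the minor advantage that it never needs to argue monotonicity in $\rho$, only the root-count statement, and it makes the endpoint behavior of $g_{\mathrm{con}}$ explicit rather than leaving it as a domain caveat.
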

\begin{proof}
We demonstrate this proposition by contradiction, supposing that there are at least two extremum points when considering the optimal peak as a function of $T$ and showing that this contradicts Lemma \ref{lemma:beta_max_two_roots}.

\begin{figure}
    \centering
    \includegraphics[width=0.7\columnwidth]{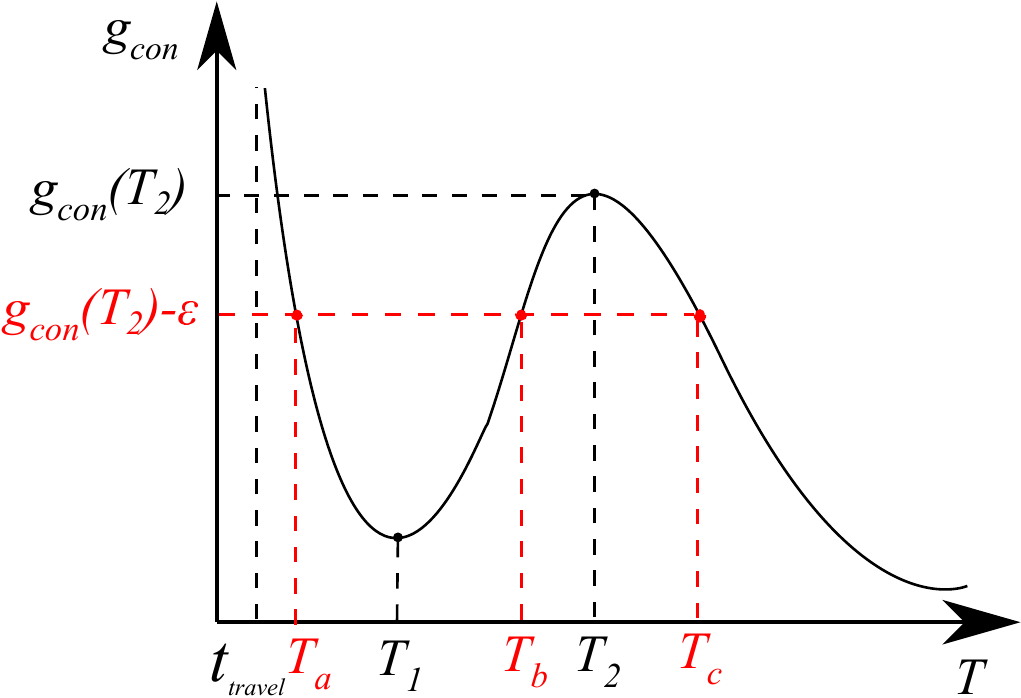}
    \caption{Illustration of the proof of Proposition \ref{prop:unimodality}.}
    \label{fig:proof_unmodality}
    \vspace{-3mm}
\end{figure}

We now show that, if there are two extremum points, then there is at least one peak value that could be generated by three different values of $T$.
First, note that when $T\rightarrow t_{travel}$ the upper peak tends to $\max_ig_i\left(\norm{\Omega_i^{\infty}}\right)$, since in that case no target is observed. In any non-degenerate dwell-times distribution, we have $P_i \prec \Omega_i^{\infty}$. Moreover, since there are at least two extremum points, there must exist a minimum (at $T=T_1$) and a maximum (at $T=T_2$). Note that since the consensus peak for $T=t_{travel}$ is a maximum, we have $t_{travel}<T_1<T_2$.

Let us denote $g_{con}(T)$ as the consensus peak for a given period $T$ and take $\epsilon>0$ (which should be interpreted as a small positive number, but is not necessarily small), such that $g_{con}(T_2)-\epsilon>g(T_1)$ and
$g_{con}(T_2)-\epsilon = g_{con}(T_b)=g_{con}(T_c)$, for some $T_b,\ T_c$ such that $T_1<T_b<T_2$ and $T_c>T_2$. Note that, since $T_2$ is a point of maximum, such $T_b$, $T_c$ and $\epsilon$ exist. Additionally, we note that there exists a $T_a\in[t_{travel},T_1]$ such that $g_{con}(T_a) = g_{con}(T_2)-\epsilon$, since $g_{con}(T)$ is assumed continuous and $g_{con}(t_{travel})>g_{con}(T_2)-\epsilon>g_{con}(T_1)$. These times, along with their respective $g_{con}$ are illustrated in Fig. \ref{fig:proof_unmodality}.

Therefore, if there are two extremum points, then $g_{con}(T_a)=g_{con}(T_b)=g_{con}(T_c)$, with $T_a\neq T_b \neq T_c$, which contradicts Lemma \ref{lemma:beta_max_two_roots}.
\end{proof}

\bibliographystyle{IEEEtran}
\bibliography{references.bib}

\begin{thebibliography}{10}
\providecommand{\url}[1]{#1}
\csname url@samestyle\endcsname
\providecommand{\newblock}{\relax}
\providecommand{\bibinfo}[2]{#2}
\providecommand{\BIBentrySTDinterwordspacing}{\spaceskip=0pt\relax}
\providecommand{\BIBentryALTinterwordstretchfactor}{4}
\providecommand{\BIBentryALTinterwordspacing}{\spaceskip=\fontdimen2\font plus
\BIBentryALTinterwordstretchfactor\fontdimen3\font minus
  \fontdimen4\font\relax}
\providecommand{\BIBforeignlanguage}[2]{{%
\expandafter\ifx\csname l@#1\endcsname\relax
\typeout{** WARNING: IEEEtran.bst: No hyphenation pattern has been}%
\typeout{** loaded for the language `#1'. Using the pattern for}%
\typeout{** the default language instead.}%
\else
\language=\csname l@#1\endcsname
\fi
#2}}
\providecommand{\BIBdecl}{\relax}
\BIBdecl

\bibitem{lin2018kalman}
Z.~Lin, H.~H. Liu, and M.~Wotton, ``{Kalman Filter-based Large-scale Wildfire
  Monitoring with a System of UAVs},'' \emph{IEEE Transactions on Industrial
  Electronics}, vol.~66, no.~1, pp. 606--615, 2018.

\bibitem{ostertag2019robust}
M.~Ostertag, N.~Atanasov, and T.~Rosing, ``{Robust Velocity Control for Minimum
  Steady State Uncertainty in Persistent Monitoring Applications},'' in
  \emph{American Control Conference (ACC)}.\hskip 1em plus 0.5em minus
  0.4em\relax IEEE, 2019, pp. 2501--2508.

\bibitem{lan2016rapidly}
X.~Lan and M.~Schwager, ``Rapidly exploring random cycles: Persistent
  estimation of spatiotemporal fields with multiple sensing robots,''
  \emph{IEEE Transactions on Robotics}, vol.~32, no.~5, pp. 1230--1244, 2016.

\bibitem{Alam:2018ie}
T.~Alam, G.~M. Reis, L.~Bobadilla, and R.~N. Smith, ``{A Data-Driven Deployment
  Approach for Persistent Monitoring in Aquatic Environments},'' in \emph{IEEE
  International Conference on Robotic Computing}, 2018, pp. 147--154.

\bibitem{pinto2020multipleparticletracking}
S.~C. Pinto, N.~A. Vickers, F.~Sharifi, and S.~B. Andersson, ``{Tracking
  Multiple Diffusing Particles Using Information Optimal Control},'' in
  \emph{American Control Conference (under review)}, 2021.

\bibitem{grocholsky2003information}
B.~Grocholsky, A.~Makarenko, and H.~Durrant-Whyte, ``{Information-theoretic
  Coordinated Control of Multiple Sensor Platforms},'' in \emph{2003 IEEE
  International Conference on Robotics and Automation (Cat. No. 03CH37422)},
  vol.~1.\hskip 1em plus 0.5em minus 0.4em\relax IEEE, 2003, pp. 1521--1526.

\bibitem{julian2012distributed}
B.~J. Julian, M.~Angermann, M.~Schwager, and D.~Rus, ``{Distributed Robotic
  Sensor Networks: An Information-theoretic Approach},'' \emph{The
  International Journal of Robotics Research}, vol.~31, no.~10, pp. 1134--1154,
  2012.

\bibitem{lan2014variational}
X.~Lan and M.~Schwager, ``{A Variational Approach to Trajectory Planning for
  Persistent Monitoring of Spatiotemporal Fields},'' in \emph{2014 American
  Control Conference}, 2014, pp. 5627--5632.

\bibitem{hussein2008kalman}
I.~I. Hussein, ``{Kalman Filtering with Optimal Sensor Motion Planning},'' in
  \emph{2008 American Control Conference}.\hskip 1em plus 0.5em minus
  0.4em\relax IEEE, 2008, pp. 3548--3553.

\bibitem{Lavalle1998}
S.~M. Lavalle, ``{Rapidly-Exploring Random Trees: A New Tool for Path
  Planning},'' Tech. Rep., 1998.

\bibitem{pinto2020sdppm}
S.~C. Pinto, S.~B. Andersson, J.~M. Hendrickx, and C.~G. Cassandras, ``{A
  Semidefinite Programming Approach to Discrete-time Infinite Horizon
  Persistent Monitoring},'' in \emph{European Control Conference (under
  review)}, 2021.

\bibitem{chen2020multi}
J.~Chen, A.~Baskaran, Z.~Zhang, and P.~Tokekar, ``Multi-agent reinforcement
  learning for persistent monitoring,'' \emph{arXiv preprint arXiv:2011.01129},
  2020.

\bibitem{cassandras2013optimal}
C.~G. Cassandras, X.~Lin, and X.~Ding, ``{An Optimal Control Approach to the
  Multi-agent Persistent Monitoring Problem},'' \emph{IEEE Transactions on
  Automatic Control}, vol.~58, no.~4, pp. 947--961, 2013.

\bibitem{cassandras2010perturbation}
C.~G. Cassandras, Y.~Wardi, C.~G. Panayiotou, and C.~Yao, ``{Perturbation
  Analysis and Optimization of Stochastic Hybrid Systems},'' \emph{European
  Journal of Control}, vol.~16, no.~6, pp. 642--661, 2010.

\bibitem{pinto2020periodicfull}
S.~C. Pinto, S.~B. Andersson, J.~M. Hendrickx, and C.~G. Cassandras,
  ``{Multi-Agent Persistent Monitoring of Targets with Uncertain States},'' in
  \emph{Arxiv}.\hskip 1em plus 0.5em minus 0.4em\relax Available online, 2020.

\bibitem{Welikala2020P7}
S.~Welikala and C.~G. Cassandras, ``{Event-Driven Receding Horizon Control for
  Distributed Estimation in Network Systems},'' in \emph{Proc. of American
  Control Conf. (to appear)}, 2021.

\bibitem{Welikala2019P3}
------, ``{Asymptotic Analysis for Greedy Initialization of Threshold-Based
  Distributed Optimization of Persistent Monitoring on Graphs},'' in
  \emph{Proc. of 21st IFAC World Congress}, 2020.

\bibitem{Yu2016}
J.~Yu, M.~Schwager, and D.~Rus, ``{Correlated Orienteering Problem and its
  Application to Persistent Monitoring Tasks},'' \emph{IEEE Trans. on
  Robotics}, vol.~32, no.~5, pp. 1106--1118, 2016.

\bibitem{shakhatreh2019unmanned}
H.~Shakhatreh, A.~H. Sawalmeh, A.~Al-Fuqaha, Z.~Dou, E.~Almaita, I.~Khalil,
  N.~S. Othman, A.~Khreishah, and M.~Guizani, ``{Unmanned Aerial Vehicles
  (UAVs): A Survey on Civil Applications and Key Research Challenges},''
  \emph{Ieee Access}, vol.~7, pp. 48\,572--48\,634, 2019.

\bibitem{Luxburg2007}
\BIBentryALTinterwordspacing
U.~von Luxburg, ``{A Tutorial on Spectral Clustering},'' 2007. [Online].
  Available: \url{http://arxiv.org/abs/0711.0189}
\BIBentrySTDinterwordspacing

\bibitem{zhao2014optimal}
L.~Zhao, W.~Zhang, J.~Hu, A.~Abate, and C.~J. Tomlin, ``{On the Optimal
  Solutions of the Infinite-horizon Linear Sensor Scheduling Problem},''
  \emph{IEEE Transactions on Automatic Control}, vol.~59, no.~10, pp.
  2825--2830, 2014.

\bibitem{pinto2019periodicfull}
S.~C. Pinto, S.~B. Andersson, J.~M. Hendrickx, and C.~G. Cassandras, ``{Optimal
  Periodic Multi-Agent Persistent Monitoring of the Uncertain State of a Finite
  Set of Targets},'' in \emph{Arxiv}.\hskip 1em plus 0.5em minus 0.4em\relax
  Available online, 2019.

\bibitem{FB-LNS}
\BIBentryALTinterwordspacing
F.~Bullo, \emph{Lectures on Network Systems}, 1st~ed.\hskip 1em plus 0.5em
  minus 0.4em\relax Kindle Direct Publishing, 2020, with contributions by J.
  Cortes, F. Dorfler, and S. Martinez. [Online]. Available:
  \url{http://motion.me.ucsb.edu/book-lns}
\BIBentrySTDinterwordspacing

\bibitem{chu2004structure}
E.-W. Chu, H.-Y. Fan, W.-W. Lin, and C.-S. Wang, ``{Structure-preserving
  Algorithms for Periodic Discrete-time Algebraic Riccati Equations},''
  \emph{International Journal of Control}, vol.~77, no.~8, pp. 767--788, 2004.

\bibitem{kiefer1953sequential}
J.~Kiefer, ``{Sequential Minimax Search for a Maximum},'' \emph{Proceedings of
  the American Mathematical Society}, vol.~4, no.~3, pp. 502--506, 1953.

\bibitem{TANG2000267}
L.~Tang, J.~Liu, A.~Rong, and Z.~Yang, ``{A Multiple Traveling Salesman Problem
  Model for Hot Rolling Scheduling in Shanghai Baoshan Iron \& Steel
  Complex},'' \emph{European Journal of Operational Research}, vol. 124, no.~2,
  pp. 267 -- 282, 2000.

\bibitem{Hari2019}
S.~K. Hari, S.~Rathinam, S.~Darbha, K.~Kalyanam, S.~G. Manyam, and D.~Casbeer,
  ``{The Generalized Persistent Monitoring Problem},'' in \emph{Proc. of
  American Control Conf.}, vol. 2019-July, 2019, pp. 2783--2788.

\bibitem{Kirk2020}
\BIBentryALTinterwordspacing
J.~Kirk, ``{Traveling Salesman Problem - Genetic Algorithm},'' 2020. [Online].
  Available:
  \url{https://www.mathworks.com/matlabcentral/fileexchange/13680-traveling-salesman-problem-genetic-algorithm}
\BIBentrySTDinterwordspacing

\bibitem{Welikala2019Ax2}
\BIBentryALTinterwordspacing
S.~Welikala and C.~G. Cassandras, ``{Asymptotic Analysis for Greedy
  Initialization of Threshold-Based Distributed Optimization of Persistent
  Monitoring on Graphs},'' 2019. [Online]. Available:
  \url{http://arxiv.org/abs/1911.02658}
\BIBentrySTDinterwordspacing

\bibitem{ShiMalik2000}
{Jianbo Shi} and J.~Malik, ``{Normalized cuts and image segmentation},''
  \emph{IEEE Trans. on Pattern Analysis and Machine Intelligence}, vol.~22,
  no.~8, pp. 888--905, 2000.

\bibitem{bittanti2012riccati}
S.~Bittanti, A.~J. Laub, and J.~C. Willems, \emph{The Riccati Equation}.\hskip
  1em plus 0.5em minus 0.4em\relax Springer Science \& Business Media, 2012.

\bibitem{dieci1996preserving}
L.~Dieci and T.~Eirola, ``{Preserving Monotonicity in the Numerical Solution of
  Riccati Differential Equations},'' \emph{Numerische Mathematik}, vol.~74,
  no.~1, pp. 35--47, 1996.

\end{thebibliography}
\end{document}